\renewcommand{\emph}[1]{#1}
\newcommand{\com}[2]{\left(\begin{array}{c}#1 \\ #2\end{array}\right)}
\newcommand{\lb}{\left(}
\newcommand{\rb}{\right)}
\newcommand{\eps}{\epsilon}
\newcommand{\td}{\tilde}
\newcommand{\E}{\mathbb{E}}
\newcommand{\R}{\mathbb{R}}
\renewcommand{\L}{\mathcal{L}}
\renewcommand{\P}{\mathbb{P}}
\newcommand{\one}{\textbf{1}}
\newcommand{\med}{\mbox{median}}
\newcommand{\iid}{i.i.d.}
\newcommand{\CPT}{CPT}
\newcommand{\GA}{genetic algorithm}
\renewcommand{\SS}{stochastic search}
\newcommand{\ANOVA}{ANOVA}
\newcommand{\LAD}{LAD}
\DeclareMathOperator*{\diag}{\ensuremath{diag}}
\newcommand{\Ju}{Jureckova}
\newtheorem{theorem}{Theorem}
\newtheorem{proposition}{Proposition}
\newtheorem{remark}{Remark}
\begin{document}

\title{An Assumption-Free Exact Test For Fixed-Design Linear Models With Exchangeable Errors}

\author[1]{Lihua Lei}
\affil[1]{Departments of Statistics, Stanford University \thanks{lihualei@stanford.edu}}

\author[2]{Peter J. Bickel}
\affil[2]{Departments of Statistics, University of California, Berkeley \thanks{bickel@stat.berkeley.edu}}

\maketitle

\begin{abstract}
We propose the Cyclic Permutation Test (\CPT) to test general linear hypotheses for linear models. This test is non-randomized and valid in finite samples with exact Type I error $\alpha$ for an arbitrary fixed design matrix and arbitrary exchangeable errors, whenever $1 / \alpha$ is an integer and $n / p \ge 1 / \alpha - 1$. The test involves applying the marginal rank test to $1 / \alpha$ linear statistics of the outcome vector, where the coefficient vectors are determined by solving a linear system such that the joint distribution of the linear statistics is invariant with respect to a non-standard cyclic permutation group under the null hypothesis.The power can be further enhanced by solving a secondary non-linear travelling salesman problem, for
which the genetic algorithm can find a reasonably good solution. Extensive simulation studies show that the CPT has comparable power to existing tests. When testing for a single contrast of coefficients, an exact confidence interval can be obtained by inverting the test. Furthermore, we provide a selective yet extensive literature review of the century-long efforts on this problem, highlighting the novelty of our
test.
\end{abstract}

~\\
\noindent \textbf{Keywords:} assumption-free test, exact test, fixed-design, linear model, linear hypothesis, marginal rank test, non-linear travelling salesman problem

\section{Introduction}

In this article, we consider the following fixed-design linear model 
\begin{equation*}
  y_{i} = \beta_{0} + \sum_{j=1}^{p}x_{ij}\beta_{j} + \eps_{i}, \qquad i=1,\ldots, n,
\end{equation*}
where the $\eps_{i}\, (i = 1, \ldots, n)$ are stochastic errors and the $x_{ij}\, (i = 1, \ldots, n, j = 1, \ldots, p)$ are treated as fixed quantities. Throughout we will use the following compact notation 
\begin{equation}\label{eq:lm_matrix}
y = \beta_{0}\one + X\beta + \eps,
\end{equation}
where $y = (y_{1}, \ldots, y_{n})^{T}$ denotes the response vector, $X = (x_{ij})\in \R^{n\times p}$ denotes the design matrix, $\eps = (\eps_{1}, \ldots, \eps_{n})^{T}$ denotes the error terms and $\one\in\R^{n}$ denotes the vector with all entries equal to one. 
Our focus is on testing a general linear hypothesis:
\begin{equation}
  \label{eq:H0}
H_{0}: R^{T}\beta = 0, \quad \mbox{where }R\in\R^{p\times r}  \mbox{ is a fixed matrix with rank }r. 
\end{equation}
Testing linear hypotheses in linear models is ubiquitous and fundamental in numerous areas. One important example is to test whether a particular coefficient is zero, i.e. $H_{0}: \beta_{1} = 0$, a special case where $R = (1, 0, \ldots, 0)^{T}\in \R^{p\times 1}$. Another important example is to test the global null, i.e. $H_{0}: \beta = 0$, equivalent to the linear hypothesis with $R = I_{p\times p}$. We refer to Chapter 7 of \cite{lehmann2006testing} for an extensive discussion of other examples. By inverting a test with valid Type I error control, we can obtain a confidence interval/region for $R^{T}\beta$. This is of particular interest when $r = 1$, which corresponds to a single linear contrast of the regression coefficient.

Testing linear hypotheses in linear models is one of the most fundamental and long-lasting problems in statistics, as well as a convenient powerful prototype to motivate methods for more complicated statistical problems. In the past century, several types of methods have been proposed: normal theory-based tests \citep{fisher1922goodness, fisher1924036}, permutation tests \citep{pitman1937significance, pitman1938significance}, rank-based tests \citep{friedman1937use}, tests based on regression R-estimates \citep{hajek1962asymptotically}, M-estimates \citep{huber1973robust} and L-estimates \citep{bickel1973some}, resampling-based tests \citep{freedman1981bootstrapping}, median-based tests \citep{theil1950rank, brown1951median}, symmetry-based tests \citep{hartigan1970exact} and non-standard tests \citep{meinshausen2015group}. Here we list only the earliest reference we could track down for each category to highlight the chronology of the methodological development; an extensive literature review is provided in Appendix \ref{sec:epitome}.

For a given confidence level $1 - \alpha$, a test is exact if the Type I error is below or equal to $\alpha$, in finite samples without any asymptotics. Exact tests are intellectually and practically appealing because they provide strong error control without the requirement of a large sample or artificial asymptotic regimes. However, perhaps surprisingly, there is no test that is exact under reasonably general assumptions to the best of our knowledge. A brief summary of the conditions under which the existing tests are exact is as follows:
  \begin{itemize}
  \item Regression t- and F-tests are exact with normal errors;
  \item Permutation tests are exact for the global null or certain null hypotheses for certain analysis of variance (\ANOVA) problems \citep[e.g.][]{brown1982distribution};
  \item Rank-based tests are exact for \ANOVA ~problems;
  \item Tests based on regression R-, M- or L-estimates can be made exact for the global null;
  \item \cite{hartigan1970exact}'s test is exact for certain forms of balanced \ANOVA ~problems with symmetric errors and $r = 1$;
  \item \cite{meinshausen2015group}'s test is exact for rotationally invariant errors with known noise level, and if the $\eps_{i}$s are independent and identically distributed (\iid), rotation invariance implies the normality of the $\eps_{i}$ \citep{maxwell1860v};
  \item Other tests are exact either for the global null or under restrictive assumptions or require excessive computation.
  \end{itemize}

In this article we develop an exact test, which we refer to as the \emph{Cyclic Permutation Test} (\CPT), that is valid in finite samples, and can accommodate an arbitrary fixed design matrix and arbitrary error distributions, provided that the error terms are exchangeable. Exchangeability is weaker than the frequently made assumption of \iid ~random variables. Further, the test is non-randomized if $1 / \alpha$ is an integer and $n / (p - r) > 1 / \alpha - 1$. The former condition is true for all common choices of $\alpha$, e.g. $0.1, 0.05, 0.01, 0.005$. The latter requirement is also reasonable in various applications. For instance, when $\alpha = 0.05$, the condition reads $n / (p - r) > 19$, which is true if $n / p > 19$ or $p - r$ is small; both are typical in social science applications. Admittely, it may be stringent in areas like genetics where $p$ is often larger than $n$. However, valid inference, or even identification, in those problems would require extra assumptions on the sparsity of $\beta$, geometry of $X$, and distribution of $\eps$, which are not  in accordance with the goal of this paper to develop assumption-free tests. We demonstrate the power of the \CPT ~through extensive simulation studies and show it is comparable to the existing ones. Although exchangeability may not be valid in certain applications, the \CPT ~is the first procedure that is provably exact with reasonable power under such weak assumptions. We want to emphasize that the goal of this paper is not to propose a procedure that is superior to existing tests, but rather to expand the toolbox of exact inference and, hopefully, motivate the development of novel methods for other problems.


\section{Cyclic Permutation Test}\label{sec:CPT}

\subsection{Main idea}

Throughout the article we denote the set $\{1, \ldots, n\}$ by $[n]$. First we show that it is sufficient to consider the sub-hypothesis:
\begin{equation}
  \label{eq:partial_null}
  H_{0}: \beta_{1} = \ldots = \beta_{r} = 0.
\end{equation}
In fact
, let $U_{R}\in \R^{p\times r}$ be an orthonormal basis of the column span of $R$ and $V_{R}\in \R^{p\times (p - r)}$ be an orthonormal basis of the orthogonal complement. Then $\beta = U_{R}U_{R}^{T}\beta + V_{R}V_{R}^{T}\beta$. Let $\td{X} = (XU_{R} \,\vdots \, XV_{R})$, where $\vdots$ marks the partition of columns, and $\td{\beta}^{T} = (\beta^{T}U_{R}, \beta^{T}V_{R})$. Then the linear model \eqref{eq:lm_matrix} can be re-formulated as 
\begin{equation*}
  y = \beta_{0}\one + XU_{R}(U_{R}^{T}\beta) + XV_{R}(V_{R}^{T}\beta) + \eps = \beta_{0}\one + \sum_{j=1}^{r}\td{X}_{j}\td{\beta}_{j} + \sum_{j=r+1}^{p}\td{X}_{j}\td{\beta}_{j} + \eps.
\end{equation*}
On the other hand, since $R$ has full column rank, the null hypothesis \eqref{eq:H0} is equivalent to $H_{0}: \td{\beta}_{1} = \ldots = \td{\beta}_{r} = 0$, which is typically referred to as a sub-hypothesis \citep[e.g.][]{adichie1978rank}. For this reason, we will focus on \eqref{eq:partial_null} without loss of generality throughout the rest of the paper. 

Our idea is to construct a pool of linear statistics $S = (S_{0}, S_{1}, \ldots, S_{m})$ such that $S$ is distributionally invariant under the \emph{left shifting operator} $\pi_{L}$ under the null, in the sense that
\begin{equation}
  \label{eq:CPG}
  S \stackrel{d}{=} \pi_{L}(S) \stackrel{d}{=} \pi_{L}^{2}(S) \stackrel{d}{=} \cdots \stackrel{d}{=} \pi_{L}^{m}(S),
\end{equation}
\begin{equation}
  \label{eq:leftshifting}
 \mbox{where} \quad   \pi_{L}^{k}(S) = (S_{k}, S_{k+1}, \ldots, S_{m}, S_{0}, S_{1}, \ldots, S_{k-1}), \quad k = 1, 2, \ldots, m.
\end{equation}
Let $\mathrm{Id}$ denote the identity mapping. Then $\mathcal{G} = \{\mathrm{Id}, \pi_{L}, \ldots, \pi_{L}^{m}\}$ forms a group, which we refer to as the \emph{cyclic permutation group}. We say a pool of statistics $S$ is invariant under the cyclic permutation group if $S$ satisfies \eqref{eq:CPG}. The following proposition describes the main property of statistics that are invariant under the cyclic permutation group.
  \begin{proposition}\label{prop:MRT}
    Assume that $S = (S_{0}, S_{1}, \ldots, S_{m})$ is invariant under the cyclic permutation group. Let $R_{0}$ be the rank of $S_{0}$ in descending order, i.e. $R_{0} = |\{j\ge 0: S_{j} \ge S_{0}\}|$. Then 
    \begin{equation}\label{eq:validity_pvalue}
      \L(R_{0})\succeq \mathrm{Unif}([m+1]) \Longrightarrow \text{ if }p \triangleq \frac{R_{0}}{m + 1}, \,\, \L(p)\succeq \mathrm{Unif}([0, 1])
    \end{equation}
where $\L$ denotes law, $\succeq$ denotes stochastic dominance, and $\mathrm{Unif}([0, 1])$ denotes the uniform distribution on $[0, 1]$. Furthermore, $R_{0}\sim \mathrm{Unif}([m + 1])$ if $S$ has no ties with probability $1$.
\end{proposition}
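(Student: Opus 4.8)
The plan is to exploit the cyclic-permutation invariance of $S$ to show that $S_0$ is exchangeable with the "shifted" statistics in a way that makes its rank $R_0$ close to uniform. First I would observe that by \eqref{eq:CPG}, for every $k \in \{0, 1, \ldots, m\}$ we have $S \stackrel{d}{=} \pi_L^k(S)$, and in particular the marginal joint law of the coordinate pair consisting of the $0$th coordinate of $\pi_L^k(S)$ together with the whole vector $S$ has a symmetry under relabeling. The cleanest way to package this: define, for a realization of $S$, the rank vector, and note that $R_0$ evaluated on $\pi_L^k(S)$ equals the rank of $S_k$ in the list $(S_0, \ldots, S_m)$, because $\pi_L^k$ just cyclically relabels the same multiset of values. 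Hence $(R_0 \circ \pi_L^0, R_0 \circ \pi_L^1, \ldots, R_0 \circ \pi_L^m)$ is exactly the rank vector $(\rho_0, \rho_1, \ldots, \rho_m)$ of $(S_0, \ldots, S_m)$, where $\rho_j = |\{i \ge 0 : S_i \ge S_j\}|$.

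Next I would use the distributional invariance to say that $R_0 \circ \pi_L^k \stackrel{d}{=} R_0$ for every $k$, so all the $\rho_j$ have the same marginal law, namely $\L(R_0)$. The key combinatorial fact is then that $(\rho_0, \ldots, \rho_m)$ always takes values that, when there are no ties, form a permutation of $[m+1] = \{1, \ldots, m+1\}$; with ties, $\rho_j \ge (\text{true rank})$, so $\sum_j \one\{\rho_j \le t\} \le t$ deterministically for each integer $t$ (at most $t$ of the values can have strictly-or-weakly-dominated rank at most $t$ once ties inflate ranks upward). Taking expectations and using that all $\rho_j$ are identically distributed gives $(m+1)\,\P(R_0 \le t) = \sum_{j} \P(\rho_j \le t) = \E\big[\sum_j \one\{\rho_j \le t\}\big] \le t$, i.e. $\P(R_0 \le t) \le t/(m+1)$ for each $t \in [m+1]$. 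That is precisely $\L(R_0) \succeq \mathrm{Unif}([m+1])$, the hypothesis on the left side of \eqref{eq:validity_pvalue}. In the no-ties case the inequality is an equality ($\sum_j \one\{\rho_j \le t\} = t$ exactly since the $\rho_j$ are a permutation of $[m+1]$), so $\P(R_0 \le t) = t/(m+1)$ and $R_0 \sim \mathrm{Unif}([m+1])$.

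For the implication in \eqref{eq:validity_pvalue}, suppose $R_0$ stochastically dominates $\mathrm{Unif}([m+1])$ and set $p = R_0/(m+1)$. For any $u \in [0,1]$, let $t = \lfloor u(m+1) \rfloor \in \{0, 1, \ldots, m+1\}$; then $\P(p \le u) = \P(R_0 \le u(m+1)) = \P(R_0 \le t) \le t/(m+1) \le u$, using the stochastic-dominance bound on $R_0$ (and handling $t=0$ trivially since $R_0 \ge 1$). Hence $\P(p \le u) \le u$ for all $u$, which is exactly $\L(p) \succeq \mathrm{Unif}([0,1])$.

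I expect the main obstacle — really the only subtle point — to be the tie-handling bookkeeping: verifying that the descending-order definition $\rho_j = |\{i : S_i \ge S_j\}|$ makes $\sum_j \one\{\rho_j \le t\} \le t$ hold pathwise even when several coordinates are equal (equal coordinates all receive the same, inflated, rank, so a block of $\ell$ tied values that would occupy ranks $s+1, \ldots, s+\ell$ instead all get rank $s+\ell$, which only decreases the count of indices with rank $\le t$). Once that inequality is established as a deterministic fact about any real vector, the rest is just linearity of expectation together with the identical-in-distribution property supplied by \eqref{eq:CPG}, plus a one-line floor-function argument to pass from the discrete uniform bound to the continuous one.
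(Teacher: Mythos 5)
Your proof is correct and takes essentially the same route as the paper's: both use the cyclic invariance to conclude that all the ranks $\rho_0,\ldots,\rho_m$ share the law of $R_0$, then apply a deterministic counting bound on how many ranks can be small (stated in the paper in the complementary form $|\{j: R_j\ge k\}|\ge m-k+2$), take expectations to get the discrete stochastic dominance, and note exact uniformity when there are no ties. Your explicit floor-function step passing from $\P(R_0\le t)\le t/(m+1)$ to $\P(p\le u)\le u$ is a fine (and slightly more careful) spelling-out of an implication the paper leaves implicit.
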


Based on the p-value defined in \eqref{eq:validity_pvalue}, we can derive a test that rejects the null hypothesis if $p\le \alpha$. We refer to this simple test as a \emph{marginal rank test}. The following proposition shows that the marginal rank test is valid in finite samples and can be exact under mild conditions.
\begin{proposition}\label{prop:exact_MRT}
  Suppose $S = (S_{0}, S_{1}, \ldots, S_{m})$ is invariant under the cyclic permutation group under $H_{0}$ and let the p-value be defined as in \eqref{eq:validity_pvalue}. Then $\P_{H_{0}}(p\le \alpha)\le \alpha$. If $1 / \alpha$ is an integer, $m + 1$ is divisible by $1 / \alpha$, and $S$ has no ties almost surely, then $\P_{H_{0}}(p\le \alpha)= \alpha$.
\end{proposition}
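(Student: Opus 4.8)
The plan is to reduce everything to Proposition~\ref{prop:MRT} and then handle the two implications separately. First I would observe that the cyclic permutation group $\G = \{\Id, \pi_L, \ldots, \pi_L^m\}$ acts on the index set $\{0, 1, \ldots, m\}$, and the invariance \eqref{eq:CPG} says that under $H_0$ the law of $S$ is $\G$-invariant. The key deterministic fact is that $R_0$, the descending rank of $S_0$ among $S_0, \ldots, S_m$, equals (modulo ties) the position at which $\pi_L^k(S)$ places the value $S_0$ ``at the front'' and counts how many coordinates dominate it; more precisely, applying $\pi_L^k$ cyclically relabels the coordinates, so the multiset $\{$rank of the leading coordinate of $\pi_L^k(S) : k = 0, \ldots, m\}$ is exactly $\{1, 2, \ldots, m+1\}$ whenever there are no ties. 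Combined with $S \stackrel{d}{=} \pi_L^k(S)$, this forces $R_0$ to be uniform on $[m+1]$ in the tie-free case, and stochastically dominated by $\mathrm{Unif}([m+1])$ in general (ties can only push mass toward smaller ranks under the descending-rank convention, or one invokes the hypothesis $\L(R_0) \succeq \mathrm{Unif}([m+1])$ of Proposition~\ref{prop:MRT} directly). This is the step I expect to require the most care, since one must argue the stochastic-dominance direction cleanly when ties are possible --- but Proposition~\ref{prop:MRT} has already packaged exactly this, so here I would simply cite it.

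Given Proposition~\ref{prop:MRT}, the validity half $\P_{H_0}(p \le \alpha) \le \alpha$ is immediate: $p = R_0 / (m+1)$ stochastically dominates $\mathrm{Unif}([0,1])$, so $\P_{H_0}(p \le \alpha) \le \alpha$ for every $\alpha \in [0,1]$ by the definition of stochastic dominance applied at the threshold $\alpha$.

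For the exactness half, assume $1/\alpha \in \mathbb{Z}$, that $(m+1)$ is divisible by $1/\alpha$, and that $S$ has no ties almost surely. By the last sentence of Proposition~\ref{prop:MRT}, $R_0 \sim \mathrm{Unif}([m+1])$ exactly, so $p = R_0/(m+1)$ is uniform on $\{1/(m+1), 2/(m+1), \ldots, 1\}$. Then
\begin{equation*}
  \P_{H_0}(p \le \alpha) = \P_{H_0}\!\left(R_0 \le \alpha(m+1)\right) = \frac{\lfloor \alpha(m+1)\rfloor}{m+1},
\end{equation*}
and since $\alpha(m+1) = (m+1)/(1/\alpha)$ is an integer by the divisibility hypothesis, $\lfloor \alpha(m+1)\rfloor = \alpha(m+1)$, giving $\P_{H_0}(p \le \alpha) = \alpha(m+1)/(m+1) = \alpha$. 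The only subtlety is the boundary convention $p \le \alpha$ versus $p < \alpha$: because $\alpha(m+1)$ is an integer, the event $\{R_0 \le \alpha(m+1)\}$ contains exactly $\alpha(m+1)$ of the $m+1$ equally likely values, which is where divisibility is used, so the count is exact with no rounding. This completes the argument.
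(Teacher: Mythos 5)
Your proof is correct and takes essentially the same approach as the paper: both reduce to Proposition~\ref{prop:MRT}, obtaining $\P_{H_0}(p\le\alpha)\le\alpha$ from $\L(p)\succeq\mathrm{Unif}([0,1])$ and exactness from $R_0\sim\mathrm{Unif}([m+1])$ in the tie-free case combined with the divisibility hypothesis, which makes $\alpha(m+1)$ an integer so the count of favorable ranks is exact. The only (inconsequential) slip is your aside that ties push mass toward smaller ranks: with the descending-rank convention $R_0=|\{j\ge 0: S_j\ge S_0\}|$ ties inflate $R_0$, which is precisely why the stochastic dominance in Proposition~\ref{prop:MRT} holds, but since you cite that proposition rather than reprove it, your argument is unaffected.
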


In practice, the reciprocals of commonly-used confidence levels (e.g. $0.1, 0.05, 0.01, 0.005$) are integers. In these cases it is sufficient to set $m = 1 / \alpha - 1$ to obtain an exact test. 

The rank used in the marginal rank test only gives one-sided information and may not be suitable for two-sided tests. More concretely, $S_{0}$ may be significantly different from $S_{1}, \ldots, S_{m}$ under the alternative but the sign of the difference may depend on the true parameters. An intuitive remedy is to apply the marginal rank test on the following modified statistics
\begin{equation}\label{eq:tdSj}
  \td{S}_{j} = |S_{j} - \med\{(S_{j})_{j=0}^{m}\}|.
\end{equation}
Whenever $S_{0}$ is significantly different from $S_{1}, \ldots, S_{m}$, $\td{S}_{0}$ is significantly larger than $\td{S}_{1}, \ldots, \td{S}_{m}$. The following proposition guarantees the validity of the transformation \eqref{eq:tdSj}. In particular, the transformation in \eqref{eq:tdSj} satisfies the condition.
\begin{proposition}\label{prop:transform_S}
  If $S = (S_{0}, S_{1}, \ldots, S_{m})$ is invariant under the cyclic permutation group,
\[\td{S} = \{g(S_{0}; S), g(S_{1}; S), \ldots, g(S_{m}; S)\}\] is invariant under the cyclic permutation group for every $g$ such that 
\[g(z; w) = g(z; \pi_{L}w).\]
\end{proposition}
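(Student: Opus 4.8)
The plan is to notice that the map $S \mapsto \td{S}$ is simply the application of one fixed, deterministic function $G : \R^{m+1} \to \R^{m+1}$, defined coordinatewise by $G(w)_{j} = g(w_{j}; w)$ for $j = 0, 1, \ldots, m$, so that $\td{S} = G(S)$. The proposition then reduces to two facts: (i) $G$ is \emph{equivariant} with respect to the cyclic permutation group, i.e. $G \circ \pi_{L}^{k} = \pi_{L}^{k} \circ G$ for every $k$; and (ii) equality in distribution is preserved under a common measurable map. Granting (i), one applies $G$ to the chain \eqref{eq:CPG} and reads off the corresponding chain for $\td{S}$.

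First I would prove the base case $G \circ \pi_{L} = \pi_{L} \circ G$. Writing indices modulo $m+1$, the definition \eqref{eq:leftshifting} gives $(\pi_{L} w)_{j} = w_{j+1}$, hence
\[
G(\pi_{L} w)_{j} = g\big((\pi_{L} w)_{j}; \pi_{L} w\big) = g(w_{j+1}; \pi_{L} w).
\]
Now invoke the hypothesis $g(z; w) = g(z; \pi_{L} w)$ with the first argument held fixed at $z = w_{j+1}$, to obtain $g(w_{j+1}; \pi_{L} w) = g(w_{j+1}; w) = G(w)_{j+1} = \big(\pi_{L} G(w)\big)_{j}$. Thus $G(\pi_{L} w) = \pi_{L} G(w)$ for all $w$, and a one-line induction on $k$ upgrades this to $G \circ \pi_{L}^{k} = \pi_{L}^{k} \circ G$ for $k = 1, \ldots, m$.

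Finally, since $G$ is measurable and $S \stackrel{d}{=} \pi_{L}^{k}(S)$ for every $k$ by assumption, applying $G$ to both sides and using equivariance yields $\td{S} = G(S) \stackrel{d}{=} G(\pi_{L}^{k}(S)) = \pi_{L}^{k}(G(S)) = \pi_{L}^{k}(\td{S})$, which is precisely the invariance condition \eqref{eq:CPG} for $\td{S}$. To close the loop I would verify that the explicit choice in \eqref{eq:tdSj}, namely $g(z; w) = |z - \med(w)|$, meets the requirement: the median of the entries of $w$ is a symmetric function of those entries and is therefore unchanged by the cyclic shift, so $g(z; \pi_{L} w) = g(z; w)$.

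There is no deep obstacle here; the only point requiring care is the index bookkeeping in the definition of $\pi_{L}$ and the recognition that the stated condition on $g$ is exactly the minimal requirement making the coordinatewise map $G$ commute with the shift. Once equivariance is in hand, the distributional conclusion is automatic.
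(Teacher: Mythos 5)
Your proposal is correct and follows essentially the same route as the paper: the paper's proof is exactly your equivariance identity, showing $\pi_{L}(\td{S})$ equals the $g$-transform of $\pi_{L}(S)$ (using $g(z;w)=g(z;\pi_{L}w)$) and then invoking $S\stackrel{d}{=}\pi_{L}(S)$; your version merely makes the coordinatewise map $G$, the induction over powers $\pi_{L}^{k}$, and the measurable-pushforward step explicit, plus the (correct) check that the median-based choice in \eqref{eq:tdSj} satisfies the hypothesis.
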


In this article, we consider linear statistics 
\[S_{j} = y^{T}\eta_{j}, \quad j = 0, 1, \ldots, m,\]
and apply the marginal rank test on $\td{S}_{0}, \ldots, \td{S}_{m}$ defined in \eqref{eq:tdSj}. Partition $X$ into $(X_{[r]}\,\, X_{[-r]})$ and $\beta$ into $(\beta_{[r]}, \beta_{[-r]})$. The linear model \eqref{eq:lm_matrix} implies that
\begin{equation}\label{eq:etajy}
y^{T}\eta_{j} = (\one^{T}\eta_{j})\beta_{0} + (X_{[r]}^{T}\eta_{j})^{T}\beta_{[r]} + (X_{[-r]}^{T}\eta_{j})^{T}\beta_{[-r]} + \eps^{T}\eta_{j}.
\end{equation}
In the next three subsections we will show how to construct the $\eta_{j}$s to guarantee the Type I error control and to enhance power. Surprisingly, the only distributional assumption on $\eps$ is exchangeability:
\begin{enumerate}[{A}1]
\item The error vector $\eps$ has exchangeable components, i.e. for any permutation $\pi$ on $[n]$,
\[(\eps_{1}, \ldots, \eps_{n})\stackrel{d}{=}(\eps_{\pi(1)}, \ldots, \eps_{\pi(n)}).\]
\end{enumerate}

\subsection{Construction for Type I Error Control}

Under $H_{0}$, \eqref{eq:etajy} can be simplified as
\begin{equation}\label{eq:etajy_H0}
y^{T}\eta_{j} = \underbrace{(\one^{T}\eta_{j})\beta_{0} + (X_{[-r]}^{T}\eta_{j})^{T}\beta_{[-r]}}_{\text{deterministic part}} + \underbrace{\eps^{T}\eta_{j}}_{\text{stochastic part}}.
\end{equation}
To ensure the distributional invariance of $\{y^{T}\eta_{0}, \ldots, y^{T}\eta_{m}\}$ under the cyclic permutation group, it is sufficient to construct $\eta_{j}$s such that the deterministic parts are identical for all $j$ and the stochastic parts are invariant under the cyclic permutation group. To match the deterministic parts, we can simply set $X_{[-r]}^{T}\eta_{j}$ to be equal, as stated in the following condition.
\begin{enumerate}[{C}1]
\item There exists $\gamma_{[-r]}\in \R^{p-r}$ such that
\[X_{[-r]}^{T}\eta_{j} = \gamma_{[-r]},\,\,\,  (j = 0, 1, \ldots, m) .\]
\end{enumerate}
To ensure the invariance of the stochastic parts, intuitively the $\eta_{j}$s should be left shifted transforms of each other. To be concrete, consider the case where $n = 6$ and $m = 2$. Then given any $\eta^{*} = (\eta^{*}_{1}, \eta^{*}_{2}, \eta^{*}_{3}, \eta^{*}_{4}, \eta^{*}_{5}, \eta^{*}_{6})^{T}$, the following construction would imply the invariance of $\{\eps^{T}\eta_{j}: j = 0, 1, \ldots, m\}$ under the cyclic permutation group:
\[\eta_{0} = (\eta^{*}_{1}, \eta^{*}_{2}, \eta^{*}_{3}, \eta^{*}_{4}, \eta^{*}_{5}, \eta^{*}_{6})^{T}, \quad \eta_{1} = (\eta^{*}_{3}, \eta^{*}_{4}, \eta^{*}_{5}, \eta^{*}_{6}, \eta^{*}_{1}, \eta^{*}_{2})^{T}, \quad  \eta_{2} = (\eta^{*}_{5}, \eta^{*}_{6}, \eta^{*}_{1}, \eta^{*}_{2}, \eta^{*}_{3}, \eta^{*}_{4})^{T}.\]
To see this, note that
  \begin{align*}
    &(\eps^{T}\eta_{0}, \eps^{T}\eta_{1}, \eps^{T}\eta_{2})^{T}  = \lb \begin{array}{cccccc}
      \eps_{1} & \eps_{2} & \eps_{3} & \eps_{4} & \eps_{5} & \eps_{6}\\
\eps_{5} & \eps_{6} & \eps_{1} & \eps_{2} & \eps_{3} & \eps_{4}\\
     \eps_{3} & \eps_{4} & \eps_{5} & \eps_{6} & \eps_{1} & \eps_{2}
    \end{array}
\rb \eta^{*}, \\
    &(\eps^{T}\eta_{1}, \eps^{T}\eta_{2}, \eps^{T}\eta_{0})^{T}  = \lb \begin{array}{cccccc}
\eps_{5} & \eps_{6} & \eps_{1} & \eps_{2} & \eps_{3} & \eps_{4}\\
     \eps_{3} & \eps_{4} & \eps_{5} & \eps_{6} & \eps_{1} & \eps_{2}\\
      \eps_{1} & \eps_{2} & \eps_{3} & \eps_{4} & \eps_{5} & \eps_{6}
    \end{array}
\rb \eta^{*}.
  \end{align*}
By assumption A1, 
\[\lb \begin{array}{cccccc}
      \eps_{1} & \eps_{2} & \eps_{3} & \eps_{4} & \eps_{5} & \eps_{6}\\
\eps_{5} & \eps_{6} & \eps_{1} & \eps_{2} & \eps_{3} & \eps_{4}\\
     \eps_{3} & \eps_{4} & \eps_{5} & \eps_{6} & \eps_{1} & \eps_{2}
    \end{array}
\rb\stackrel{d}{=}\lb \begin{array}{cccccc}
\eps_{5} & \eps_{6} & \eps_{1} & \eps_{2} & \eps_{3} & \eps_{4}\\
     \eps_{3} & \eps_{4} & \eps_{5} & \eps_{6} & \eps_{1} & \eps_{2}\\
      \eps_{1} & \eps_{2} & \eps_{3} & \eps_{4} & \eps_{5} & \eps_{6}
    \end{array}
    \rb\]
  \[\Longrightarrow (\eps^{T}\eta_{0}, \eps^{T}\eta_{1}, \eps^{T}\eta_{2})\stackrel{d}{=}(\eps^{T}\eta_{1}, \eps^{T}\eta_{2}, \eps^{T}\eta_{0}).\]
Using the same argument we can show $(\eps^{T}\eta_{0}, \eps^{T}\eta_{1}, \eps^{T}\eta_{2})\stackrel{d}{=}(\eps^{T}\eta_{2}, \eps^{T}\eta_{0}, \eps^{T}\eta_{1})$ and thus the invariance of $(\eps^{T}\eta_{0}, \eps^{T}\eta_{1}, \eps^{T}\eta_{2})$ under the cyclic permutation group. 

In general, if $n$ is divisible by $m + 1$ with $n = (m + 1)t$, then we can construct $\eta_{j}$ as a left shifted transform of a vector $\eta^{*}$, i.e.
\begin{equation}
  \label{eq:etaj_divisible}
  \eta_{j} = \pi_{L}^{tj}(\eta^{*})
\end{equation}
where $\pi_{L}$ is the left shifting operator defined in \eqref{eq:leftshifting}. More generally, if $n = (m + 1)t + s$ for some integers $t$ and $0\le s \le m$, we can leave the last $s$ components to be the same across the $\eta_{j}$s while shifting the first $(m + 1)t$ entries as in \eqref{eq:etaj_divisible}, as stated in the following condition. 
\begin{enumerate}[C2]
\item There exists $\eta_{*}\in \R^{n}$ such that
\[\eta_{j} = \left[\pi_{L}^{tj}\{(\eta_{1}^{*}, \ldots, \eta_{(m + 1)t}^{*})\}, \eta_{(m+1)t + 1}^{*}, \ldots, \eta_{n}^{*}\right]^{T},\]
where $t = \lfloor n / (m + 1)\rfloor$.
\end{enumerate}
\begin{proposition}\label{prop:typeI}
Under assumption A1, $(y^{T}\eta_{0}, \ldots, y^{T}\eta_{m})$ is distributionally invariant under the cyclic permutation group if $(\eta_{0}, \ldots, \eta_{m})$ satisfies C1 and C2.
\end{proposition}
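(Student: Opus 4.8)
The plan is to decompose each $y^{T}\eta_{j}$ into a deterministic and a stochastic part, show the deterministic part does not depend on $j$, and then reduce the required invariance to a statement about $\eps$ alone that follows directly from exchangeability. Working under $H_{0}$ so that \eqref{eq:etajy_H0} applies, write $y^{T}\eta_{j} = \big[(\one^{T}\eta_{j})\beta_{0} + (X_{[-r]}^{T}\eta_{j})^{T}\beta_{[-r]}\big] + \eps^{T}\eta_{j}$. By C1 the term $(X_{[-r]}^{T}\eta_{j})^{T}\beta_{[-r]}$ equals $\gamma_{[-r]}^{T}\beta_{[-r]}$ for every $j$; and by C2 each $\eta_{j}$ is obtained from $\eta^{*}$ by rearranging coordinates (cyclically shifting the first $(m+1)t$ of them and fixing the rest), so $\one^{T}\eta_{j} = \one^{T}\eta^{*}$ for every $j$. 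Hence the bracketed quantity is a single constant $c$, and $y^{T}\eta_{j} = c + \eps^{T}\eta_{j}$. Since adding a vector of the form $c(1,\dots,1)\in\R^{m+1}$ commutes with every cyclic permutation $\pi_{L}^{k}$, it suffices to establish \eqref{eq:CPG} for the pool $(\eps^{T}\eta_{0},\dots,\eps^{T}\eta_{m})$.

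The heart of the argument is the following observation: for each $k\in\{1,\dots,m\}$ there is one permutation $\sigma_{k}$ of $[n]$, the same for all $j$, with
\[ \eta_{(j+k)\bmod(m+1)} \;=\; P_{\sigma_{k}}\,\eta_{j}, \qquad j = 0,1,\dots,m, \]
where $P_{\sigma_{k}}$ is the associated permutation matrix. Concretely, take $\sigma_{k}$ to act as the left cyclic shift by $tk$ positions on the first $N\triangleq(m+1)t$ coordinates and as the identity on the last $n-N$ coordinates, with $t=\lfloor n/(m+1)\rfloor$. Using C2 to express $\eta_{j}$ together with $\pi_{L}^{a}\circ\pi_{L}^{b}=\pi_{L}^{a+b}$ and, crucially, $\pi_{L}^{N}=\Id$ on length-$N$ vectors, one checks that $\pi_{L}^{\,t\,[(j+k)\bmod(m+1)]}$ and $\pi_{L}^{tk}\circ\pi_{L}^{tj}$ act identically on $(\eta_{1}^{*},\dots,\eta_{N}^{*})$ for every $j$ (the wraparound $j+k\ge m+1$ only contributes an extra $\pi_{L}^{-N}=\Id$), which is exactly the displayed identity. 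The $n=6$, $m=2$ example above is this statement with $N=6$, $t=2$, and $\sigma_{1}$ the shift by two coordinates.

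Granting the displayed identity, the conclusion drops out: $\eps^{T}\eta_{(j+k)\bmod(m+1)} = \eps^{T}P_{\sigma_{k}}\eta_{j} = (P_{\sigma_{k}}^{T}\eps)^{T}\eta_{j}$, so
\[ \pi_{L}^{k}\big(\eps^{T}\eta_{0},\dots,\eps^{T}\eta_{m}\big) \;=\; \big((P_{\sigma_{k}}^{T}\eps)^{T}\eta_{0},\dots,(P_{\sigma_{k}}^{T}\eps)^{T}\eta_{m}\big), \]
which is the original pool with $\eps$ replaced by $P_{\sigma_{k}}^{T}\eps$. Since $P_{\sigma_{k}}^{T}$ is a permutation matrix, A1 gives $P_{\sigma_{k}}^{T}\eps\stackrel{d}{=}\eps$, hence $\pi_{L}^{k}(\eps^{T}\eta_{0},\dots,\eps^{T}\eta_{m})\stackrel{d}{=}(\eps^{T}\eta_{0},\dots,\eps^{T}\eta_{m})$ for each $k=1,\dots,m$; together these are precisely \eqref{eq:CPG}. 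Adding back the constant $c$, which is fixed by each $\pi_{L}^{k}$ because it is proportional to the all-ones vector, yields the invariance of $(y^{T}\eta_{0},\dots,y^{T}\eta_{m})$.

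The proof is mostly bookkeeping, and the only genuinely delicate point — which is exactly why C2 has its particular form — is matching the two wraparounds: the pool index wraps modulo $m+1$ while the coordinate shift wraps modulo $N=(m+1)t$, and these are consistent only because $m+1$ divides $N$, so that $\pi_{L}^{N}=\Id$ turns ``outer'' cyclic shifting into a bona fide, $j$-independent permutation of $[n]$. Keeping straight the direction of the shifts (a left shift on the $\eta_{j}$'s versus the right shift it induces on $\eps$) is the only other thing to watch, but it is harmless since exchangeability is blind to direction.
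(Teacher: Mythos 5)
Your proof is correct and follows essentially the same route as the paper's: reduce to the stochastic pool $(\eps^{T}\eta_{0},\ldots,\eps^{T}\eta_{m})$, observe via C2 that the $\eta_{j}$'s are cyclic shifts of a single vector by a common permutation (the paper's $\td{\Pi}$, your $P_{\sigma_{k}}=\td{\Pi}^{k}$, with the same use of $\td{\Pi}^{m+1}=\Id$ for the wraparound), and transfer the shift onto $\eps$ where exchangeability finishes the job. The only cosmetic differences are that you treat each power $\pi_{L}^{k}$ directly rather than iterating the one-step identity, and you spell out the $\one^{T}\eta_{j}=\one^{T}\eta^{*}$ point that the paper leaves implicit.
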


Now we discuss the existence of $(\eta_{*}, \gamma_{[-r]})$. Note that $\eta_{j}$ is a linear transformation of $\eta^{*}$. Let $I_{p-r}$ denote the identity matrix of size $p-r$ and $\Pi_{j}\in \R^{n\times n}$ be the matrix such that $\eta_{j} = \Pi_{j}\eta^{*}$. Then C1 and C2 imply that
\begin{equation}
  \label{eq:validity_equation}
  \lb
  \begin{array}{ll}
    -I_{p-r} & X_{[-r]}^{T}\\
    -I_{p-r} & X_{[-r]}^{T}\Pi_{1}\\
    \vdots & \vdots\\
    -I_{p-r} & X_{[-r]}^{T}\Pi_{m}
  \end{array}
\rb \com{\gamma_{[-r]}}{\eta_{*}} = 0.
\end{equation}
The above linear system has $(m + 1)(p - r)$ equations and $n + p - r$ unknowns. Therefore, a non-zero solution always exists if $(m + 1)(p - r) < n + p - r$.
\begin{theorem}\label{thm:typeI}
Under assumption A1,
  \begin{enumerate}[(a)]
  \item The linear system \eqref{eq:validity_equation} always has a non-zero solution if 
    \begin{equation}
      \label{eq:np_cond_typeI}
      n / (p - r) > m.
    \end{equation}
  \item for any solution $(\gamma_{[-r]}, \eta^{*})$ of \eqref{eq:validity_equation}, 
\[(y^{T}\eta^{*}, y^{T}\Pi_{1}\eta^{*}, \cdots y^{T}\Pi_{m}\eta^{*})\]
is invariant under the cyclic permutation group under $H_{0}$, where $\Pi_{j}\in \R^{n\times n}$ is the coefficient matrix that maps $\eta^{*}$ to $\eta_{j}$ defined in C2.
  \end{enumerate}
\end{theorem}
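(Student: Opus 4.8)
The plan is to handle the two parts separately: part (a) is a dimension count on the homogeneous system \eqref{eq:validity_equation}, and part (b) is essentially a direct appeal to Proposition \ref{prop:typeI} once one recognizes that a solution of \eqref{eq:validity_equation} is exactly an $(\eta_0,\ldots,\eta_m)$ satisfying C1 and C2. For part (a), I would note that \eqref{eq:validity_equation} is a homogeneous linear system in the unknown pair $(\gamma_{[-r]},\eta^*)\in\R^{p-r}\times\R^{n}$, hence with $n+(p-r)$ unknowns, while its coefficient matrix has $(m+1)(p-r)$ rows. A nontrivial null vector therefore exists whenever the number of unknowns strictly exceeds the number of rows, i.e. $n+(p-r)>(m+1)(p-r)$, which rearranges exactly to $n/(p-r)>m$. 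The only extra point to verify is that any such non-zero solution genuinely has $\eta^*\neq 0$ (so the induced statistics are non-degenerate): the first block row of \eqref{eq:validity_equation} reads $X_{[-r]}^{T}\eta^*=\gamma_{[-r]}$ since $\Pi_0=\Id$, so $\eta^*=0$ would force $\gamma_{[-r]}=0$; thus the trivial solution is the only one with $\eta^*=0$, and every non-zero solution has $\eta^*\neq0$.

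For part (b), the key observation is that the vectors $\eta_j:=\Pi_j\eta^*$, $j=0,\ldots,m$, built from a solution $(\gamma_{[-r]},\eta^*)$ of \eqref{eq:validity_equation}, satisfy conditions C1 and C2. Condition C2 holds by construction: $\Pi_j$ is defined in C2 to be precisely the matrix that left-shifts the first $(m+1)t$ coordinates by $tj$ (with $t=\lfloor n/(m+1)\rfloor$) while fixing the last $n-(m+1)t$ coordinates, so $\eta_j=\Pi_j\eta^*$ has exactly the form required by C2. Condition C1 holds because the $j$-th block row of \eqref{eq:validity_equation} is the equation $-\gamma_{[-r]}+X_{[-r]}^{T}\Pi_j\eta^*=0$, i.e. $X_{[-r]}^{T}\eta_j=\gamma_{[-r]}$, which is C1 with common value $\gamma_{[-r]}$. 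Having verified C1 and C2, Proposition \ref{prop:typeI} immediately yields that $(y^{T}\eta_0,\ldots,y^{T}\eta_m)=(y^{T}\eta^*,y^{T}\Pi_1\eta^*,\ldots,y^{T}\Pi_m\eta^*)$ is distributionally invariant under the cyclic permutation group under $H_0$, which is the assertion.

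The steps are short, so there is no deep obstacle; the care required is bookkeeping — matching the block rows of \eqref{eq:validity_equation} to C1, confirming $\Pi_0=\Id$ so that $\eta_0=\eta^*$, and recalling that under C2 each $\eta_j$ is a coordinate permutation of $\eta^*$, so $\one^{T}\eta_j=\one^{T}\eta^*$ is automatically constant in $j$. This last fact is what lets Proposition \ref{prop:typeI} absorb the free intercept $\beta_0$ (and, via C1, the nuisance coefficients $\beta_{[-r]}$) into the $j$-invariant deterministic part of \eqref{eq:etajy_H0}; if one wanted to name a single point to be careful about, it is precisely checking that no nuisance parameter can reintroduce $j$-dependence into that deterministic part, and C1 together with the permutation structure of C2 covers both terms. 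Finally, I would dispatch the degenerate case $r=p$ in a sentence: then the $X_{[-r]}$ block is empty, \eqref{eq:validity_equation} is vacuous, $n/(p-r)>m$ imposes no constraint, and any $\eta^*\neq0$ works.
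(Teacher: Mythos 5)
Your proposal is correct and takes essentially the same route as the paper: part (a) by the identical count of unknowns $n+(p-r)$ versus equations $(m+1)(p-r)$, and part (b) by observing that a solution of \eqref{eq:validity_equation} yields $\eta_{j}=\Pi_{j}\eta^{*}$ satisfying C1 and C2 (so the deterministic part of \eqref{eq:etajy_H0} is $j$-invariant) and then invoking Proposition \ref{prop:typeI}. Your additional checks (that any non-zero solution has $\eta^{*}\neq 0$, and the degenerate case $r=p$) go slightly beyond the paper's proof but are harmless.
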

Suppose $\alpha = 0.05$ for illustration and set $m = 1 / \alpha - 1 = 19$. Then the condition \eqref{eq:np_cond_typeI} reads 
\[n > 19(p - r).\]
Even when $r = 1$, this is satisfied in many applications. On the other hand, when $r$ is large but $p - r$ is small, then \eqref{eq:np_cond_typeI} can still be satisfied even if $p > n$. This is in sharp contrast to regression F-tests and permutation F-tests that require fitting the full model and thus $p \le n$. Furthermore, we emphasize that Theorem \ref{thm:typeI} allows arbitrary design matrices. This is fundamentally different from the asymptotically valid tests which always impose regularity conditions on $X$.

\subsection{Construction for high power when $r = 1$}

To guarantee reasonable power, we need $y^{T}\eta_{0}$ to be significantly different from the other statistics under the alternative. In this subsection we focus on the case where $r = 1$ to highlight the key idea. The general case with $r > 1$ is discussed in Appendix \ref{app:r>1}. 

When $\beta_{1} \not = 0$, \eqref{eq:etajy} implies that 
\[y^{T}\eta_{j} = (X_{1}^{T}\eta_{j})\beta_{1} + W_{j}\]
where $W_{j} = \eps^{T}\eta_{j} + (\one^{T}\eta_{*})\beta_{0} + (X_{[-1]}^{T}\eta_{*})^{T}\beta_{[-1]}$ and $(W_{1}, \ldots, W_{m})$ is invariant under the cyclic permutation group by Theorem \ref{thm:typeI}. To enhance power, it is desirable that $X_{1}^{T}\eta_{0}$ lies far from $\{X_{1}^{T}\eta_{1}, \ldots, X_{1}^{T}\eta_{m}\}$. In particular, we impose the following condition on the $\eta_{j}$s:
\begin{enumerate}[{C}1]
\setcounter{enumi}{2}
\item there exists $\gamma_{1}, \delta\in R$, such that 
\[X_{1}^{T}\eta_{j} = \gamma_{1}\,\,\, (j = 1, 2, \ldots, m), \quad X_{1}^{T}\eta_{0} = \gamma_{1} + \delta.\]
\end{enumerate}
Putting C1, C2 and C3 together, we obtain the following linear system,
\begin{equation}
  \label{eq:power_equation}
  \bigg(-e_{1, p(m+1)}\,\,\vdots\,\, A(X)^{T}\bigg)\lb \begin{array}{l}
\delta\\
\gamma\\
\eta
\end{array}\rb = 0,
\end{equation}
where $e_{1, p(m + 1)}$ is the first canonical basis in $\R^{p(m + 1)}$ and 
\begin{equation}
  \label{eq:AX}
  A(X) = \lb
  \begin{array}{cccc}
    -I_{p}& -I_{p}& \cdots & -I_{p}\\
    X & \Pi_{1}^{T}X & \cdots & \Pi_{m}^{T}X
  \end{array}
\rb\in \R^{(n + p)\times p(m+1)}.
\end{equation}
This linear system has $(m + 1)p$ equations and $n + p + 1$ variables. Thus it always has a non-zero solution if 
\[n + p + 1 > p(m + 1)\Longleftrightarrow n \ge pm.\]
When $\alpha = 0.05$ and $m = 19$, this condition is still reasonable in many problems. 

The normalized gap $\delta / \|\eta\|$ can be regarded as a proxy of power. Write $\gamma$ for $\com{\gamma_{1}}{\gamma_{[-1]}}$. It is natural to consider the following optimization:
\begin{align}
  &\max_{\delta\in \R, \gamma\in \R^{p}, \eta\in \R^{n}, \|\eta\|_{2} = 1}\,\, \delta, \quad   \mbox{s.t. }  \bigg(-e_{1, p(m+1)}\,\,\vdots\,\, A(X)^{T}\bigg)\lb \begin{array}{l}
\delta\\
\gamma\\
\eta
\end{array}\rb = 0.\label{eq:LP}
\end{align}
This linear programming problem can be solved by fitting a linear regression and it permits a closed-form solution. Let $O^{*}(X)$ denote the optimal value of the objective function, i.e. the maximum achievable value of $\delta$ in this case. Here we use the symbol $O^{*}(X)$ instead of $\delta(X)$ to distinguish the role of the objective value and the variable $\delta$. In spite of these coinciding when $r = 1$, they are distinct when $r > 1$; see Appendix \ref{app:r>1} for details.
\begin{theorem}\label{thm:deltastar}
Assume that $n \ge pm$. Let 
\begin{equation}\label{eq:BX}
  B(X) = \lb
  \begin{array}{cccc}
    (I - \Pi_{m})^{T}X & (\Pi_{1} - \Pi_{m})^{T}X & \cdots & (\Pi_{m-1} - \Pi_{m})^{T}X
  \end{array}
\rb\in \R^{n\times mp}.
\end{equation}
Partition $B(X)$ into $[B(X)_{1} \,\, B(X)_{[-1]}]$ where $B(X)_{1}$ is the first column of $B(X)$. Further let 
\[\td{\eta} = (I - H_{[-1]})B(X)_{1}, \quad \mbox{ where }H_{[-1]} = B(X)_{[-1]}(B(X)_{[-1]}^{T}B(X)_{[-1]})^{+}B(X)_{[-1]}^{T}\] 
where $+$ denotes the Moore-Penrose generalized inverse. Then $O^{*}(X) = \|\td{\eta}\|_{2}$ and one global maximizer of \eqref{eq:LP} is given by
\[\eta^{*}(X) = \td{\eta} / \|\td{\eta}\|_{2}, \quad \delta^{*}(X) = \|\td{\eta}\|_{2}.\]
\end{theorem}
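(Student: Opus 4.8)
The plan is to eliminate the auxiliary unknowns $\delta$ and $\gamma = (\gamma_1, \gamma_{[-1]})$ from the linear program \eqref{eq:LP}, reducing it to a norm-constrained linear maximization in $\eta$ alone, and then to solve that reduced problem by an orthogonal-projection argument. \textbf{Step 1 (elimination).} Written block by block, and recalling $\Pi_0 = I$ and $\eta_j = \Pi_j\eta$, the constraint in \eqref{eq:LP} is exactly C1, C2 and C3 with $\eta^* = \eta$: namely $X_1^T\eta_j = \gamma_1$ for $j = 1,\ldots,m$, $X_1^T\eta_0 = \gamma_1 + \delta$, and $X_k^T\eta_j = \gamma_k$ for $j = 0,\ldots,m$ and $k = 2,\ldots,p$. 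Fix $\eta$. A compatible $\gamma_1$ exists if and only if $X_1^T(\Pi_i - \Pi_m)\eta = 0$ for $i = 1,\ldots,m-1$, and then necessarily $\gamma_1 = X_1^T\Pi_m\eta$; a compatible $\gamma_k$ with $k\ge 2$ exists if and only if $X_k^T(I - \Pi_m)\eta = 0$ and $X_k^T(\Pi_i - \Pi_m)\eta = 0$ for $i = 1,\ldots,m-1$. Comparing these $mp - 1$ scalar equations with the column list of $B(X)$ in \eqref{eq:BX}, they say precisely that $\eta$ is orthogonal to every column of $B(X)_{[-1]}$, i.e. $B(X)_{[-1]}^T\eta = 0$; the unique column of $B(X)$ not appearing is $B(X)_1 = (I - \Pi_m)^T X_1$, which carries the objective rather than a constraint. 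Whenever these equations hold, $\delta$ is forced: $\delta = X_1^T\eta_0 - \gamma_1 = X_1^T(I - \Pi_m)\eta = B(X)_1^T\eta$. Hence \eqref{eq:LP} is equivalent to maximizing $B(X)_1^T\eta$ over $\eta\in\R^n$ with $\|\eta\|_2 = 1$ and $B(X)_{[-1]}^T\eta = 0$, the optimal $\gamma$ and $\delta$ being recovered from the displayed formulas.

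\textbf{Step 2 (feasibility and projection).} The feasible set of the reduced problem is the unit sphere of the subspace $V = \{\eta : B(X)_{[-1]}^T\eta = 0\}$, the orthogonal complement of the column space of $B(X)_{[-1]}$. Since $B(X)_{[-1]}$ has $mp - 1$ columns, $\dim V \ge n - (mp - 1) \ge 1$ under the hypothesis $n \ge pm$, so $V \ne \{0\}$ and the problem is feasible. Let $P_V = I - H_{[-1]}$ be the orthogonal projector onto $V$, so that $\td{\eta} = P_V B(X)_1 \in V$. For any feasible $\eta$ we have $P_V\eta = \eta$, whence $B(X)_1^T\eta = (P_V B(X)_1)^T\eta = \td{\eta}^T\eta \le \|\td{\eta}\|_2\|\eta\|_2 = \|\td{\eta}\|_2$ by Cauchy--Schwarz. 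The unit vector $\eta^*(X) = \td{\eta}/\|\td{\eta}\|_2$ lies in $V$, hence is feasible, and attains this bound with value $\|\td{\eta}\|_2$. Therefore $O^*(X) = \|\td{\eta}\|_2$ and $(\delta^*(X),\eta^*(X)) = (\|\td{\eta}\|_2, \td{\eta}/\|\td{\eta}\|_2)$ is a global maximizer of \eqref{eq:LP}. (In the degenerate case $\td{\eta} = 0$ the optimal value is $0$, attained at any unit vector of $V$, which is nonempty by the dimension count.)

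\textbf{Main obstacle.} The only delicate part is the bookkeeping in Step 1: one must verify that the $m - 1$ equations governing compatibility of $\gamma_1$ together with the $m(p - 1)$ equations governing compatibility of $\gamma_2,\ldots,\gamma_p$ line up, one-to-one, with the $mp - 1$ columns of $B(X)_{[-1]}$ as ordered in \eqref{eq:BX}, and that $B(X)_1$ is the single column omitted because it encodes $\delta$. Once that identification is made, the feasibility count and the Cauchy--Schwarz/projection step are routine.
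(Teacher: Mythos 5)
Your proof is correct and takes essentially the same route as the paper: rewrite the constraint of \eqref{eq:LP} as $B(X)_{[-1]}^{T}\eta = 0$ together with $\delta = B(X)_{1}^{T}\eta$, then bound the objective via the projection $\td{\eta} = (I - H_{[-1]})B(X)_{1}$ and Cauchy--Schwarz, and note that $\td{\eta}/\|\td{\eta}\|_{2}$ is feasible and attains the bound. Your Step 1 merely spells out the column bookkeeping that the paper asserts in one line ($B(X)^{T}\eta = \delta e_{1,pm}$), and your handling of the degenerate case $\td{\eta} = 0$ is a minor addition rather than a different argument.
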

\begin{remark}
  When $B(X)_{[-1]}$ has full column rank, $\td{\eta}$ is the residual vector obtained by regressing $B(X)_{1}$ on $B(X)_{[-1]}$ and $\|\td{\eta}\|_{2}^{2}$ is the residual sum of squares. Both quantities can be easily computed using standard software. If $B(X)_{[-1]}$ does not have full column rank, then $\td{\eta}$ is the residual from minimum-norm least squares solution obtained by regressing $B(X)_{1}$ on $B(X)_{[-1]}$, which is the limit of the ridge estimator with the penalty level tending to zero and is the limiting solution of standard gradient descent initialized at zero \citep[e.g.][]{hastie2019surprises}.
\end{remark}

\subsection{Pre-ordering rows of design Matrix}

Given any $X$, we can easily calculate the proxy of signal strength $O^{*}(X)$ by Theorem \ref{thm:deltastar}. However, the optimal value is not invariant to row permutations of $X$. That is, for any permutation matrix $\Pi\in \R^{n\times n}$, $O^{*}(X) \not= O^{*}(\Pi X)$ typically. Roughly speaking, this is because $\delta^{*}(X)$ involves the left shifting operator, which depends on the arrangement of the rows of $X$. Figure \ref{fig:illustrate} illustrates the variability of $O^{*}(\Pi X)$ as a function of $\Pi$ for a fixed matrix with $8$ rows and $3$ columns, generated with \iid ~Gaussian entries.

\begin{figure}
  \centering
  \begin{subfigure}[t]{0.36\textwidth}
  \includegraphics[width = 1\textwidth]{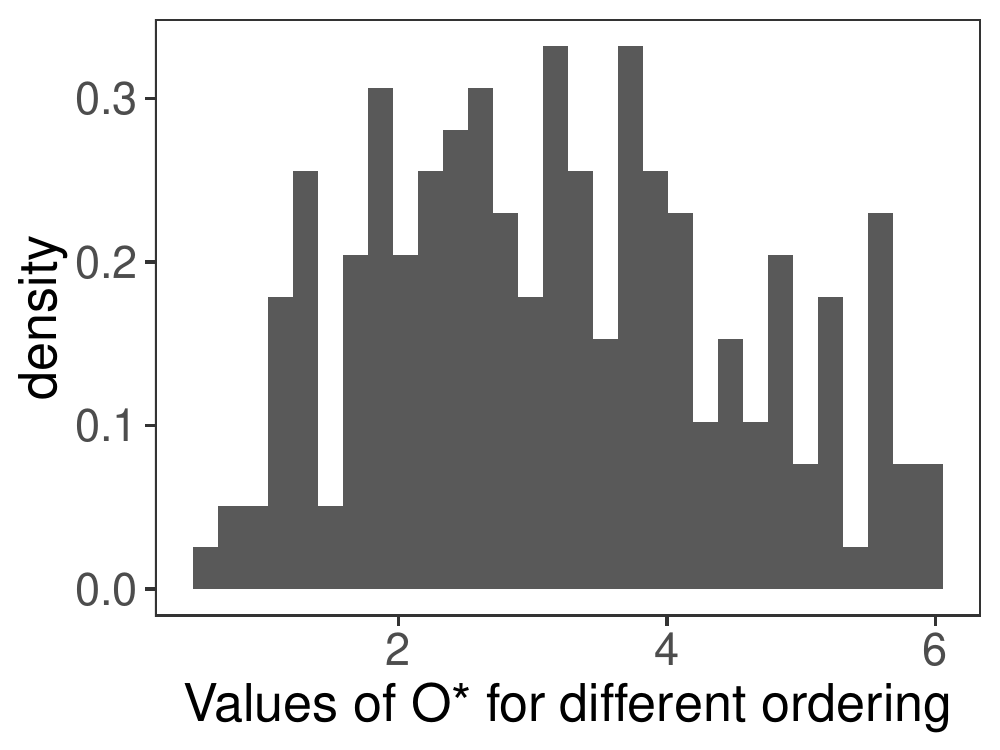}
  \caption{}\label{fig:illustrate}
  \end{subfigure}
  \begin{subfigure}[t]{0.63\textwidth}
  \includegraphics[width = 1\textwidth]{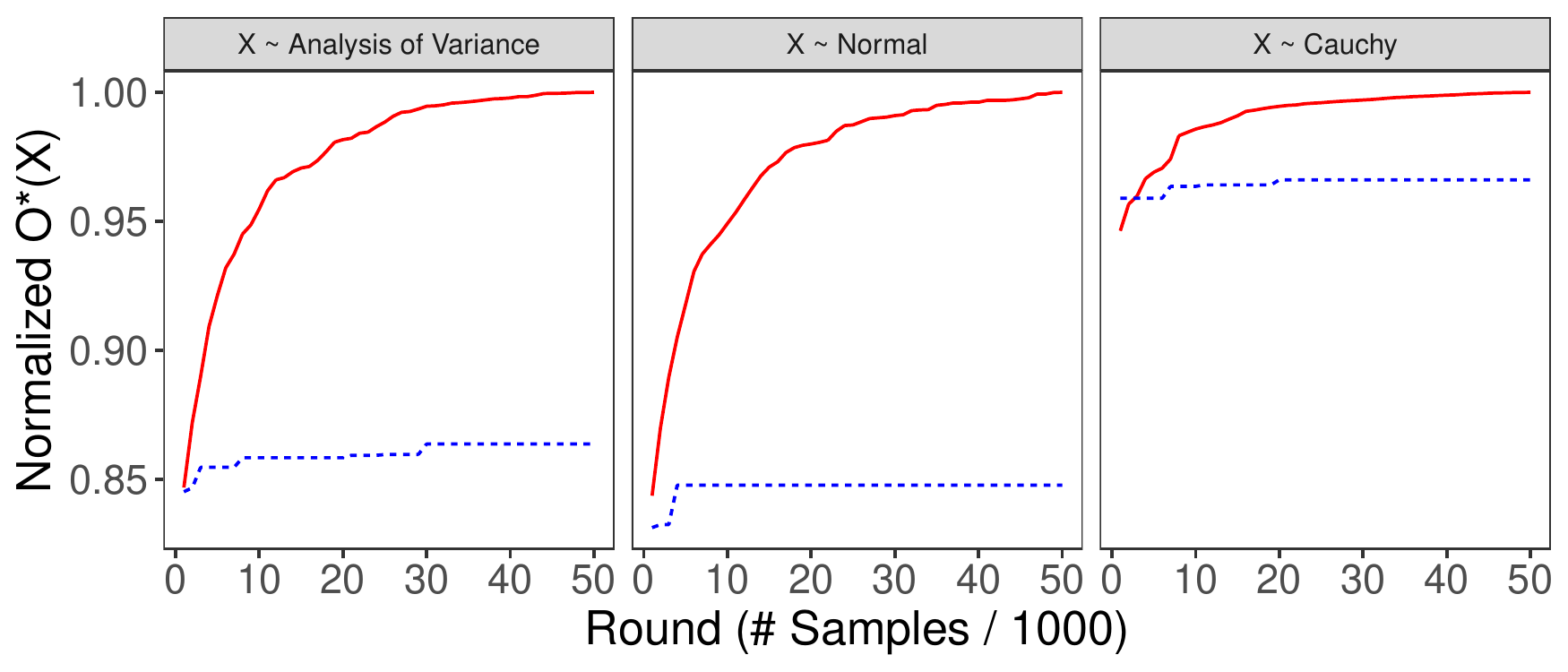}
  \caption{}\label{fig:GA_SS}
  \end{subfigure}
\caption{(a) Histograms of $O^{*}(\Pi X)$ for a realization of a random matrix with Gaussian entries; (b) Comparisons of the \GA ~(red solid line) and \SS ~(blue dotted line) for three matrices as realizations of random one-way \ANOVA ~matrices with one entry in each row at a uniformly random position (left), random matrices with standard normal entries (middle), and random matrices with standard Cauchy entries (right).}
\end{figure}
Notably, even in such regular cases variability is non-negligible. This motivates the following secondary combinatorial optimization problem:
\begin{equation}
  \label{eq:secondary_optimization}
  \max_{\Pi}O^{*}(\Pi X).
\end{equation}
This is a non-linear travelling salesman problem. Note that we aim at finding a solution with a reasonably large objective value instead of finding the global maximum of \eqref{eq:secondary_optimization}, which is NP-hard. For this reason, we solve \eqref{eq:secondary_optimization} by the \GA, which is generally efficient for moderate $n$ albeit without a worst-case convergence guarantee. In a nutshell, a \GA ~maintains a \emph{population} of permutations, generates new permutations by two operations: \emph{crossover} and \emph{mutation}, and \emph{evolves} the population via a mechanism called \emph{selection}, based on the objective value. We refer the readers to \cite{michalewicz2013genetic} for more details. 

We compare the \GA, implemented in \texttt{R} package \texttt{gaoptim}, with a simple competing algorithm that randomly selects ordering and keeps the one yielding the largest objective value. We refer to this method as \SS. Although this competitor is arguably too weak and more efficient algorithms may exist, our goal here is simply to illustrate the effectiveness of the \GA ~instead of to claim the superiority of the \GA. We compare the performance of the \GA ~and \SS ~on three matrices with $n = 1000$ and $p = 20$ as realizations generated from random one-way \ANOVA ~matrices with exactly one entry in each row at a unifromly random position, random matrices with \iid ~standard normal entries and random matrices with \iid ~standard Cauchy entries. The results are plotted in Figure \ref{fig:GA_SS} where the y-axis measures $O^{*}(\Pi X)$, scaled by the maximum achieved by the \GA ~and \SS ~for visualization, and the x-axis measures the number of random samples each algorithm accesses. The population size is set to be $10$ for the \GA ~in all scenarios. It is clear that the \GA ~consistently improves the solution while \SS ~gets trapped after a few iterations.

\subsection{Implementation of the \CPT}\label{subsec:CPT_summary}

We summarize the implementation of the \CPT ~below:
\begin{enumerate}[Step 1]
\item Compute a desirable pre-ordering $\Pi_{0}$ for the combinatorial optimization problem 
\[\max_{\Pi}O^{*}(\Pi X),\]
where $O^{*}(\cdot)$ is defined in Theorem \ref{thm:deltastar} when $r = 1$ or in Theorem \ref{thm:deltastar_r} when $r > 1$;
\item Replace $y$ and $X$ by $\Pi_{0}y$ and $\Pi_{0}X$;
\item Compute $\eta^{*}$ via Theorem \ref{thm:deltastar} when $r = 1$ or via Theorem \ref{thm:deltastar_r} when $r > 1$;
\item Compute $S_{j} = y^{T}\eta_{j}$ for $j = 0, 1, \ldots, m$ where 
\[\eta_{j}^{T} = \big[\pi_{L}^{tj}\{(\eta_{1}^{*}, \ldots, \eta_{(m + 1)t}^{*})\}, \eta_{(m+1)t + 1}^{*}, \ldots, \eta_{n}^{*}\big], \quad t = \lfloor n / (m + 1)\rfloor;\]
\item Compute $\td{S}_{j} = \big|S_{j} - \med\{(S_{j})_{j=0}^{m}\}\big|$; 
\item Compute the p-value $p = R_{0} / (m + 1)$ where $R_{0}$ is the rank of $\td{S}_{0}$ in the set $\{\td{S}_{0}, \td{S}_{1}, \ldots, \td{S}_{m}\}$ in descending order;
\item Reject the null hypothesis if $p\le \alpha$.
\end{enumerate}

The computational cost of Step 3 is the same as solving a linear regression with the sample size $n$ and dimension $pm$, as indicated by Theorem \ref{thm:deltastar}. As a result, the computational cost of Step 2 and Step 4-7 are negligible. If the computing budget is tight, a random ordering ~can be used for Step 1, for which the computational cost is negligible. Otherwise, a \GA ~can be used instead, of which the computational cost is the same as solving $M$ linear regressions of the same size as in Step 3, where $M$ is the total number of samples in the solution path. Admittedly, the latter option is computationally intensive compared to regression t- or F-tests and permutation tests -- the former involves solving a single linear regression with a smaller dimension $p$ and the latter involves solving $M$ linear regressions of the same size, where $M$ is the number of permutations. However, for moderate-sized problems, the computational time of our method is acceptable. On the other hand, if the \GA ~is replaced by a more efficient search algortihm, the computational cost can be drastically reduced. We discuss one potential algorithm in Section \ref{subsec:discussion_algorithm}.

\section{Experiments}\label{sec:experiments}

To assess the power of our procedure, we conduct extensive numerical experiments. In all the experiments we fix the sample size $n = 1000$ and consider three values $25, 33, 40$ for dimension $p$ such that the sample per parameter $n / p \approx 40, 30, 25$. Given a value of $p$, we consider the three types of design matrices considered in Figure \ref{fig:GA_SS}
. For each type of design matrices, we generate $50$ independent copies. Given each $X$, we generate 3000 copies of $\eps$ with independent entries from the standard normal distribution and standard Cauchy distribution. 

We consider two variants of the \CPT, one with random ordering and one with pre-ordering by the \GA, as well as five competing tests: (i) the t- or F-test; (ii) the permutation t- or F-test which approximates the null distribution of the t- or F-statistic by the permutation distribution with $X_{[r]}$ reshuffled; (iii) the Freedman-Lane test \citep{freedman83, anderson2001permutation, toulis2019life} which approximates the null distribution of the t- or F-statistic by the permutation distribution with reduced-form regression residuals reshuffled; (iv) the asymptotic z-test for least absolute deviation (\LAD) regression; (v) the GroupBound method \citep{meinshausen2015group}. For methods (ii) and (iii), we calculate the test based on 1000 random permutations. To further demonstrate the importance of the pre-ordering step in the \CPT, we consider a weaker pre-ordering with 1000 random samples and a stronger pre-ordering with 10000 random samples for the \GA. 
All tests will be performed with level $\alpha = 0.05$ and the number of statistics $m + 1$ is set to be $20$ for the \CPT. All programs to replicate the results in this article can be found in $\texttt{https://github.com/lihualei71/CPT}$. 

\begin{figure}
  \centering
  \includegraphics[width = 0.8\textwidth]{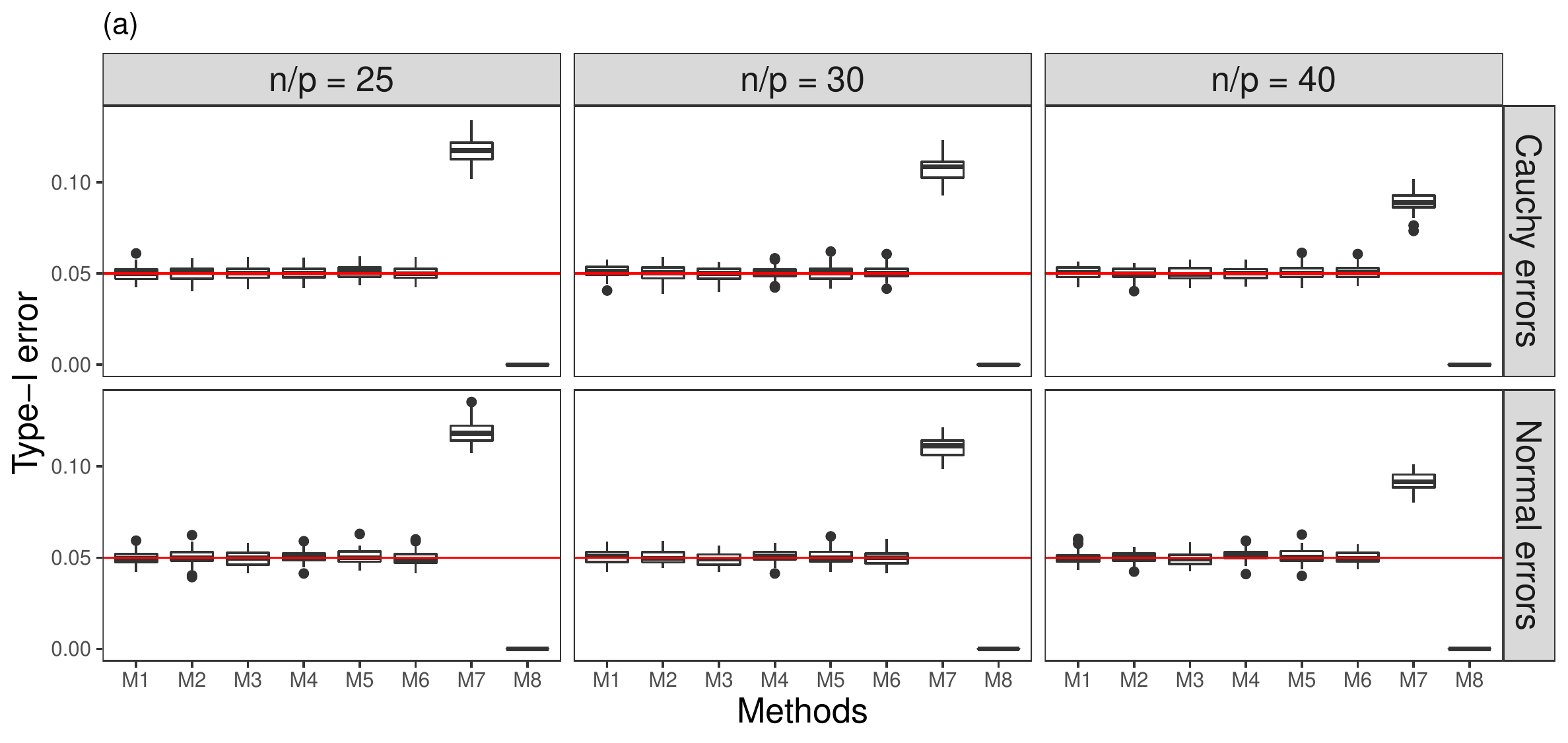}
  \includegraphics[width = 0.8\textwidth]{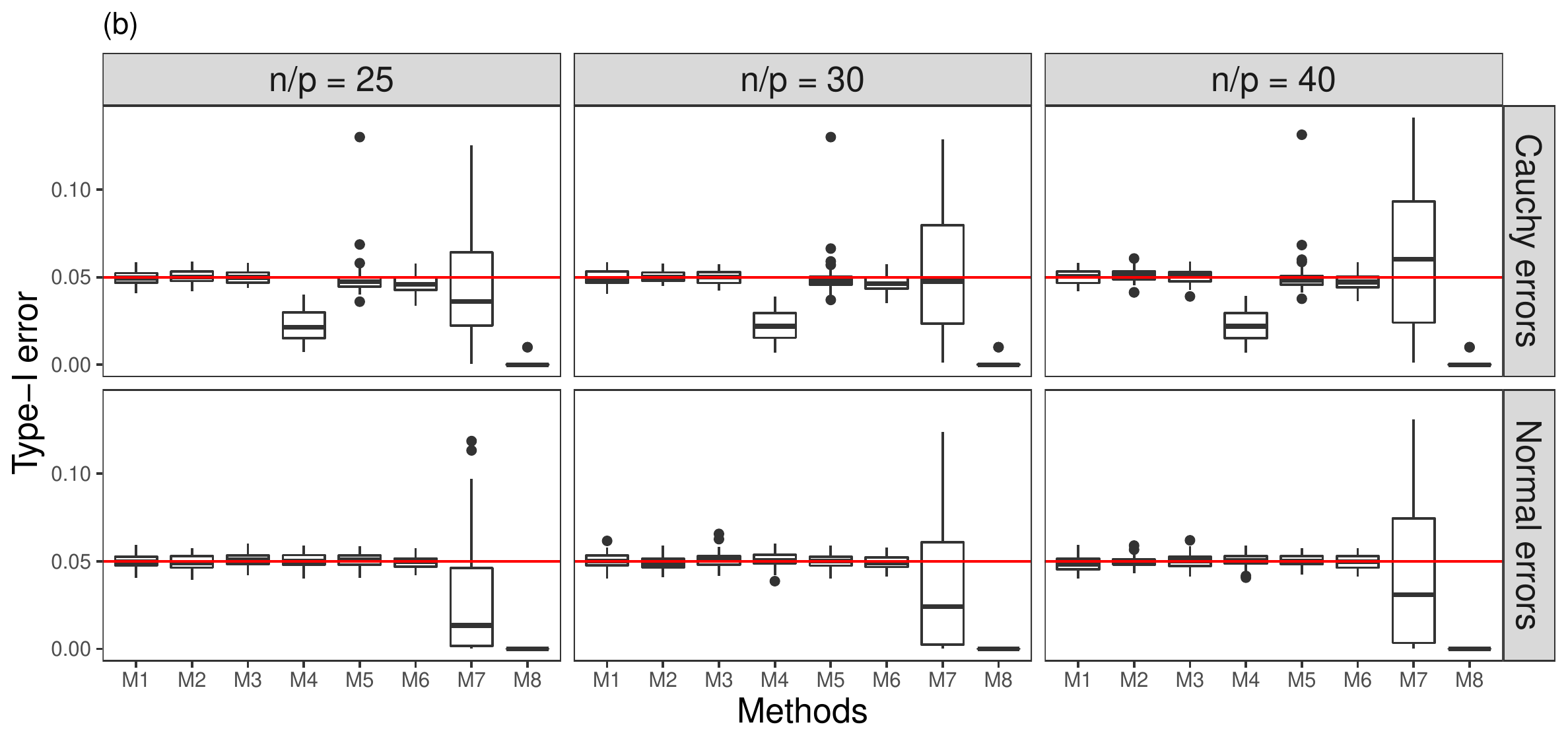}
  \includegraphics[width = 0.8\textwidth]{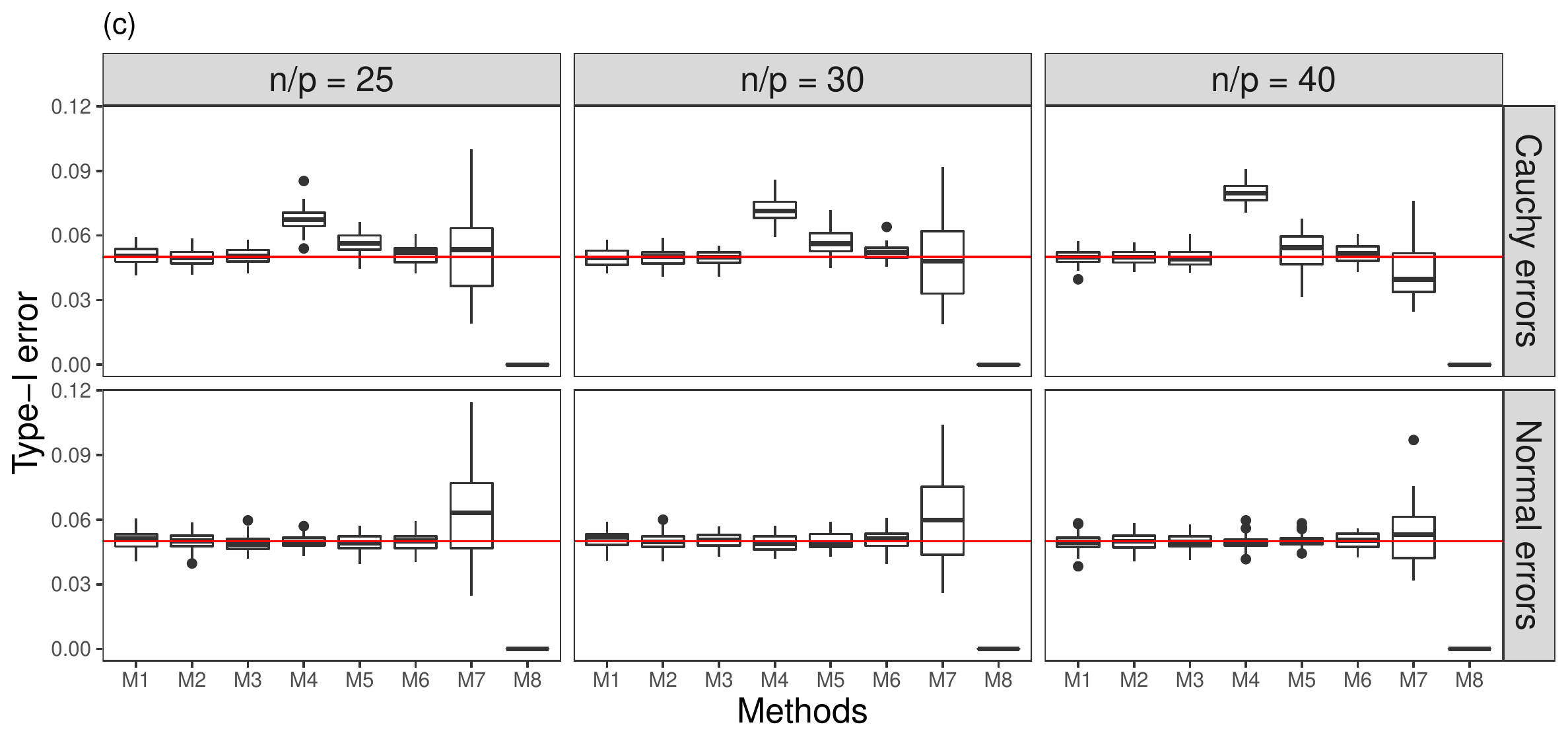}    
\caption{Monte-Carlo Type I error for testing a single coordinate with three types of $X$’s which are realizations of (a) random matrices with standard normal entries; (b) random matrices with standard Cauchy
entries; (c) random one-way \ANOVA ~design matrices. Eight methods are compared: M1, \CPT ~with
stronger ordering via the Genetic Algorithm; M2, \CPT ~with weaker ordering via the Genetic Algorithm; M3,
\CPT ~with random ordering via the Stochastic Search; M4, t- or F-test; M5, permutation test; M6, Freedman-Lane test; M7, test based on LAD; M8, GroupBound.}\label{fig:size_r1}
\end{figure}

Owing to the space constraint, here we only present the results for testing a single coordinate, i.e. $H_{0}: \beta_{1} = 0$, while leaving other results to Appendix \ref{app:experiments}. Since all tests considered here are invariant with respect to $\beta_{[-1]}$, we assume $\beta_{[-1]} = 0$ without loss of generality. Given a design matrix $X$ and an error distribution $F$, we start by computing a benchmark signal-to-noise ratio $\beta_{1}$ such that the t- or F-test has approximately 20\% power, using Monte-Carlo simulation, where $y$ is generated from
\[y = X_{1}\beta_{1} + \eps, \quad \mbox{where }\eps_{i}\sim F.\]
Then all tests are performed on $X$ and the following $18000$ outcome vectors $y_{s}^{(b)}$, respectively:
\[y_{s}^{(b)}\triangleq X_{1}(s\beta_{1}) + \eps^{(b)}, \quad \mbox{where }s = 0, 1, \ldots, 5, \,\,b = 1,\ldots, 3000.\]
For each $s$, the proportion of rejections among 3000 $\eps$'s is computed. When $s = 0$, this proportion serves as an approximation of the Type I error and should be closed to or below $\alpha$ for a valid test; when $s > 0$, it serves as an approximation of power and should be large for a powerful test. For each of the three types of design matrices, the above experiments are repeated on 50 independent copies of $X$'s. 


Figure \ref{fig:size_r1} presents the Type I error of all tests for three types of design matrices. The boxplots present the variation among 50 independent copies of design matrices. In all cases, the three variants of the \CPT ~are valid, as guaranteed by our theory, while GroupBound is overly conservative. The permutation test and Freedman-Lane test appear to be valid in our simulation settings even though there is no theoretical guarantee for heavy-tailed errors. When errors are Gaussian, the t-test is valid, as guaranteed by theory, but can be conservative or anti-conseravative with heavy-tailed errors depending on the design matrix. 
On the other hand, the test based on the \LAD ~regression is anti-conservative when $X$ is a realization of Gaussian matrices and the error distribution is Gaussian or Cauchy, although validity can be proved asymptotically under regularity conditions that are satisfied by realizations of Gaussian matrices with high probability \citep[e.g.][]{pollard1991asymptotics}. This makes a case for the fragility of some asymptotic guarantees.

\begin{figure}[H]
  \centering
  \includegraphics[width = 0.48\textwidth]{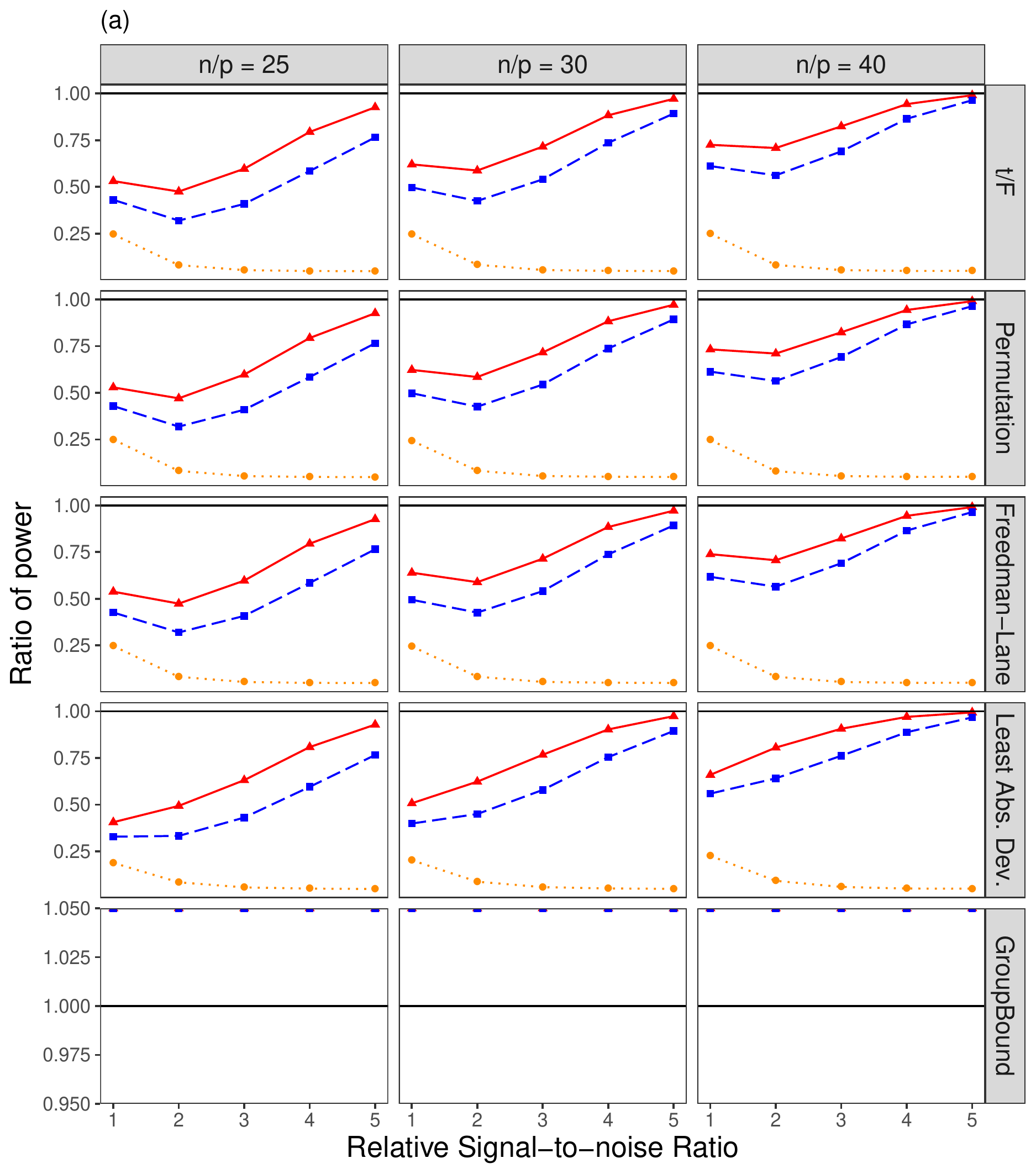}
  \includegraphics[width = 0.48\textwidth]{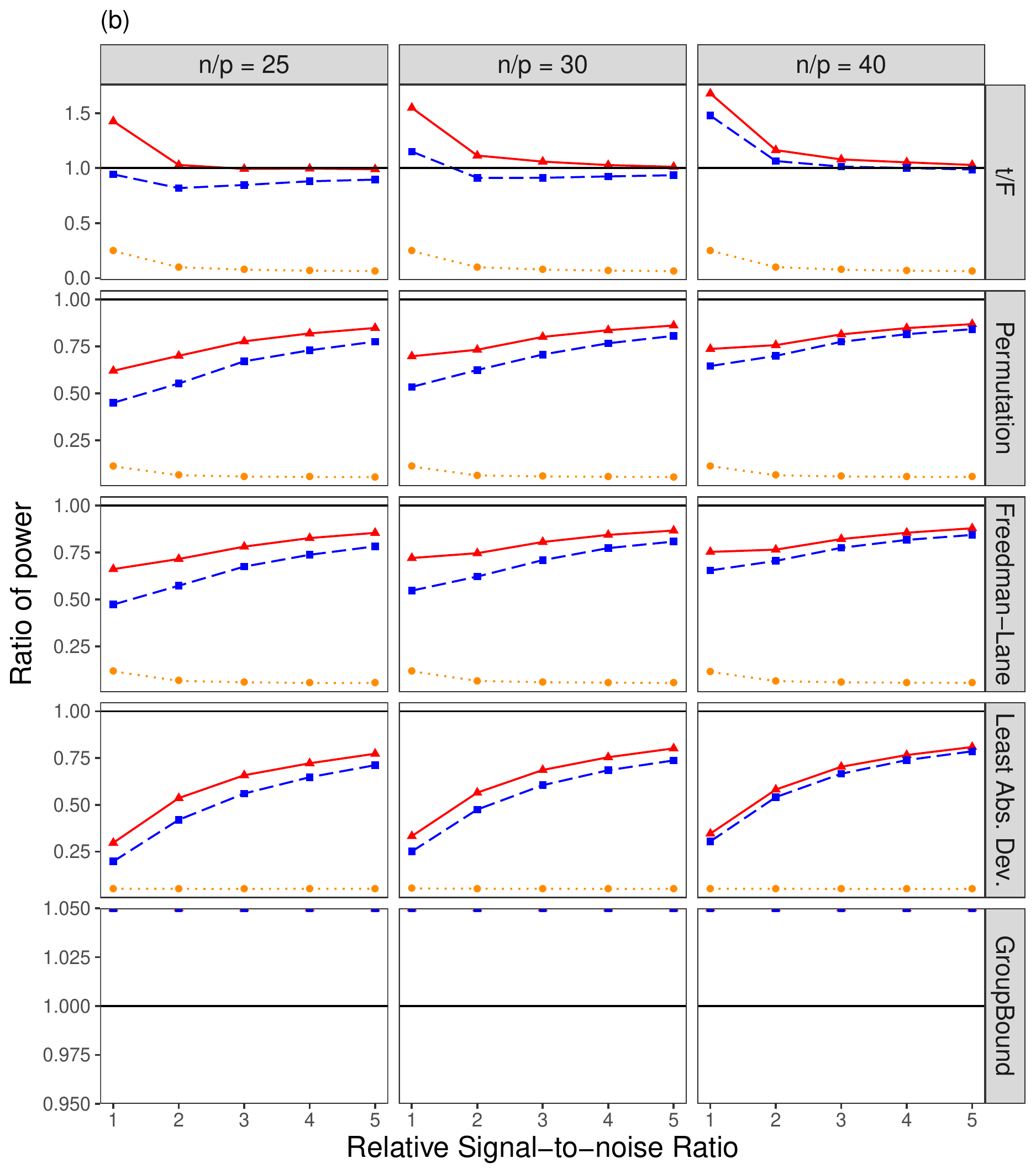}
\caption{Median power ratio between each variant of the CPT, one with stronger ordering via the Genetic
Algorithm (red solid line), one with weaker ordering via the Genetic Algorithm (blue dashed line) and one
with random ordering via Stochastic Search (orange dotted line), to each competing test displayed in each
row, for testing a single coordinate in the case with (a) realizations of Gaussian matrices and Gaussian
errors, (b) realizations of Cauchy matrices and Cauchy errors. The black solid line indicates equal power.
The missing values in the bottom row correspond to infinite ratios.}\label{fig:power_1}
\end{figure}

For power comparison, we only show results for the case where the design matrices are realizations of Gaussian (resp. Cauchy) matrices and errors are Gaussian (resp. Cauchy) in Figure \ref{fig:power_1}; the results for other cases will be presented in Appendix \ref{app:experiments}. All figures plot the median power ratio, obtained from $50$ independent copies of $X$'s, between each variant of the \CPT ~and each competing test. First we see that GroupBound has zero power in all scenarios, so the power ratios are infinite, and hence missing in the plots. Second, the pre-ordering step plays an important role in raising the power of the \CPT. Third, the relative power of the \CPT, with ordering via the \GA, improves as $n / p$ increases. Furthermore, in the Gaussian case, it is not surprising that the t-test is the most powerful one because it is provably the uniformly most powerful unbiased test for linear models with Gaussian errors. The efficiency loss of the \CPT ~against the t-test, permutation t-test and the test based on \LAD ~regression is moderate in general and is low when the sample size per parameter and the signal-to-noise ratio is large. In the Cauchy case, the \CPT ~is more powerful than the t-test.

\section{Discussion}\label{sec:discussion}


\subsection{Confidence interval/region by inverting the test}

It is straightforward to deduce a confidence region for $\beta_{[r]}$ by inverting the \CPT. Specifically, the inverted confidence region is given by $\mathcal{I} \triangleq \left\{\beta_{[r]}: p(y - X\beta; X) > \alpha\right\}$, where $p(y; X)$ is the p-value produced by the \CPT ~with a design matrix $X$ and an outcome vector $y$. Under the construction C3, 
\[(y - X\beta)^{T}\eta_{j} = y^{T}\eta_{j} - \gamma^{T}\beta - \delta^{T}\beta_{[r]} I(j = 0).\]
Thus, 
\[\med\left[(y - X\beta)^{T}\eta_{j}\}_{j=0}^{m}\right] = \med\left[\{y^{T}\eta_{j} - \delta^{T}\beta_{[r]} I(j = 0)\}_{j=0}^{m}\right] - \gamma^{T}\beta.\]
Then $\mathcal{I}$ can be simplified as 
\begin{equation}\label{eq:CI}
\mathcal{I} = \left\{\beta_{[r]}: \delta^{T}\beta_{[r]}\in [x_{\min}, x_{\max}]\right\}
\end{equation}
where $x_{\min}$ and $x_{\max}$ are the infimum and the superimum of $x$ such that
\begin{equation}
  \label{eq:inverting}
  \frac{1}{m+1}\left\{ 1 +\sum_{j=1}^{m}I\bigg(|y^{T}\eta_{0} - x - m(x)|\ge |y^{T}\eta_{j} - m(x)|\bigg)\right\} > \alpha,
\end{equation}
and $m(x) = \med\left[\{y^{T}\eta_{j} - x I(j = 0)\}_{j=0}^{m}\right]$. When $r = 1$, the confidence interval \eqref{eq:CI} gives a useful confidence interval simply as 
\[\mathcal{I} = [x_{\min} / \delta, x_{\max} / \delta],\]
where $x_{\min}$ and $x_{\max}$ are the smallest and the largest solutions of \eqref{eq:inverting}. When $r > 1$, the confidence region \eqref{eq:CI} may not be useful because it is unbounded. More precisely, $\beta_{[r]}\in \mathcal{I}$ implies that $\beta_{[r]} + \xi \in \mathcal{I}$ for any $\xi$ orthogonal to $\delta$. We leave the construction of more efficient confidence regions to future research. 

\subsection{Connection to knockoff based inference}

Our test is implicitly connected to the novel idea of knockoffs, proposed by \cite{barber2015controlling} to control the false discovery rate for variable selection in linear models. Specifically, they assume a Gaussian linear model and aim at detecting a subset of variables that control the false discovery rate in finite samples. Unlike the single hypothesis testing considered in this paper, multiple inference requires dealing with the dependence between test statistics for each hypothesis carefully. They proposed an innovative idea of constructing a pseudo design matrix $\td{X}$ such that the joint distribution of $(X_{1}^{T}y, \ldots, X_{p}^{T}y, \td{X}_{1}^{T}y, \ldots, \td{X}_{p}^{T}y)$ is invariant to the pairwise swapping of $X_{j}^{T}y$ and $\td{X}_{j}^{T}y$ all for $j$ with $\beta_{j} = 0$. Then the test statistic for testing $H_{0j}: \beta_{j} = 0$ is constructed by comparing $X_{j}^{T}y$ and $\td{X}_{j}^{T}y$ in an appropriate way, thereby obtaining a valid binary p-value $p_{j}$ that is uniformly distributed on $\{1/2, 1\}$ under $H_{0j}$. The knockoffs-induced p-values marginally resemble the construction of statistics in the \CPT ~with $m = 2, \eta_{0} = X_{j}, \eta_{1} = \td{X}_{j}$. On the other hand, the validity of knockoffs essentially rests on the distributional invariance of $\eps$ under the rotation group while the validity of the \CPT ~relies on the distributional invariance of $\eps$ under the cyclic permutation group. This coincidence illustrates the charm of group invariance in statistical inference.

\subsection{More efficient algorithm for pre-ordering}\label{subsec:discussion_algorithm}

Although a \GA ~is able to solve \eqref{eq:secondary_optimization} efficiently for moderate-sized problems, it is not scalable enough to handle big data. Since the exact minimizer is not required, we can resort to other heuristic algorithms. One heuristic strategy is proposed by \cite{fogel2013convex} by relaxing permutation matrice into doubly stochastic matrices, with $\Pi \one = \Pi^{T}\one = 0$ and $\Pi_{ij}\ge 0$, and optimizing the objective using continuous optimization algorithms. 
This may suggest an efficient gradient based algorithm. We leave this as a future direction.




\section*{Acknowledgement}
Peter J. Bickel was supported by the National Science Foundation (DMS 82978). The authors are grateful
to Peng Ding, William Fithian, editors and reviewers for their constructive feedback.

\begin{appendices}

\section{Technical Proofs}\label{sec:proofs}

\begin{proof}[Proof of Proposition \ref{prop:MRT}]
Let $R_{j}$ be the rank of $S_{j}$ in descending order as defined in \eqref{eq:validity_pvalue}. Then the invariance of $S$ implies the invariance of $(R_{0}, R_{1}, \ldots, R_{m})$. As a result, 
\[R_{0} \stackrel{d}{=} R_{1} \stackrel{d}{=} \cdots \stackrel{d}{=} R_{m}.\]
Then for any $k$, 
\[\P(R_{0} \ge k) = \frac{1}{m + 1}\sum_{j=0}^{m}\P(R_{j} \ge k) = \frac{1}{m + 1}\sum_{j=0}^{m}\E I(R_{j}\ge k) = \frac{1}{m + 1}\E\big|\{j\ge 0: R_{j}\ge k\}\big|.\]
Let $S_{(1)}\ge S_{(2)}\ge \cdots \ge S_{(m+1)}$ be the ordered statistics of $(S_{0}, \ldots, S_{m})$, which may involve ties. Then by definition, $R_{j}\ge k$ whenever $S_{j}\le S_{(k)}$ and thus,
\[\big|\{j\ge 0: R_{j}\ge k\}\big|\ge m - k + 2,\]
implying that $\mathcal{L}(R_{0})\succeq \mathrm{Unif}([m + 1])$. When there is no tie, the set $\{R_{0}, R_{1}, \ldots, R_{m}\}$ is always $\{1, 2, \ldots, m+1\}$ and thus 
\[\P(R_{0}\ge k) = \frac{m - k + 2}{m + 1}.\]
\end{proof}

\begin{proof}[Proof of Proposition \ref{prop:exact_MRT}]
  By Proposition \ref{prop:MRT}, $\L(p)\succeq \mathrm{Unif}([0, 1])$. Thus $\P_{H_{0}}(p\le \alpha)\le \alpha$. If $S$ has no ties almost suresly and $1 / \alpha = (m + 1) / b$ for some integer $b$, Proposition \ref{prop:MRT} implies that $R_{0}\sim \mathrm{Unif}([m+1])$ and thus
  \[\P(p\le \alpha) = \P\lb R_{0} \le \frac{b}{m + 1}\rb = \frac{b}{m + 1} = \alpha.\]
\end{proof}

\begin{proof}[Proof of Proposition \ref{prop:transform_S}]
  By definition,
  \begin{align*}
    \pi_{L}(\td{S}) &= \{g(S_{1}; S), \ldots, g(S_{m}; S), g(S_{0}; S)\}\\
                    &= \{g(\pi_{L}(S)_{0}; S), g(\pi_{L}(S)_{1}; S), \ldots, g(\pi_{L}(S)_{m}; S)\}\\
                    &= \{g(\pi_{L}(S)_{0}; \pi_{L}(S)), g(\pi_{L}(S)_{1}; \pi_{L}(S)), \ldots, g(\pi_{L}(S)_{m}; \pi_{L}(S))\},
  \end{align*}
  where the last line uses the invariance of $g$. The proof is completed by noting that  $S\stackrel{d}{=}\pi_{L}(S)$.
\end{proof}

\begin{proof}[Proof of Proposition \ref{prop:typeI}]
It is left to prove the invariance of $(\eps^{T}\eta_{0}, \ldots, \eps^{T}\eta_{m})$ under the cyclic permutation group. Further, since the last $n - (m + 1)t$ terms are the same for all $j$, it is left to prove the case where $n$ is divisible by $m + 1$. Let $\td{\Pi}$ be the permutation matrix corresponding to $\pi_{L}^{t}$. Then C2 implies that 
\begin{align}
\pi_{L}(\eps^{T}\eta_{0}, \eps^{T}\eta_{1}, \ldots, \eps^{T}\eta_{m}) &= (\eps^{T}\eta_{1}, \ldots, \eps^{T}\eta_{m}, \eps^{T}\eta_{0})\nonumber\\
& = (\eps^{T}\td{\Pi}\eta_{*}, \ldots, \eps^{T}\td{\Pi}^{m}\eta_{*}, \eps^{T}\eta_{*})\nonumber\\
& = (\eps^{T}\td{\Pi}\eta_{*}, \ldots, \eps^{T}\td{\Pi}^{m}\eta_{*}, \eps^{T}\td{\Pi}^{m+1}\eta_{*}) \qquad (\mbox{Since }\td{\Pi}^{m+1} = \mathrm{Id})\nonumber\\
& \stackrel{d}{=} (\eps^{T}\eta_{*}, \ldots, \eps^{T}\td{\Pi}^{m-1}\eta_{*}, \eps^{T}\td{\Pi}^{m}\eta_{*}) \qquad (\mbox{Since }\td{\Pi}\eps \stackrel{d}{=}\eps)\nonumber\\
& = (\eps^{T}\eta_{0},\eps^{T}\eta_{1}, \ldots, \eps^{T}\eta_{m}).\label{eq:invariance}
\end{align}
Repeating \eqref{eq:invariance} for $m - 1$ times, we prove the invariance of $(\eps^{T}\eta_{1}, \ldots, \eps^{T}\eta_{m})$ under the cyclic permutation group.
\end{proof}

\begin{proof}[Proof of Theorem \ref{thm:typeI}]
  When $n / (p - r) > m$, the number of variables $n + p - r$ of \eqref{eq:validity_equation} is larger than the number of equations $(m + 1)(p - r)$. Part (a) is then proved. For any solution of \eqref{eq:validity_equation}, by \eqref{eq:etajy_H0}, 
  \[y^{T}\eta_{j} = (\one^{T}\eta_{*})\beta_{0} + \gamma_{[-r]}^{T}\beta_{[-r]} + \eps^{T}\gamma_{j}.\]
  The proof is completed by noting that the deterministic parts are identical for all $j$ and the stochastic parts are invariant under the cyclic permutation group by Proposition \ref{prop:typeI}.
\end{proof}

\begin{proof}[Proof of Theorem \ref{thm:deltastar}]
First, \eqref{eq:power_equation} can be equivalently formulated as
\[B(X)^{T}\eta = \delta e_{1, pm}.\]
This can be further rewritten as
\begin{equation}
  \label{eq:power_equation2}
  \delta = B(X)_{1}^{T}\eta, \quad B(X)_{[-1]}^{T}\eta = 0.
\end{equation}
For any $\eta$ satisfying the second constraint,
\[H_{[-1]}\eta = 0,\]
and thus
\[B(X)_{1}^{T}\eta = B(X)_{1}^{T}(I - H_{[-1]})\eta = \td{\eta}^{T}\eta.\]
As a result, 
\[\max_{B(X)_{[-1]}^{T}\eta = 0, \|\eta\|_{2} = 1} B(X)_{1}^{T}\eta \le \max_{\|\eta\|_{2} = 1} \td{\eta}^{T}\eta = \|\td{\eta}\|_{2}.\]
In other words, we have shown that $\delta^{*}(X)\le \|\td{\eta}\|_{2}$. On the other hand, the vector $\td{\eta} / \|\td{\eta}\|_{2}$ satisfies the constraint \eqref{eq:power_equation2} and 
\[B(X)_{1}^{T}\td{\eta} / \|\td{\eta}\|_{2} = \|\td{\eta}\|_{2}.\]
This shows that $\delta^{*}(X) \ge \|\td{\eta}\|_{2}$. In this case, it is obvious that $O^{*}(X) = \delta^{*}(X)$. Therefore, $O^{*}(X) = \|\td{\eta}\|_{2}$ and one maximizer is $\eta^{*}(X) = \td{\eta} / \|\td{\eta}\|_{2}$.
\end{proof}

\section{1908-2018: A Selective Review of The Century-Long Effort}\label{sec:epitome}

The linear model is fundamental in the history of statistics and has been developed for over a century. Nowadays it is still a widely-used model for data analysts to demystify complex data, as well as a powerful tool for statisticians to understand complicated methods and expand the toolbox for advanced tasks. It is impossible to exhaust the literature for this long-standing problem. We thus provide a selective yet extensive review to highlight milestones in the past century. In particular, we focus on the linear hypothesis testing problem, as well as the estimation problem which can yield the former, for vanilla linear models with general covariates, and briefly discuss special cases such as location problems and \ANOVA ~problems when necessary. However, we exclude the topics such as Bayesian linear models, high dimensional sparse linear models, selective inference for linear models, linear models with dependent errors, high breakdown regression methods, linear time series, and generalized linear models. We should emphasize that these topics are at least equally important as those discussed in this section; they are excluded simply to avoid digression. 

\subsection{Normal theory-based tests}

Motivated by the seminal work by \cite{student1908probable} and \cite{student1908probable2} which propose the one-sample and two-sample t-test, Ronald A. Fisher derived the well-known t-distribution \citep{fisher1915frequency} and applied it to testing a single regression coefficient in homoskedastic Gaussian linear models \citep{fisher1922goodness}. In his 1922 paper, he also derived a test that is equivalent to the F-test for testing the global null under the same setting. Later he derived the F-distribution \citep{fisher1924036} which he characterized through ``z'', the half logarithm of F-statistics, and proposed the F-test for \ANOVA ~problems. Both tests were elaborated in his impactful book \citep{fisher1925statistical}, and the term ``F-test'' was coined by George W. Snedecor \citep{snedecor1934calculation}. 

This line of foundational work established the first generation of rigorous statistical tests for linear models. They are {\em exact tests} of linear hypotheses in linear models with \iid ~normal errors and arbitrary fixed-design matrices. Although the exactness of the tests requires no assumption on the design matrices, the normality assumption can rarely be justified in practice. Early investigations of the test validity with non-normal errors can be dated back to Egon S. Pearson \citep{pearson1929some, pearson1929distribution, pearson1931analysis}. Unlike the large-sample theory that is standard nowadays, the early works took an approximation perspective to improve the validity in small samples. It was furthered in the next few decades \citep[e.g.][]{eden1933validity, bartlett1935effect, geary1947testing, gayen1949distribution, gayen1950distribution, david1951effect, david1951method, box1953non, box1962robustness, pearson1975relation} and it was mostly agreed that the regression t-test is extremely robust to non-normal errors with a moderately large sample size (e.g. $>30$) while the regression F-test is more sensitive to the deviation from normality. It is worth emphasizing that these results were either based on mathematically unrigorous approximation or based on the Edgeworth expansion theory that could be justified rigorously \citep[e.g.][]{esseen1945fourier, wallace1958asymptotic, bhattacharya1978validity} in the asymptotic regime that the sample size tends to infinity while the dimension of the parameters stays relatively low (e.g. a small constant). 

Later on, due to the popularization of rigorous large-sample theories in 1950s \citep[e.g.][]{lecam1953some, chernoff1956large}, pioneered by \cite{doob1935limiting}, \cite{wilks1938large}, \cite{mann1943stochastic}, and \cite{wald1949note}, statisticians started to investigate the validity of regression t- and F-tests in certain asymptotic regimes. This can be dated back to Friedhelm Eicker \citep{eicker1963asymptotic, eicker1967limit}, to the best of our knowledge, and developed by Peter J. Huber in his well-known and influential paper \citep{huber1973robust}, which showed that the least squares estimate is jointly asymptotically normal if and only if the maximum leverage score tends to zero. This clean and powerful result laid the foundation to asymptotic analyses for the t- and F-test \citep[e.g.][]{arnold1980asymptotic}. Notably these early works did not assume that the dimension $p$ stays fixed, as opposed to the simplified arguments in standard textbooks. Before 1990s, the large-sample theory for least squares estimators were well established in the regime where the sample size per parameter $n / p$ grows to infinity, under regularity conditions on the design matrices and on the errors, typically with \iid ~elements and finite moments. It shows that both the t- and F-test are asymptotically valid and can be approximated by the z- and $\chi^2$-test, respectively. For the t-test, the robustness to non-normality was proved even without typical regularity conditions (e.g. \cite{zellner1976bayesian, jensen1979linear} for spherically invariant errors, \cite{efron69, cressie1980relaxing, benjamini1983t, pinelis1994extremal} for orthant symmetric errors) or beyond the aforementioned regime \citep[e.g.][]{lei2018asymptotics}. In contrast, though similar results exist for the F-test \citep[e.g.][]{zellner1976bayesian}, more non-robustness results were established. For instance, a line of work \citep[e.g.][]{boos1995anova, akritas2000asymptotics, calhoun2011hypothesis, anatolyev2012inference} showed that the F-test is asymptotically invalid, unless the errors are normal, in the moderate dimensional regime where $n / p$ stays bounded as $n$ approaches infinity, although correction is available under much stronger assumptions on the design matrix or the coefficient vectors. Even with normal errors, \cite{zhong2011tests} showed that the power of the F-test diminishes as $n / p$ approaches $1$. In sum, there have been tremendous efforts over the past century put into the robustness of the regression t- and F-test and it was agreed that the t-test is insensitive to non-normality, high dimensions and irregularity of design matrices to certain extent while the F-test is less robust in general.

\subsection{Permutation tests}

Despite tremendous attentions on the regression t- and F-test, other methodologies were developed in parallel as well. The earliest alternative is the permutation test, which justifies the significance of the test through the so-called ``permutation distribution''. However, the early attempts to justify permutation tests were based on the ``randomization model'' in contrast to the ``population model'' that we considered in \eqref{eq:lm_matrix}. The ``randomization model'' was introduced by Jerzy S. Neyman in his master thesis \citep{neyman1923application} and coined by Ronald A. Fisher in 1926 \citep{fisher1926arrangement}. It is also known as the Neyman-Rubin model \citep{rubin1974estimating}, or design-based inference (\cite{sarndal1978design}, in contrast to the model-based inference), or ``conditional-on-errors'' model (\cite{kennedy1995randomization}, in contrast to the ``conditional-on-treatment'' model). The theoretical foundation of permutation tests was laid by Edwin J. G. Pitman in his three seminal papers \citep{pitman1937significance1, pitman1937significance, pitman1938significance}, with the last two focusing on regression problems, albeit under the ``randomization model''. The early works viewed the permutation test as a better machinery in terms of the logical coherence and robustness to non-normality \citep[e.g.][]{geary1927some, eden1933validity, fisher1935logic}. They found that the permutation distribution under the ``randomization model'' mostly agree with the normality-based distribution under the ``population model'', until 1937 when Li B. Welch disproved the agreement for Latin-squares designs \citep{welch1937z}. In the next few decades, most works on permutation tests were established under the ``randomization model'' without being justified under the ``population model''. We will skip the discussion of this period and refer to \cite{berry2013} for a thorough literature review on this line of work, because our paper focuses on the ``population model'' like \eqref{eq:lm_matrix}. 

The general theory of permutation tests under the ``population model'' can be dated back to \cite{hoeffding1952large} and \cite{box1955permutation}, and was further developed by e.g. \cite{romano1989bootstrap}, \cite{romano1990behavior}, \cite{chung2013exact}. For regression problems, early studies investigated special cases in \ANOVA ~problems \citep{mehra1969class, brown1982distribution, welch1990construction}. For testing a single regression coefficient, \cite{oja1987permutation} and \cite{collins1987permutation} proposed permutation tests on a linear statistic and the F-statistic by permuting the covariates of interest. Whereas the procedure can be easily validated for univariate regressions, the validity was only justified under the ``randomization model'' when $p > 1$. \cite{manly1991randomization} proposed permuting the response vector $y$, which is valid for testing the global null $\beta = 0$ but not for general linear hypotheses. \cite{freedman83}, \cite{ter1992permutation} and \cite{kennedy1996randomization} proposed three different permutation tests on regression residuals. The theoretical guarantees of the aforementioned tests were established in a later review paper by \cite{anderson2001permutation}. The main take-away message is that the permutation test should be performed on asymptotically pivotal statistics. For instance, to test for a single coefficient, the permutation t-test is asymptotically valid. This was further confirmed and extended by \cite{diciccio2017robust} to heteroscedastic linear models with random designs.

\subsection{Rank-based tests}

Rank-based methods for linear models can be dated back to 1936, when \cite{hotelling1936rank} established the hypothesis testing theory for rank correlation, nowadays known as the Spearman's correlation. This work can be regarded as an application of rank-based methods for univariate linear models. Appealed by the normality-free nature of rank-based tests, Milton Friedman extended the idea to one-way \ANOVA ~problems \citep{friedman1937use}. It can be identified as the first application of rank-based methods for multivariate linear models and was further developed by \cite{kendall1939problem} and \cite{friedman1940comparison}. Friedman's test transforms continuous or ordinal outcomes into ranks. It was widely studied for \ANOVA ~problems, started by the famous Kruskal-Wallis test for one-way \ANOVA ~\citep{kruskal1952use} and extended to two-way \ANOVA ~problems and factorial designs \citep{hodges1962rank, puri1966class, sen1968class, conover1976some, conover1981rank, akritas1990rank, akritas1994fully, brunner1994rank, akritas1997nonparametric}. As of 90s, motivated by the advances of high dimensional asymptotic theories, further progresses have been made to refine the procedures in presence of large number of factors or treatments \citep{brownie1994type, boos1995anova, wang2004rank, bathke2005rank, bathke2008nonparametric}. 

However the aforementioned works are restricted to \ANOVA ~problems, with a few exceptions \citep[e.g.][]{sen1968estimates, sen1969class}, and fundamentally different from the modern rank tests based on regression R-estimates, themselves based on ranks of regression residuals. The first R-estimate-based test can be dated back to \cite{hajek1962asymptotically}, which derived the asymptotically most powerful rank test for univariate regressions when the error distribution is known. \cite{adichie1967asymptotic} extended the idea to testing the intercept and the regression coefficient simultaneously. It was further extended to testing the global null for multivariate regressions \citep{koul1969asymptotic}. Tests for general sub-hypotheses were first proposed by \cite{koul1970class} and \cite{puri1973note} for bivariate regressions. The general theory of testing sub-hypotheses were independently developed by \cite{srivastava1972asymptotically}, \cite{mckean1976tests} and \cite{adichie1978rank}. The underlying theory is based on the seminal work by Jana \Ju ~\citep{jureckova1969asymptotic}, as a significant generalization of \cite{hodges1963estimates} for location problems and \cite{adichie1967estimates} for univariate regressions. Her work was further extended by \cite{jureckova1971nonparametric} and \cite{van1972analogue}. However, these approaches are computationally intensive due to the discreteness of ranks. A one-step estimator was proposed by \cite{kraft1972linearized}, which is asymptotically equivalent to the maximum likelihood estimators if the error distribution is known. Another one-step rank-based estimator, motivated by \cite{bickel1975one} for M-estimators, was proposed by \cite{mckean1978robust}. On the other hand, \cite{jaeckel1972estimating} proposed a rank-based objective function, later known as the Jaeckel's dispersion function, that is convex in $\beta$ whose minimizer is asymptotically equivalent to \Ju 's score-based estimators. \cite{hettmansperger1978statistical} found an equivalent but mathematically more tractable formulation of the Jaeckel's dispersion function as the sum of pairwise differences of regression residuals. A weighted generalization of the dispersion function was introduced by \cite{sievers1983weighted}, which unifies the Jaeckel's dispersion function and Kendall's tau-based dispersion function \citep{sen1968estimates, sievers1978weighted}. Three nice survey papers were written by \cite{adichie198411}, \cite{aubuchon198412}, and \cite{draper1988rank}. In 90s, motivated by the development of quantile regressions \citep{koenker1978regression}, \cite{gutenbrunner1992regression} found an important coincidence between the dual problem of the quantile regression and the ``rank-score process'', which generalizes the notion introduced by \cite{hajek1967theory} to linear models. \cite{gutenbrunner1993tests} then developed a rank-score test for linear hypotheses; see also \cite{koenker19978} for a review. 
In the past two decades, there were much fewer works on rank-based tests for linear models \citep[e.g.][]{feng2013rank}.

\subsection{Tests based on regression M-estimates}

Regression M-estimates were introduced by Peter J. Huber in 1964 for location problems \citep{huber1964robust}. The idea was soon extended to linear models by \cite{relles68}, who proved the asymptotic theory for Huber's loss with $p$ fixed and $n$ tending to infinity. The theory was further extended to general convex loss functions by \cite{yohai72}. Despite the appealing statistical properties, the computation remained challenging in 1970s. \cite{bickel1975one} proposed one-step M-estimates that are computational tractable with the same asymptotic property as full M-estimates. In addition, he proved the uniform asymptotic linearity of M-estimates, which is a fundamental theoretical result that laid the foundation for later works. Based on \cite{bickel1975one}'s technique, \cite{jureckova1977asymptotic} established the relation between regression M- and R-estimates. The asymptotic normality of M-estimates directly yields an asymptotically valid Wald-type test for general linear hypotheses. \cite{schrader1980robust} developed an analogue of the likelihood-ratio test based on M-estimators for sub-hypotheses. It was further extended to general linear hypotheses by \cite{silvapulle1992robust}. However, both Wald-type tests and likelihood-ratio-type tests involve unknown nuisance parameters. To get rid of them, \cite{sen1982m} proposed the M-test as an analogue of the studentized score test, which is able to test general linear hypotheses with merely an estimate of regression coefficients under the null hypothesis. It is known that the Rao's score test may not be efficient in presence of nuisance parameters. \cite{singer1985m} discussed an efficient test, which is essentially the analogue of Neyman's $C(\alpha)$ test based on projected scores \citep{neyman1959optimal}, although it brings back nuisance parameters. M-tests were later investigated and generalized in a general framework based on influence functions \citep[e.g.][]{boos1992generalized, markatou19973}.

As with the regression t- and F-test, the robustness to high dimensionality was investigated extensively for M-estimators in general linear models. In Huber's 1972 Wald Lectures \citep{huber72}, he conjectured that the asymptotic normality of M-estimates proved by \cite{relles68} can be extended to the asymptotic regime where $p$ grows with $n$. The conjecture was proved one year later in the regime $\kappa p^{2} = o(1)$, where $\kappa$ is the maximum leverage score, which implies $p = o(n^{1/3})$ \citep{huber1973robust}. This was improved to $\kappa p^{3/2} = o(1)$ by \cite{yohai1979asymptotic}, which implies that $p = o(n^{2/5})$, to $p = o(n^{2/3} / \log n)$ by \cite{portnoy85} under further regularity conditions on the design matrix, and to $\kappa n^{1/3}(\log n)^{2/3} = o(1)$, which implies that $p = o(n^{2/3} / (\log n)^{2/3})$. All aforementioned results are derived for smooth loss functions. For non-smooth loss functions, \cite{welsh1989m} obtained the first asymptotic result in the regime $p = o(n^{1/3} / (\log n)^{2/3})$. It was improved to $p = o(n^{1/2})$ by \cite{bai1994limiting}. For a single coordinate, \cite{bai1994limiting} showed the asymptotic normality in the regime $p = o(n^{2/3})$. These works prove that the classical asymptotic theory holds if $p <\!\! < n^{2/3}$. However, in moderate dimensions where $p$ grows linear with $n$, the M-estimates are no longer consistent in $L_{2}$ metric. For certain random designs, the estimation error $\|\hat{\beta} - \beta\|_{2}^{2}$ converges to a non-vanishing quantity determined by $p / n$, the loss function and the error distribution through a complicated system of non-linear equations \citep{elkaroui11,bean11, elkaroui13,donoho16,elkaroui2018impact}. This surprising phenomenon marks the failure of the classical asymptotic theory for M-estimators. For least squares estimators, \cite{lei2018asymptotics} showed that the classical t-test with appropriate studentization is still asymptotically valid under regularity conditions on the design matrix. \cite{cattaneo2018inference} proposed a refined test for heteroscedastic linear models. However it is unclear how to test general linear hypotheses with general M-estimators in this regime, even for a single coordinate. \cite{lei2018asymptotics} provided the only fixed-design result for the asymptotic property of a single coordinate of general M-estimates in this regime. For the purpose of hypothesis testing, the null variance needs to be estimated but no consistent variance estimator is known at this moment, except for special random designs \citep[e.g.][]{bean11}.

\subsection{Tests based on regression L-estimates}

L-estimators constitute an important class of robust statistics based on linear combination of order statistics. Frederick Mosteller proposed the first L-estimator for Gaussian samples \citep{mosteller1946some}. This was further developed in the following two decades \citep[e.g.][]{hastings1947low, lloyd1952least, evans1955atomic, jung1956linear, tukey1960survey, bickel1965some, gastwirth1966robust}. In particular, John W. Tukey advocated the trimmed mean and Winsorized mean in his far-reaching paper \citep{tukey1962future}, which he attributed to Charles P. Winsor based on their personal communication in 1941. One year later, the well-known Hodges-Lehmann estimator was developed \citep{hodges1963estimates}, which established the first connection between R- and L-estimates. For location problems, \cite{bickel1975descriptive} found the superiority of L-estimates over M- and R-estimates. 

Despite the simplicity and the nice theoretical property of L-statistics, they are not easy to be generalized to linear models. The first attempt was made by \cite{bickel1973some}, which proposed a one-step L-estimate for general linear models. However, this estimator is not equivariant to affine transformations of the design matrix. Motivated by this paper, \cite{welsh1987one} proposed a class of one-step L-estimators that are equivariant to reparametrization of the design matrix. \cite{welsh1991asymptotically} further extended the idea to construct an adaptive L-estimator. Another line of thoughts were motivated by the pinoneering work of \cite{koenker1978regression}, which introduced the notion of regression quantiles as a natural analogue of sample quantiles for linear models. Although the quantile regression yields an M-estimator, it had been the driving force for the development of regression L-estimators since 80s. In this paper, they proposed another class of L-estimators by a discrete weighted average of regression quantiles and derived its asymptotic distribution. This idea was furthered by \cite{koenker1987estimation} to L-estimators with continuous weights, by \cite{portnoy1989adaptive} to adaptive L-estimators, and by \cite{koenker1994estimatton} to heteroscedastic linear models. Another notable strategy of contructing L-statistics is based on weighted least squares with ``outliers'' removed. \cite{ruppert1980trimmed} developed two equivariant one-step estimators as analogues of the trimmed mean. Both estimators can be formulated in the form of weighted least squares with extreme residuals removed. As with \cite{ruppert1980trimmed}, \cite{jureckova1983winsorized} proposed an analogue of the winsorized mean. The Bahadur representation of the trimmed mean least squares estimator was derived by \cite{jurevckova1984regression}. A nice review article of regression L-estimators was written by \cite{alimoradi19989}. The asymptotic results of L-estimators induce asymptotically valid Wald-type tests with consistent estimates of the asymptotic variance. Unlike M-estimators, we are not aware of other types of tests based on L-estimates.

\subsection{Resampling-based tests}

Resampling, marked by the jackknife \citep{quenouille1949problems, quenouille1956notes, tukey1958bias} and bootstrap \citep{efron1979bootstrap}, is a generic technique to assess the uncertainty of an estimator. Although both involving resampling, resampling-based tests are fundamentally different from permutation tests. The former approximates the sampling distribution under the truth while the latter approximates the sampling distribution under the null hypothesis, though they are asymptotically equivalent in many cases \citep[e.g.][]{romano1989bootstrap}. \cite{miller1974unbalanced} proposed the first jackknife-based estimator for general linear models. He showed that the estimator is asymptotically normal, the jackknife variance estimator is consistent, and thus the Wald-type test is asymptotically valid. \cite{hinkley1977jackknifing} pointed out that Miller's estimator is less efficient than the least squares estimator and proposed a weighted jackknife estimator to achieve efficiency. \cite{wu1986jackknife} proposed a general class of delete-$d$ jackknife estimators for estimating the covariance matrix of the least squares estimator. This was extended by \cite{shao1987heteroscedasticity}, \cite{shao1988resampling}, \cite{shao1989jackknifing}, \cite{peddada1992jackknife}, and \cite{liu1992efficiency}. 

On the other hand, David A. Freedman first studied the bootstrapping procedures for linear models \citep{freedman1981bootstrapping}. He studied two types of bootstrap: the residual bootstrap, where the regression residuals are resampled and added back to the fitted values, and the pair bootstrap, where the outcome and the covariates are resampled together. In the fixed-$p$ regime, he showed the consistency of the residual bootstrap under homoscedastic linear models and that of the pair bootstrap under general ``correlation models'' including heteroscedastic linear models. \cite{navidi1989edgeworth}, \cite{hall1989unusual} and \cite{qumsiyeh1994bootstrapping} established the higher order accuracy of the pair bootstrap for linear models and the results were then presented under a broader framework in the influential monograph by Peter Hall \citep{hall1992bootstrap}. \cite{wu1986jackknife} found that the residual bootstrap fails in heteroscedastic linear models because its sampling process is essentially homoscedastic. To overcome this, he introduced another type of bootstrapping method based on random rescalings of regression residuals that match the first and second moment. \cite{liu1988bootstrap} introduced a further requirement to match the third moment and improved the rate of convergence. Later \cite{mammen1993bootstrap} coined this procedure the ``wild bootstrap'' and proved the consistency for least squares estimators under random-design homoscedastic and heteroscedastic linear models. \cite{hu1995bootstrap} proposed an alternative bootstrap procedure for heteroscedastic linear models that resamples the score function instead of the residuals. A wild bootstrap analogue of the score-based bootstrap was proposed by \cite{kline2012score}. In particular, they developed the bootstrap Wald tests and the boostrap score tests for general linear hypotheses. 

The bootstrap techniques were also widely studied for regression M-estimates. The residual bootstrap was extended to M-estimators with smooth loss functions by \cite{shorack1982bootstrapping}. Unlike the least squares estimator, it requires a debiasing step to obtain distributional consistency. \cite{lahiri1992bootstrapping} proposed a weighted residual bootstrap that does not require debiasing. He additionally showed the higher order accuracy of the weighted bootstrap and Shorack's bootstrap for studentized M-estimators. However, this weighted bootstrap is hard to be implemented in general. On the other hand, motivated by Bayesian bootstrap \citep{rubin1981bayesian}, \cite{rao1992approximation} proposed a bootstrapping procedure by randomly reweighting the objective function. This idea was extended by \cite{chatterjee1999generalised} in a broader framework called ``generalized bootstrap''. It was later reinvented by \cite{jin2001simple} and referred to as ``perturbation bootstrap''. The higher order accuracy of perturbation bootstrap was established by \cite{das2019second}. It was pointed out by \cite{das2019second} that the perturbation bootstrap coincides with the wild bootstrap for least squares estimators. \cite{hu2000estimating} proposed another estimating function based bootstrap, as essentially a resampling version of \cite{sen1982m}'s M-tests. The wild bootstrap was introduced for quantile regressions by \cite{feng2011wild}.

The robustness of bootstrap methods against high dimensions was widely studied in the literature. \cite{bickel1983bootstrapping} proved the distributional consistency of the residual bootstrap for least squares estimators in the regime $p = o(n)$ for linear contrasts of $\beta$ and in the regime $p = o(n^{1/2})$ for the vector $\beta$, under fixed-design linear models with vanishing maximum leverage scores. They further showed the failure of bootstrap in moderate dimensions where $p / n\rightarrow c \in (0, 1)$ and the usual variance rescaling does not help because the bootstrap distribution is no longer asymptotically normal. For M-estimators, \cite{shorack1982bootstrapping} showed that the debiased residual bootstrap is distributionally consistent in the regime $p = o(n^{1/3})$ for linear contrasts of $\beta$. The results were extended by \cite{mammen89} to the regime $p = o(n^{2/3} / (\log n)^{2/3})$ for linear contrasts of $\beta$, and to the regime $p = o(n^{1/2})$ for the vector $\beta$. For random designs with \iid ~observations, \cite{mammen1993bootstrap} proved the distributional consistency of both the pair bootstrap and wild bootstrap for linear contrasts of $\beta$ in the regime $p = o(n^{a})$ for arbitrary $a < 1$. He also proved the consistency under heteroscedastic linear models in the regime $p = o(n^{3/4})$ for the pair bootstrap, and in the regime $p = o(n^{1/2})$ for the wild bootstrap. This was further extended by \cite{chatterjee1999generalised} to the generalized bootstrap, including the perturbation bootstrap \citep{rao1992approximation}, $m$-out-of-$n$ bootstrap \citep{bickel2008choice} and delete-d jackknife \citep{wu1990asymptotic}. On the other hand, extending \cite{bickel1983bootstrapping}'s negative result, \cite{el2018can} showed the failure of various bootstrap procedures for M-estimators in moderate dimensions, including the pair bootstrap, residual bootstrap, wild bootstrap and jackknife. 

\subsection{Other tests}

A generic strategy for hypothesis testing is through pivotal statistics. Specifically, if there exists a statistic $S$ whose distribution is fully known, then the rejection rule $S \in \mathcal{R}^{c}$ for any region $\mathcal{R}$ with $P(S \in \mathcal{R}) \ge 1 - \alpha$ yields an exact test. For linear models, it is extremely hard to find a pivotal statistic under general linear hypotheses, except for Gaussian linear models under which the t- and F-statistics are pivotal. However, if the goal is to test all coefficients plus the intercept, i.e. $H_{0}: \beta_{0} = \gamma_{0}, \beta = \gamma$, then one can recover the stochastic errors as $\eps_{i} = y_{i} - \gamma_{0} - x_{i}^{T}\gamma$ under the null and construct pivotal statistics based on $\eps$. Taking one step further, given a pivotal statistic, one can invert the above test to obtain a finite-sample valid confidence region $\mathcal{I}$ for $(\beta_{0}, \beta)$, by collecting all $(\gamma_{0}, \gamma)$'s at which the corresponding null hypothesis fails to be rejected. 
This induces a confidence region for $R^{T}\beta$ as $\mathcal{I}' = \{R^{T}\beta: (\beta_{0}, \beta)\in \mathcal{I}\}$. Finally, using the duality between the confidence interval and hypothesis testing again, the test which rejects the null hypothesis is valid for the linear hypothesis $H_{0}: R^{T}\beta = 0$ in finite samples. If $r <\!\!< p$, this seemingly ``omnibus test'' is in general conservative and inferior to the tests discussed in previous subsections. Nonetheless, it stimulates several non-standard but interesting tests that are worth discussions.

The most popular strategy to construct pivotal statistics is based on quantiles of $\eps_{i}$s, especially the median. Assuming $\eps_{i}$s have zero median, \cite{fisher1925statistical} first introduced the sign test for location problems, which was investigated and formalized later by \cite{cochran1937efficiencies}. Thirteen years later, Henri Theil proposed an estimator for univariate linear models \citep{theil1950rank, theil1950rank2, theil1950rank3}, later known as the Theil-Sen estimator \citep{sen1968estimates}. \cite{brown1951median} proposed a median test for general linear models by reducing the problem into a contingency table and applying the $\chi^{2}$-tests. The theoretical property of the Brown-Mood test was studied by \cite{kildea1981brown} and \cite{johnstone1985resistant}. \cite{daniels1954distribution} proposed a geometry-based test for univariate linear models, which can be regarded as a generalization of the Brown-Mood test. It was later connected to the notion of regression depth \citep{rousseeuw1999regression} and applied in deepest regression methods \citep{van2002deepest}. The idea of inverting the sign test was exploited in \cite{quade1979regression}. An analogue incorporating Kendall's tau between the residuals and covariates was proposed by \cite{lancaster1985nonparametric}. The idea also attracted some attention in signal processing \citep[e.g.][]{campi2005guaranteed, campi2009non} and econometrics \citep[e.g.][]{chernozhukov2009finite}. It should be noted that the approach is computationally infeasible even in low dimensions. Assuming further the symmetry of $\eps_{i}$s, \cite{hartigan1970exact} proposed a non-standard test based on an interesting notion of typical values. It was designed for location problems but can be applied to certain \ANOVA ~problems. Furthermore, \cite{siegel1982robust} proposed the repeated median estimator and \cite{rousseeuw1984least} proposed the least median squares estimators to achieve a high breakdown point.

The pivotal statistics can also be constructed in other ways. \cite{parzen1994resampling} proposed a bootstrap procedure based on inverting a pivotal estimating function at a random point. This procedure mimics the Fisher's fiducial inference but can be justified under the frequentist framework. Recently \cite{meinshausen2015group} proposed the GroupBound test for sub-hypotheses, which even works for high-dimensional settings where $p >\!\!> n$. However, the validity is only guaranteed for rotationally invariant errors with a known noise level. This assumption is extremely strong as shown by \cite{maxwell1860v}: a rotationally invariant random vector with \iid ~coordinates must be multivariate Gaussian. 

\section{Construction of $\eta$'s When $r > 1$}\label{app:r>1}

Similar to C3, we impose the following restriction on $\eta$.
\begin{enumerate}[{C}1']
\setcounter{enumi}{2}
\item there exists $\gamma_{[r]}, \delta\in \R^{r}$, such that 
\[X_{[r]}^{T}\eta_{j} = \gamma_{[r]}\,\,\, (j = 1, 2, \ldots, m), \quad X_{[r]}^{T}\eta_{0} = \gamma_{[r]} + \delta.\]
\end{enumerate}
Combining with \eqref{eq:validity_equation}, we obtain an analogue of \eqref{eq:power_equation} as follows.
\begin{equation}
  \label{eq:power_equation_r}
  \bigg(-e_{1, p(m+1)}, \ldots, -e_{r, p(m+1)}\,\,\vdots\,\, A(X)^{T}\bigg)\lb \begin{array}{l}
\delta\\
\gamma\\
\eta
\end{array}\rb = 0,
\end{equation}
where $A(X)$ is defined in \eqref{eq:AX} and $\gamma = \com{\gamma_{[r]}}{\gamma_{[-r]}}$. This linear system involves $p(m + 1)$ equations and $n + p + r$ variables. Therefore it always has a non-zero solution if 
\[n + p + r > p(m + 1) \Longleftrightarrow n \ge pm - r + 1.\]

Unlike the univariate case, there are infinite ways to characterize the signal strength since $\delta$ is multivariate. A sensible class of criteria is to maximize a quadratic form
\begin{equation}
  \label{eq:QP}
  \max_{\delta\in \R^{r}, \gamma\in \R^{p}, \eta\in \R^{n}, \|\eta\|_{2} = 1}\,\, \delta^{T}M\delta \quad \mbox{s.t. }   \bigg(-e_{1, p(m+1)}, \ldots, -e_{r, p(m+1)}\,\,\vdots\,\, A(X)^{T}\bigg)\lb \begin{array}{l}
\delta\\
\gamma\\
\eta
\end{array}\rb = 0.
\end{equation}
The following theorem gives the optimal solution given any weighting matrix $M$. Let $O^{*}(X)$ denote the optimal value of the objective function. 
\begin{theorem}\label{thm:deltastar_r}
Assume that $n \ge pm - r + 1$. Let $B(X)$ be defined in \eqref{eq:BX} in the main text. Partition $B(X)$ into $(B(X)_{[r]} \,\, B(X)_{[-r]})$ where $B(X)_{[r]}$ is the matrix formed by the first $r$ columns of $B(X)$. Let 
\[M_{r}(X) = (I - H_{[-r]})B(X)_{[r]}MB(X)_{[r]}^{T}(I - H_{[-r]}),\]
where
\[H_{[-r]} = B(X)_{[-r]}(B(X)_{[-r]}^{T}B(X)_{[-r]})^{+}B(X)_{[-r]}^{T}\] 
Further let $\lambda_{\max}(M_{r}(X))$ denote the maximum eigenvalue, $u$ denote any eigenvector corresponding to it and $\td{\eta} = (I - H_{[-r]})u$. Then $O^{*}(X) = \lambda_{\max}(M_{r}(X))$ and 
\[\eta^{*}(X) = \td{\eta} / \|\td{\eta}\|_{2}, \quad \delta^{*}(X) = B(X)_{[r]}^{T}\eta^{*}(X)\] 
is an optimal solution of \eqref{eq:QP}.
\end{theorem}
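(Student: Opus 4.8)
The plan is to generalize the argument used for Theorem~\ref{thm:deltastar} to the multivariate $\delta$ setting. First I would rewrite the linear constraint in \eqref{eq:QP} in the same reduced form as in the proof of Theorem~\ref{thm:deltastar}: the system \eqref{eq:power_equation_r} is equivalent to $B(X)^{T}\eta = B(X)_{[r]}^{T}\eta$ feeding the first $r$ coordinates and the remaining block being zero, i.e.
\[
\delta = B(X)_{[r]}^{T}\eta, \qquad B(X)_{[-r]}^{T}\eta = 0.
\]
So once the constraint $B(X)_{[-r]}^{T}\eta = 0$ holds, $\delta$ is determined by $\eta$ and the only free choices are unit vectors $\eta$ in the null space $\mathcal{N} = \{\eta : B(X)_{[-r]}^{T}\eta = 0\}$. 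Note $\mathcal{N}$ is exactly the range of $I - H_{[-r]}$, since $H_{[-r]}$ is the orthogonal projector onto the column span of $B(X)_{[-r]}$ (this uses the Moore--Penrose identity $B_{[-r]}^{T}(I - H_{[-r]}) = 0$, valid whether or not $B(X)_{[-r]}$ has full column rank, exactly as in the remark following Theorem~\ref{thm:deltastar}).

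Next I would substitute into the objective. For $\eta \in \mathcal{N}$ we have $\eta = (I - H_{[-r]})\eta$, hence
\[
\delta^{T}M\delta = \eta^{T} B(X)_{[r]} M B(X)_{[r]}^{T} \eta = \eta^{T}(I - H_{[-r]}) B(X)_{[r]} M B(X)_{[r]}^{T}(I - H_{[-r]})\eta = \eta^{T} M_{r}(X)\,\eta,
\]
using idempotence and symmetry of $I - H_{[-r]}$. Therefore
\[
O^{*}(X) = \max_{\eta \in \mathcal{N},\ \|\eta\|_{2}=1} \eta^{T} M_{r}(X)\,\eta.
\]
Now the key observation is that $M_{r}(X)$ already lives on $\mathcal{N}$: its range is contained in the column span of $(I - H_{[-r]})B(X)_{[r]}$, which is a subspace of $\mathcal{N} = \mathrm{range}(I - H_{[-r]})$, and likewise its null space contains $\mathcal{N}^{\perp}$. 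Consequently the constrained Rayleigh quotient over $\mathcal{N}$ equals the unconstrained one over all of $\R^{n}$, whose maximum is $\lambda_{\max}(M_{r}(X))$, attained at a corresponding eigenvector. I would then verify the proposed maximizer: take $u$ an eigenvector for $\lambda_{\max}(M_{r}(X))$; decompose $u = u_{\parallel} + u_{\perp}$ with $u_{\parallel} \in \mathcal{N}$, $u_{\perp} \in \mathcal{N}^{\perp}$. Since $M_{r}(X)u = \lambda_{\max} u$ and the left side lies in $\mathcal{N}$, we get $u_{\perp} = 0$ provided $\lambda_{\max} > 0$, so $\td\eta = (I - H_{[-r]})u = u$ up to scaling; checking $B(X)_{[-r]}^{T}\td\eta = 0$ and $\td\eta^{T}M_{r}(X)\td\eta / \|\td\eta\|_{2}^{2} = \lambda_{\max}(M_{r}(X))$ then confirms that $\eta^{*}(X) = \td\eta/\|\td\eta\|_{2}$ with $\delta^{*}(X) = B(X)_{[r]}^{T}\eta^{*}(X)$ is optimal. (The degenerate case $\lambda_{\max}(M_{r}(X)) = 0$ is trivial: the objective is identically zero on the feasible set.)

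The main obstacle is bookkeeping around the generalized inverse when $B(X)_{[-r]}$ is rank-deficient: one must be careful that $H_{[-r]}$ is still the orthogonal projector onto $\mathrm{colspan}(B(X)_{[-r]})$, so that the identities $B(X)_{[-r]}^{T}(I - H_{[-r]}) = 0$ and $(I - H_{[-r]})^{2} = I - H_{[-r]} = (I-H_{[-r]})^{T}$ hold, and that the feasible set $\mathcal{N}$ is genuinely $\mathrm{range}(I - H_{[-r]})$. A secondary subtlety, which also appeared in Theorem~\ref{thm:deltastar}, is confirming that restricting the Rayleigh quotient to the subspace $\mathcal{N}$ does not lower the maximum; this is precisely where the ``sandwiching'' by $(I - H_{[-r]})$ in the definition of $M_{r}(X)$ does the work, since it forces both the range and the orthogonal complement of the null space of $M_{r}(X)$ into $\mathcal{N}$. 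Everything else is routine linear algebra.
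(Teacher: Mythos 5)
Your proposal is correct and follows essentially the same route as the paper's proof: reduce \eqref{eq:power_equation_r} to $\delta = B(X)_{[r]}^{T}\eta$ with $B(X)_{[-r]}^{T}\eta = 0$, rewrite the objective as the Rayleigh quotient $\eta^{T}M_{r}(X)\eta$, and exhibit the projected top eigenvector as a feasible maximizer. Your added bookkeeping (that the top eigenvector already lies in $\mathcal{N}$ when $\lambda_{\max}>0$, and the degenerate case $\lambda_{\max}=0$) is a slightly more careful rendering of the same argument, not a different one.
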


\begin{proof}[Proof of Theorem \ref{thm:deltastar_r}]
Similar to the proof of Theorem \ref{thm:deltastar}, we first rewrite \eqref{eq:power_equation_r} as 
\[B(X)_{[r]}^{T}\eta = \delta, \quad B(X)_{[-r]}^{T}\eta = 0.\]
As a result, $\eta$ lies in the row null space of $B(X)_{[-r]}$ and thus
\[H_{[-r]}\eta = 0.\]
Then 
\[\delta^{T}M\delta = \eta^{T}(I - H_{[-r]})B(X)_{[r]}MB(X)_{[r]}^{T}(I - H_{[-r]})\eta = \eta^{T}M_{r}(X)\eta.\]
Since $\|\eta\|_{2}\le 1$, 
\[\delta^{T}M\delta \le \lambda_{\max}(M_{r}(X)).\]
On the other hand, for any eigenvector $u$ of $M_{r}(X)$ corresponding to its largest eigenvalue, let $\td{\eta} = (I - H_{[-r]})u$ and $\eta = \td{\eta} / \|\td{\eta}\|_{2}$, then
\[\eta^{T}M_{r}(X) \eta = \lambda_{\max}(M_{r}(X)), \quad B(X)_{[-r]}\eta = 0, \quad \|\eta\|_{2} = 1.\]
Thus, $\eta^{*}(X) = \td{\eta} / \|\td{\eta}\|_{2}$ is an optimal solution. As a result, $\delta^{*}(X) = B(X)_{[r]}^{T}\eta^{*}(X)$ and $O^{*}(X) = \lambda_{\max}(M_{r}(X))$.
\end{proof}

Although Theorem \ref{thm:deltastar_r} gives the solution of \eqref{eq:QP} for arbitrary weight matrix $M$, it is not clear which $M$ is the best choice. Note that
\[\eta_{j}^{T}y = \delta^{T}\beta_{[r]}I(j = 0) + \td{W}_{j},\]
where $\td{W}_{j} = \gamma^{T}\beta + \eta_{j}^{T}\eps$ is invariant under the cyclic permutation group. Thus, $\delta^{T}\beta_{[r]}$ characterizes the signal strength. In principle, the ``optimal'' weight matrix should depend on the prior knowledge of $\beta_{[r]}$. For instance, for a Bayesian hypothesis testing problem with a prior distribution $Q$ on $\beta_{[r]}$ under the alternative, the optimal weight matrix is $M = \E_{Q}\left[\beta_{[r]}\beta_{[r]}^{T}\right]$.

\section{Complementary Experimental Results}\label{app:experiments}

\subsection{Testing for a single coordinate}

In this appendix we present experimental results that complement Section \ref{sec:experiments}. Figure \ref{fig:power_normal_t1_1} - \ref{fig:power_ANOVA1_t1_1} present the power comparison for testing a single coordinate under the same setting as considered in Section \ref{sec:experiments} for four extra scenarios with realizations of Gaussian matrices + Cauchy errors, realizations of Cauchy matrices + Gaussian errors and realizations of random one-way \ANOVA ~matrices + Gaussian or Cauchy errors, respectively.

\subsection{Testing for multiple coordinates}\label{subsec:multiple_experiment}

Next we consider testing the first five coordinates $H_{0}: \beta_{1} = \ldots = \beta_{5} = 0$, with a Bayesian alternative hypothesis 
\[\beta_{[5]} \sim N(s\one_{5}, \Sigma), \quad \Sigma = \diag(0.2, 0.4, 0.6, 0.8, 1), \quad s\in \{0, 1, \ldots, 5\}\]
All other settings are exactly the same as Section \ref{sec:experiments}, except that the t-test and permutation t-test are replaced by the F-test and permutation F-test. For the \CPT, we choose the weight matrix $M = \E [\beta_{[5]}\beta_{[5]}^{T}]$. Figure \ref{fig:size_r5} presents the Monte-Carlo Type I error of all tests. The results are qualitatively the same as those in Section \ref{sec:experiments}, though the F-test and \LAD -based test become more invalid. Figure \ref{fig:power_normal_normal_5} - \ref{fig:power_ANOVA1_t1_5} present the power results under the same setting as Section \ref{subsec:multiple_experiment} for six scenarios with realizations of Gaussian matrices + Gaussian or Cauchy errors, realizations of Cauchy matrices + Gaussian or Cauchy errors and realizations of random one-way \ANOVA ~matrices + Gaussian or Cauchy errors, respectively.

\newpage
\begin{figure}[h]
  \centering
  \includegraphics[width = 0.9\textwidth]{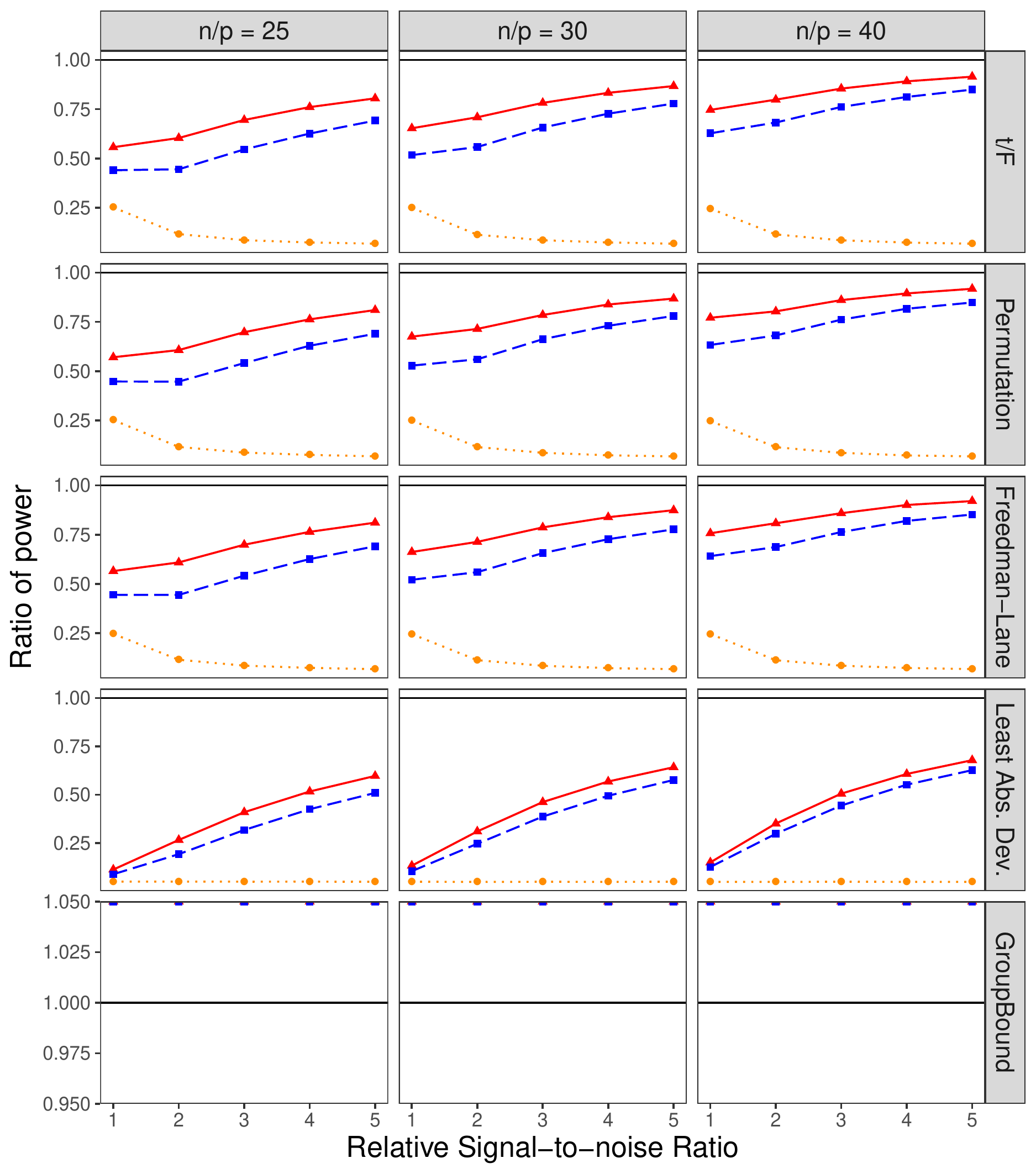}
\caption{Median power ratio of each variant of the cyclic permutation test, one with stronger ordering via
the genetic algorithm (solid), one with weaker ordering via the genetic algorithm (dashed) and one with
random ordering via stochastic search (dotted), to the competing test displayed in each row, for testing a
single coordinate in the case with realizations of Gaussian matrices and Cauchy errors. The black solid line
marks equal power. The missing values in the last row correspond to infinite ratios.}\label{fig:power_normal_t1_1}
\end{figure}

\begin{figure}[h]
  \centering
  \includegraphics[width = 0.9\textwidth]{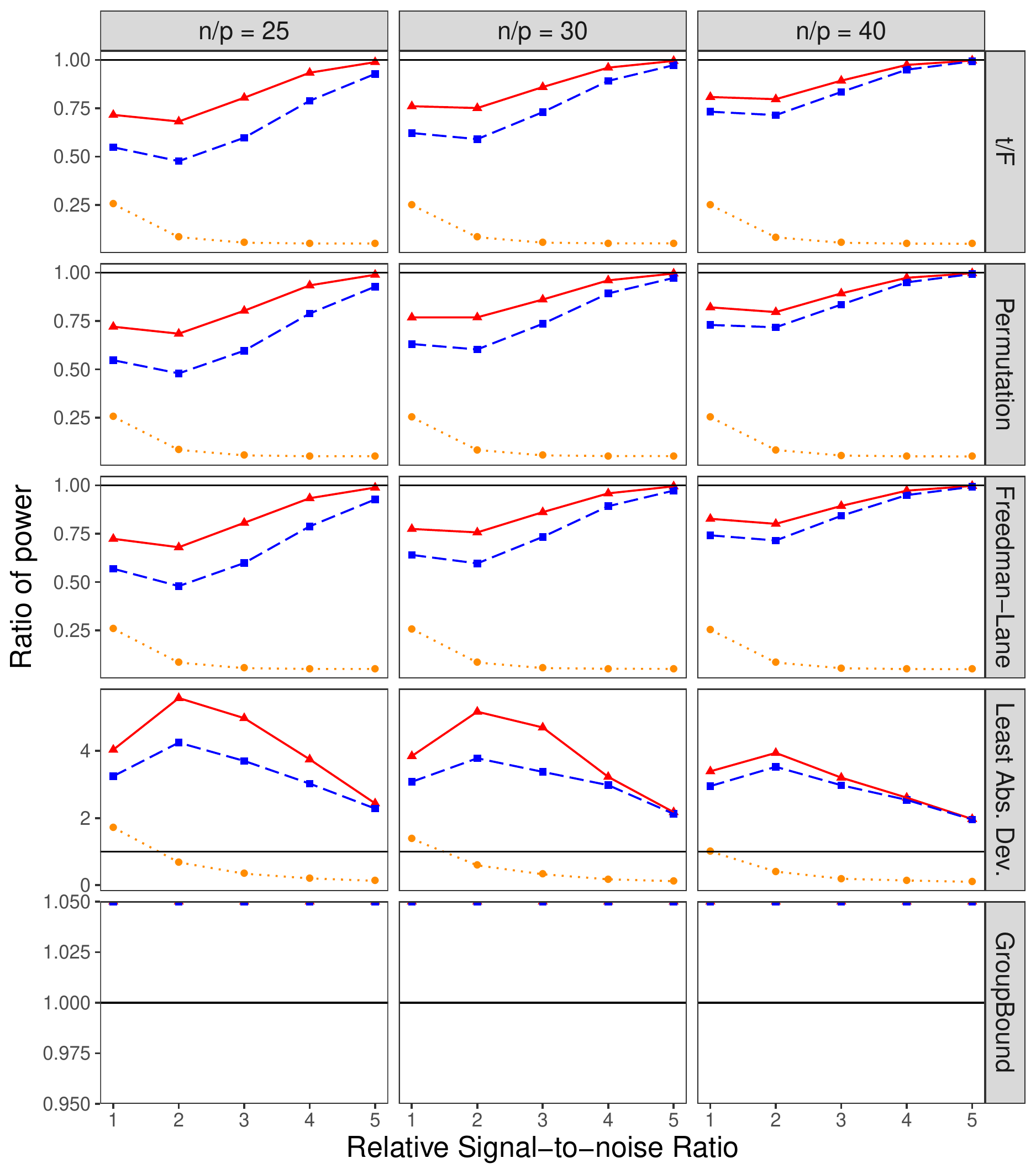}
\caption{Median power ratio of each variant of the cyclic permutation test, one with stronger ordering via
the genetic algorithm (solid), one with weaker ordering via the genetic algorithm (dashed) and one with
random ordering via stochastic search (dotted), to the competing test displayed in each row, for testing a
single coordinate in the case with realizations of Cauchy matrices and Gaussian errors. The black solid line
marks equal power. The missing values in the last row correspond to infinite ratios.}\label{fig:power_t1_normal_1}
\end{figure}

\begin{figure}[h]
  \centering
  \includegraphics[width = 0.9\textwidth]{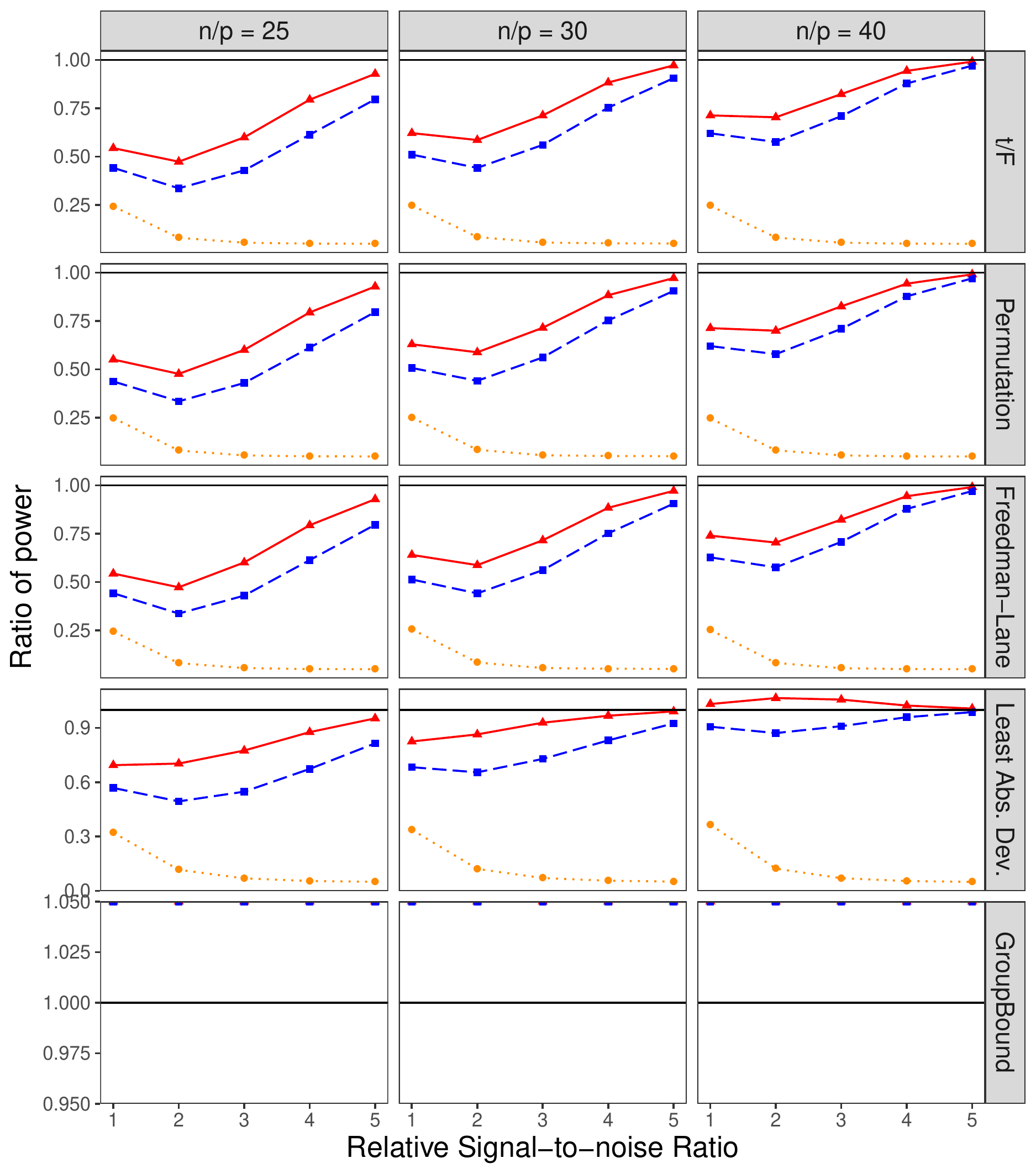}
\caption{Median power ratio of each variant of the cyclic permutation test, one with stronger ordering via
the genetic algorithm (solid), one with weaker ordering via the genetic algorithm (dashed) and one with
random ordering via stochastic search (dotted), to the competing test displayed in each row, for testing a
single coordinate in the case with realizations of random one-way ANOVA matrices and Gaussian errors.
The black solid line marks equal power. The missing values in the last row correspond to infinite ratios.}\label{fig:power_ANOVA1_normal_1}
\end{figure}

\begin{figure}[h]
  \centering
  \includegraphics[width = 0.9\textwidth]{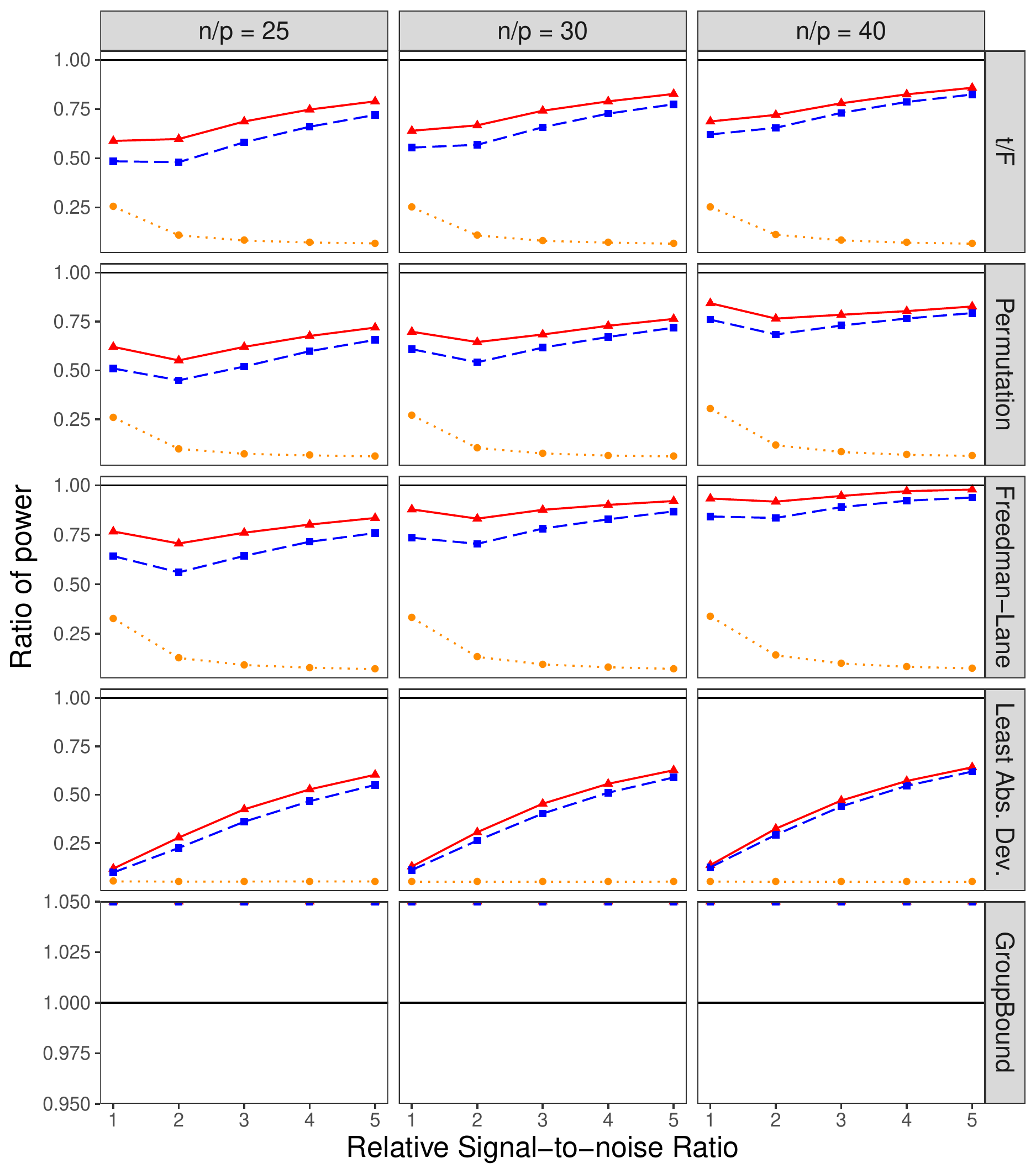}
\caption{Median power ratio of each variant of the cyclic permutation test, one with stronger ordering via
the genetic algorithm (solid), one with weaker ordering via the genetic algorithm (dashed) and one with
random ordering via stochastic search (dotted), to the competing test displayed in each row, for testing a
single coordinate in the case with realizations of random one-way ANOVA matrices and Cauchy errors. The
black solid line marks equal power. The missing values in the last row correspond to infinite ratios.}\label{fig:power_ANOVA1_t1_1}
\end{figure}

\begin{figure}[h]
  \centering
  \includegraphics[width = 0.8\textwidth]{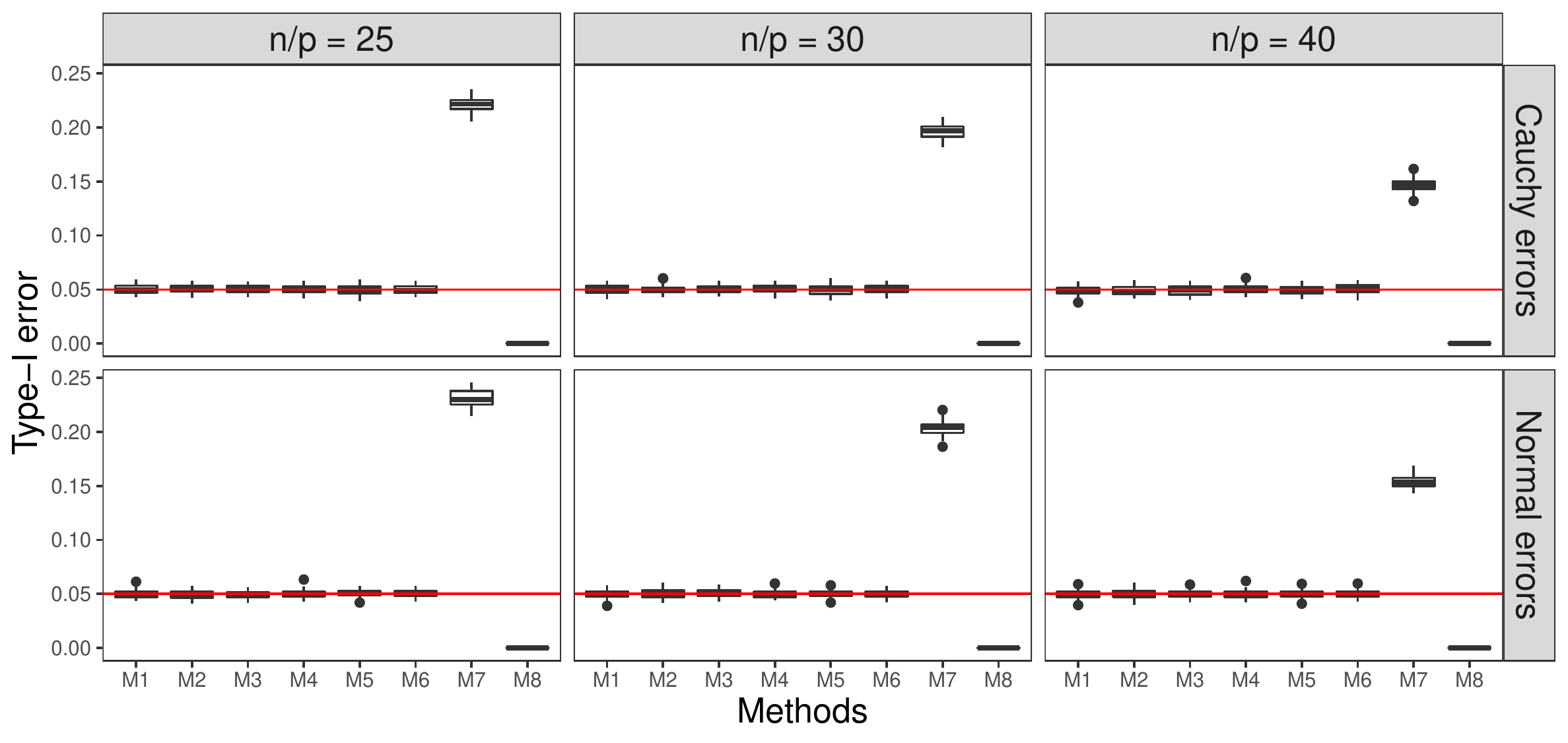}
  \includegraphics[width = 0.8\textwidth]{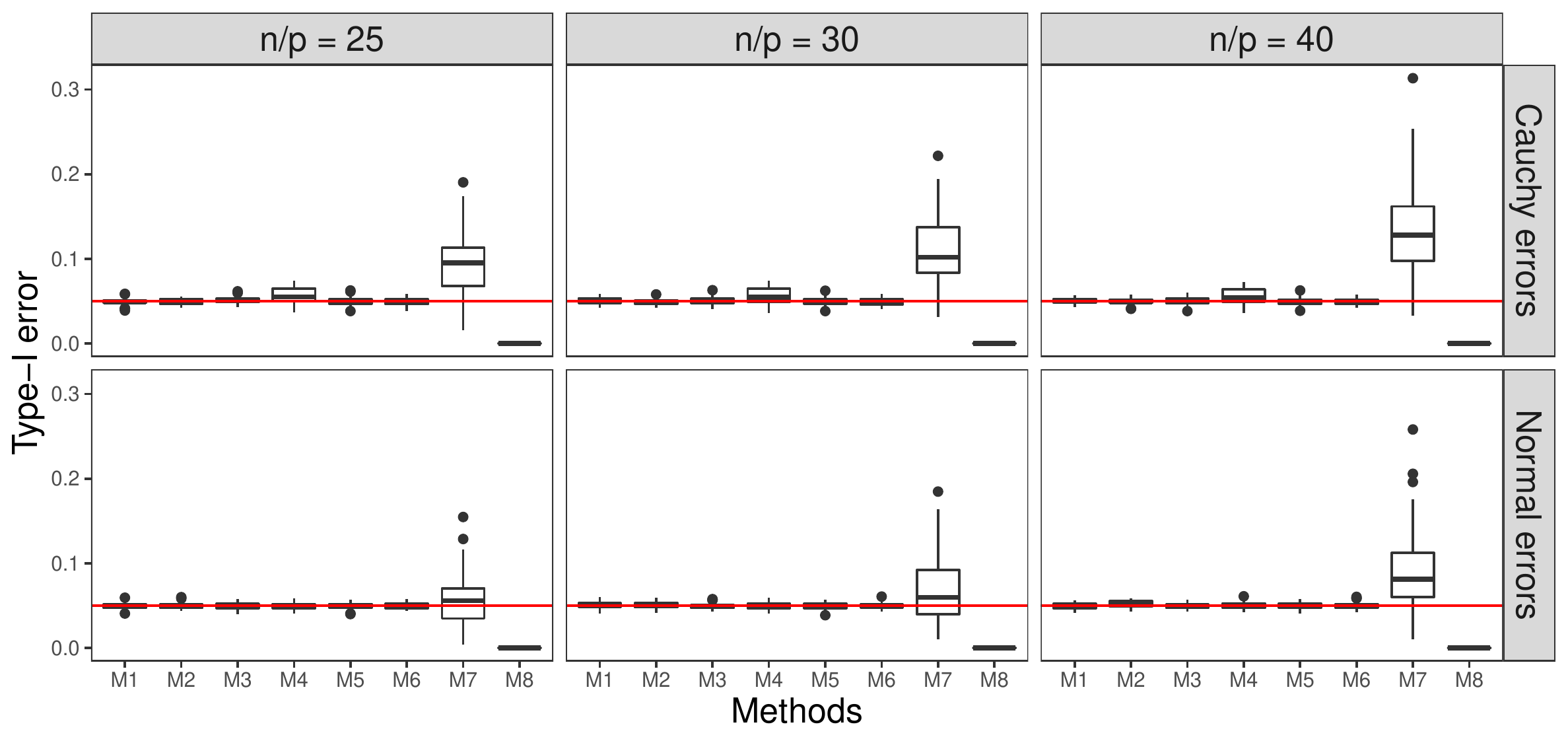}
  \includegraphics[width = 0.8\textwidth]{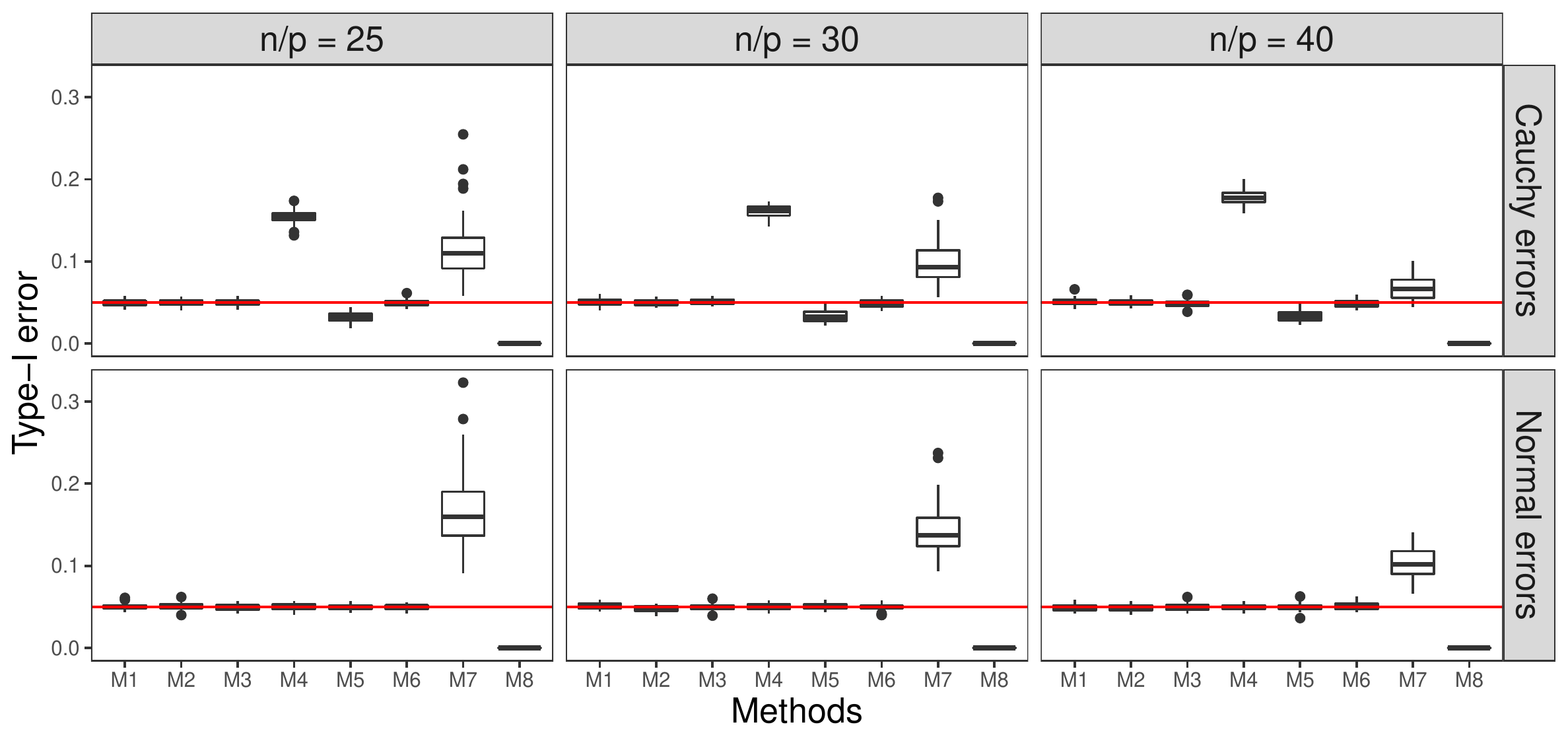}    
\caption{Monte-Carlo Type I error for testing five coordinate with three types of $X$’s which are realizations of (a) random matrices with standard normal entries; (b) random matrices with standard Cauchy
entries; (c) random one-way \ANOVA ~design matrices. Eight methods are compared: M1, \CPT ~with
stronger ordering via the Genetic Algorithm; M2, \CPT ~with weaker ordering via the Genetic Algorithm; M3,
\CPT ~with random ordering via the Stochastic Search; M4, t- or F-test; M5, permutation test; M6, Freedman-Lane test; M7, test based on LAD; M8, GroupBound.}\label{fig:size_r5}
\end{figure}

\begin{figure}[h]
  \centering
  \includegraphics[width = 0.9\textwidth]{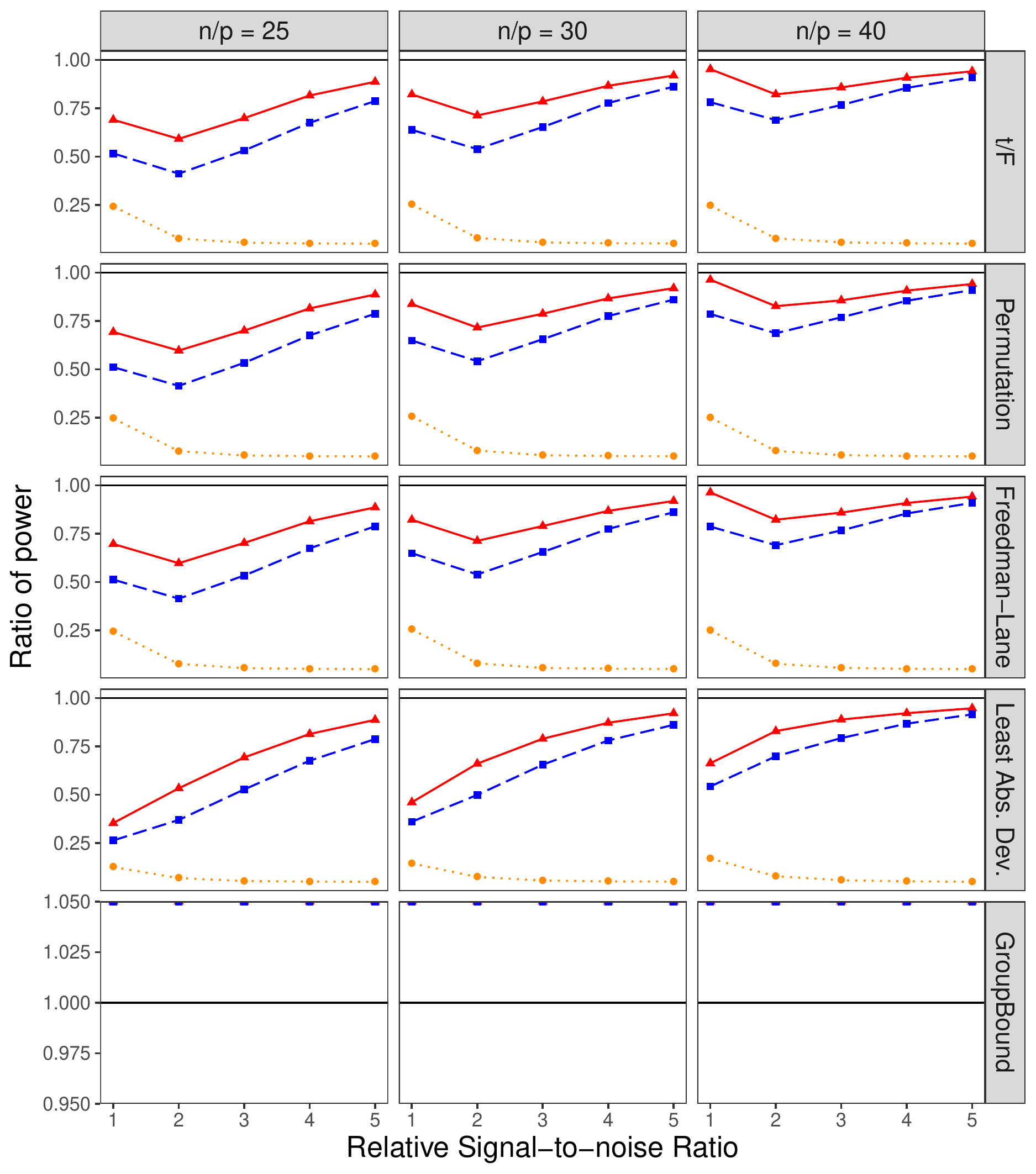}
\caption{Median power ratio of each variant of the cyclic permutation test, one with stronger ordering via
the genetic algorithm (solid), one with weaker ordering via the genetic algorithm (dashed) and one with
random ordering via stochastic search (dotted), to the competing test displayed in each row, for testing five
coordinates in the case with realizations of Gaussian matrices and Gaussian errors. The black solid line
marks equal power. The missing values in the last row correspond to infinite ratios.}\label{fig:power_normal_normal_5}
\end{figure}

\begin{figure}[h]
  \centering
  \includegraphics[width = 0.9\textwidth]{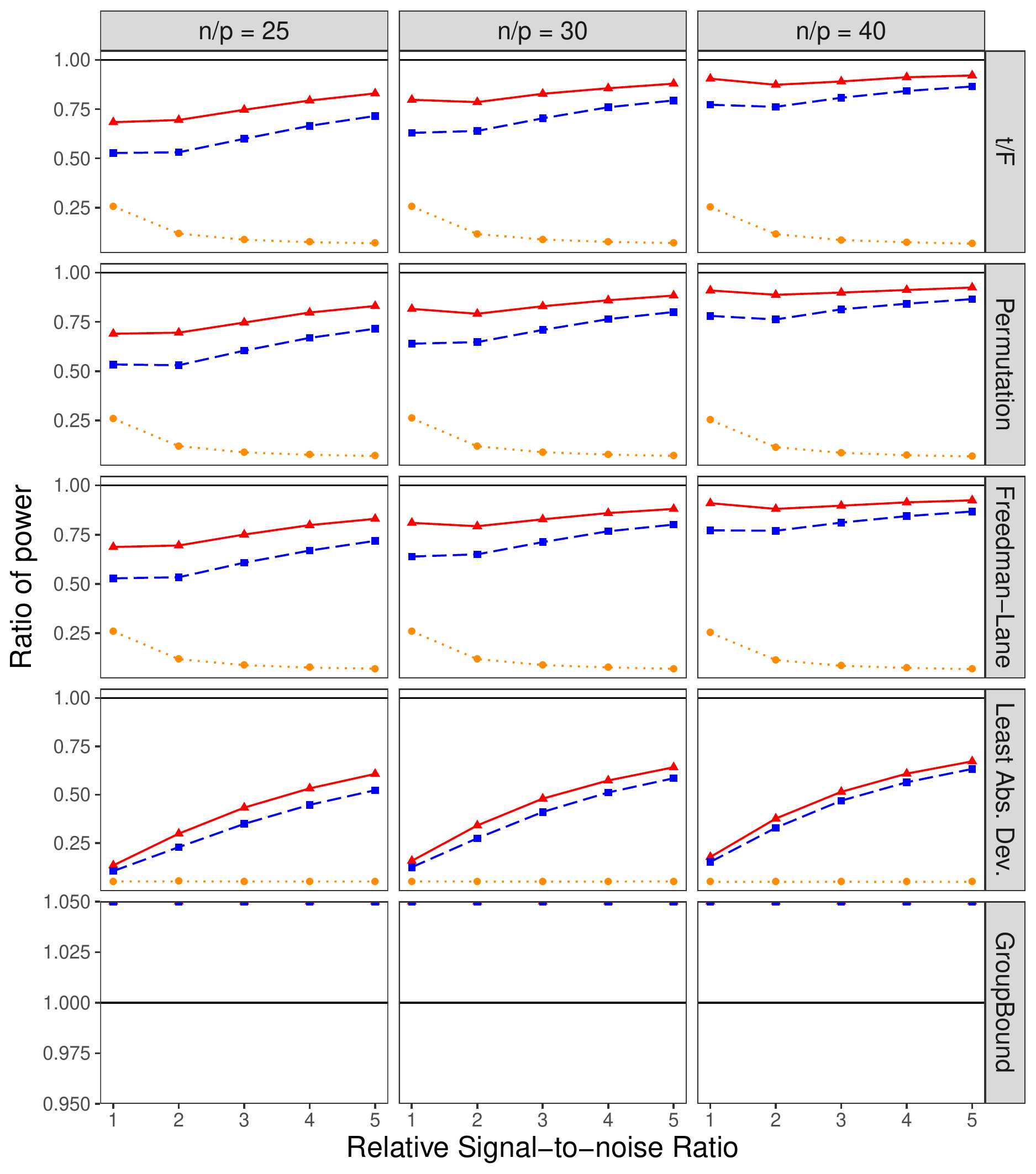}
\caption{Median power ratio of each variant of the cyclic permutation test, one with stronger ordering
via the genetic algorithm (solid), one with weaker ordering via the genetic algorithm (dashed) and one with
random ordering via stochastic search (dotted), to the competing test displayed in each row, for testing five
coordinates in the case with realizations of Gaussian matrices and Cauchy errors. The black solid line marks
equal power. The missing values in the last row correspond to infinite ratios.}\label{fig:power_normal_t1_5}
\end{figure}

\begin{figure}[h]
  \centering
  \includegraphics[width = 0.9\textwidth]{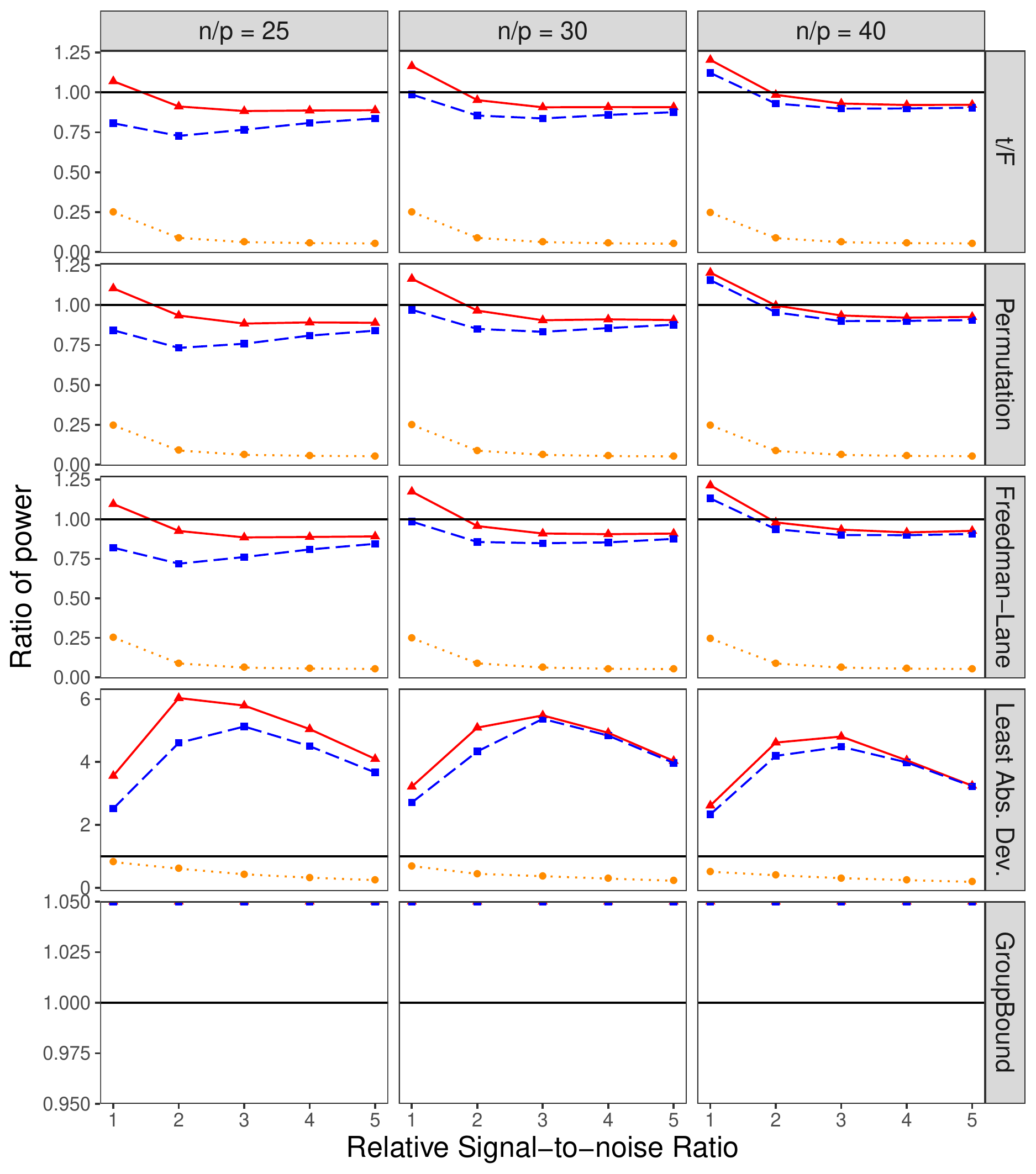}
\caption{Median power ratio of each variant of the cyclic permutation test, one with stronger ordering
via the genetic algorithm (solid), one with weaker ordering via the genetic algorithm (dashed) and one with
random ordering via stochastic search (dotted), to the competing test displayed in each row, for testing five
coordinates in the case with realizations of Cauchy matrices and Gaussian errors. The black solid line marks
equal power. The missing values in the last row correspond to infinite ratios.}\label{fig:power_t1_normal_5}
\end{figure}

\begin{figure}[h]
  \centering
  \includegraphics[width = 0.9\textwidth]{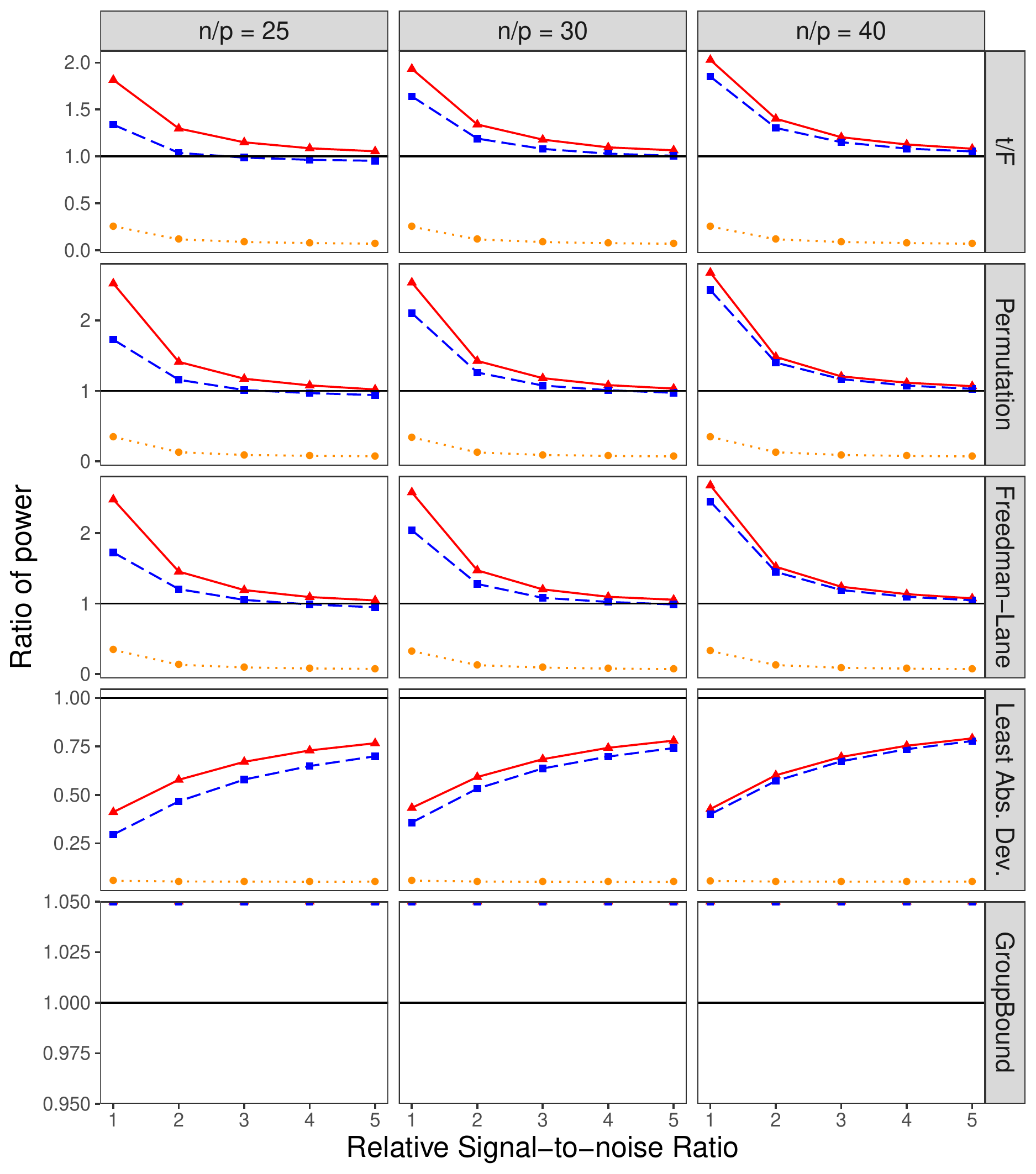}
\caption{Median power ratio of each variant of the cyclic permutation test, one with stronger ordering
via the genetic algorithm (solid), one with weaker ordering via the genetic algorithm (dashed) and one with
random ordering via stochastic search (dotted), to the competing test displayed in each row, for testing five
coordinates in the case with realizations of Cauchy matrices and Cauchy errors. The black solid line marks
equal power. The missing values in the last row correspond to infinite ratios.}\label{fig:power_t1_t1_5}
\end{figure}

\begin{figure}[h]
  \centering
  \includegraphics[width = 0.9\textwidth]{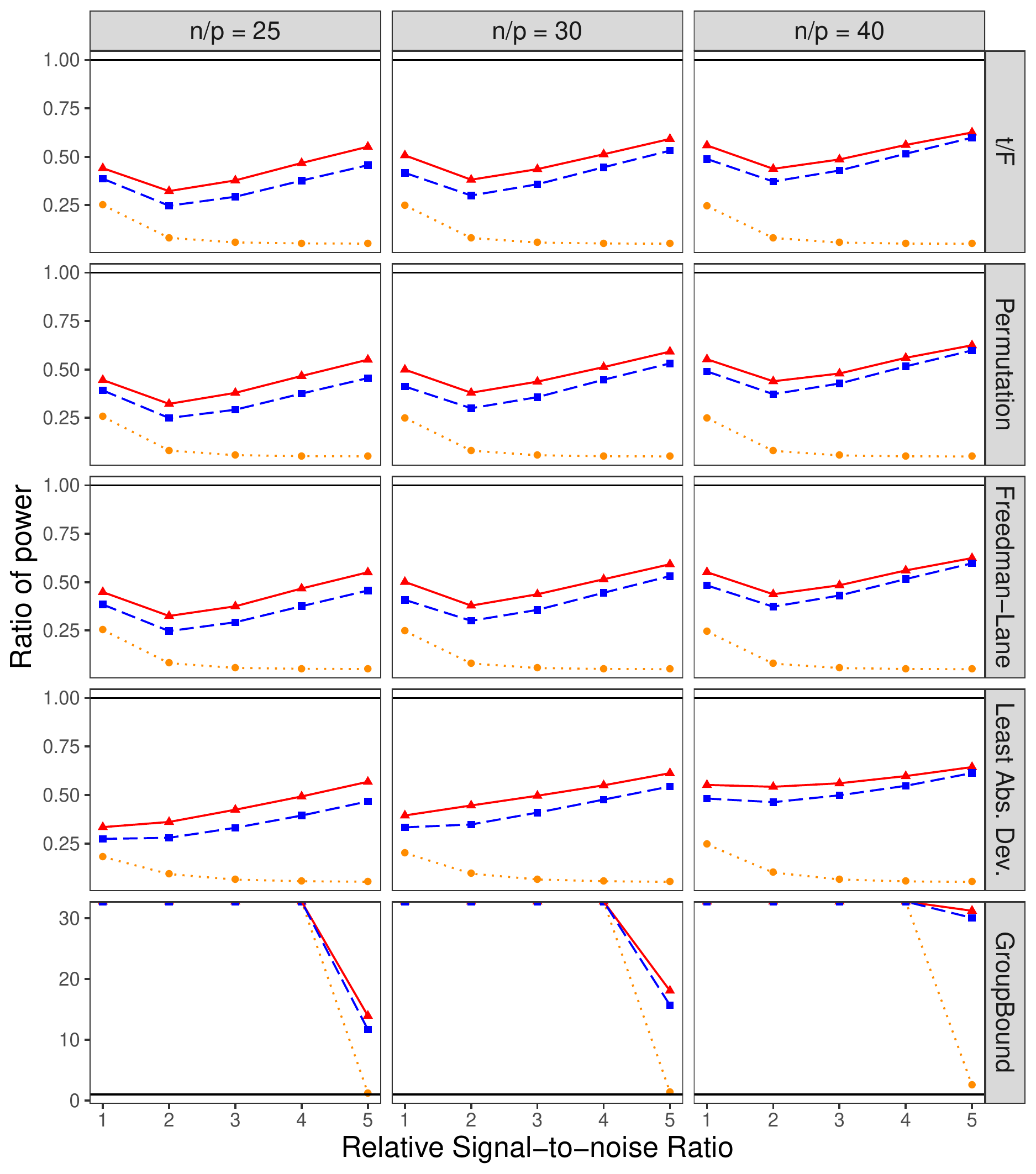}
\caption{Median power ratio of each variant of the cyclic permutation test, one with stronger ordering
via the genetic algorithm (solid), one with weaker ordering via the genetic algorithm (dashed) and one with
random ordering via stochastic search (dotted), to the competing test displayed in each row, for testing five
coordinates in the case with realizations of random one-way ANOVA matrices and Gaussian errors. The
black solid line marks equal power. The missing values in the last row correspond to infinite ratios.}\label{fig:power_ANOVA1_normal_5}
\end{figure}

\begin{figure}[h]
  \centering
  \includegraphics[width = 0.9\textwidth]{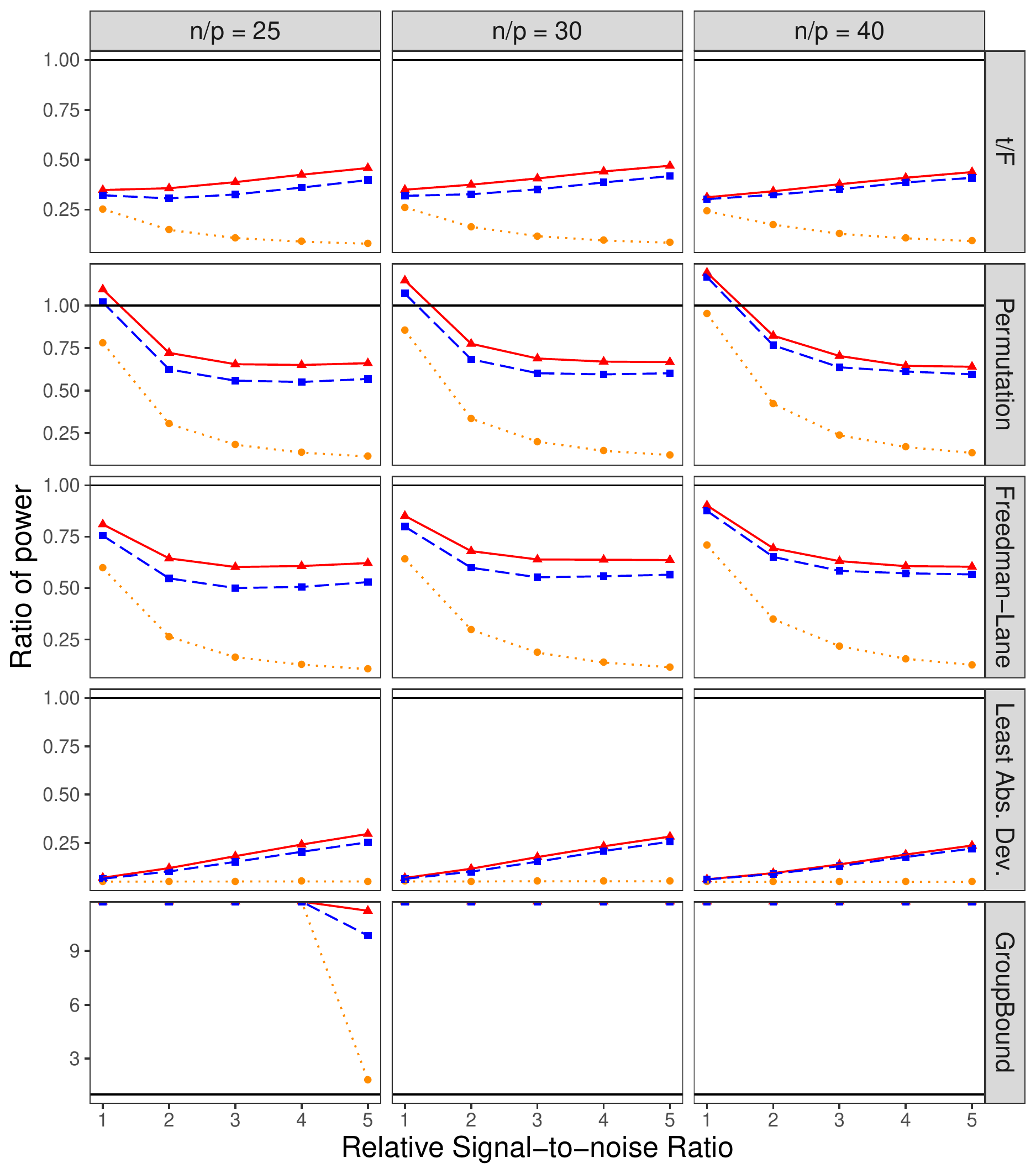}
\caption{Median power ratio of each variant of the cyclic permutation test, one with stronger ordering
via the genetic algorithm (solid), one with weaker ordering via the genetic algorithm (dashed) and one with
random ordering via stochastic search (dotted), to the competing test displayed in each row, for testing five
coordinates in the case with realizations of random one-way ANOVA matrices and Cauchy errors. The black
solid line marks equal power. The missing values in the last row correspond to infinite ratios.}\label{fig:power_ANOVA1_t1_5}
\end{figure}
\end{appendices}

\clearpage
\bibliography{CPT}

\begin{thebibliography}{227}
\expandafter\ifx\csname natexlab\endcsname\relax\def\natexlab#1{#1}\fi

\bibitem[{Adichie(1967{\natexlab{a}})}]{adichie1967asymptotic}
\textsc{Adichie, J.~N.} (1967{\natexlab{a}}).
\newblock Asymptotic efficiency of a class of non-parametric tests for
  regression parameters.
\newblock \textit{The Annals of Mathematical Statistics} , 884--893.

\bibitem[{Adichie(1967{\natexlab{b}})}]{adichie1967estimates}
\textsc{Adichie, J.~N.} (1967{\natexlab{b}}).
\newblock Estimates of regression parameters based on rank tests.
\newblock \textit{The Annals of Mathematical Statistics} , 894--904.

\bibitem[{Adichie(1978)}]{adichie1978rank}
\textsc{Adichie, J.~N.} (1978).
\newblock Rank tests of sub-hypotheses in the general linear regression.
\newblock \textit{The Annals of Statistics} \textbf{6}, 1012--1026.

\bibitem[{Adichie(1984)}]{adichie198411}
\textsc{Adichie, J.~N.} (1984).
\newblock 11 rank tests in linear models.
\newblock \textit{Handbook of statistics} \textbf{4}, 229--257.

\bibitem[{Akritas(1990)}]{akritas1990rank}
\textsc{Akritas, M.~G.} (1990).
\newblock The rank transform method in some two-factor designs.
\newblock \textit{Journal of the American Statistical Association} \textbf{85},
  73--78.

\bibitem[{Akritas \& Arnold(2000)}]{akritas2000asymptotics}
\textsc{Akritas, M.~G.} \& \textsc{Arnold, S.} (2000).
\newblock Asymptotics for analysis of variance when the number of levels is
  large.
\newblock \textit{Journal of the American Statistical association} \textbf{95},
  212--226.

\bibitem[{Akritas \& Arnold(1994)}]{akritas1994fully}
\textsc{Akritas, M.~G.} \& \textsc{Arnold, S.~F.} (1994).
\newblock Fully nonparametric hypotheses for factorial designs i: Multivariate
  repeated measures designs.
\newblock \textit{Journal of the American Statistical Association} \textbf{89},
  336--343.

\bibitem[{Akritas et~al.(1997)Akritas, Arnold \&
  Brunner}]{akritas1997nonparametric}
\textsc{Akritas, M.~G.}, \textsc{Arnold, S.~F.} \& \textsc{Brunner, E.} (1997).
\newblock Nonparametric hypotheses and rank statistics for unbalanced factorial
  designs.
\newblock \textit{Journal of the American Statistical Association} \textbf{92},
  258--265.

\bibitem[{Alimoradi \& Saleh(1998)}]{alimoradi19989}
\textsc{Alimoradi, S.} \& \textsc{Saleh, A. M.~E.} (1998).
\newblock 9 on some {L}-estimation in linear regression models.
\newblock \textit{Handbook of Statistics} \textbf{17}, 237--280.

\bibitem[{Anatolyev(2012)}]{anatolyev2012inference}
\textsc{Anatolyev, S.} (2012).
\newblock Inference in regression models with many regressors.
\newblock \textit{Journal of Econometrics} \textbf{170}, 368--382.

\bibitem[{Anderson \& Robinson(2001)}]{anderson2001permutation}
\textsc{Anderson, M.~J.} \& \textsc{Robinson, J.} (2001).
\newblock Permutation tests for linear models.
\newblock \textit{Australian \& New Zealand Journal of Statistics} \textbf{43},
  75--88.

\bibitem[{Arnold(1980)}]{arnold1980asymptotic}
\textsc{Arnold, S.~F.} (1980).
\newblock Asymptotic validity of f tests for the ordinary linear model and the
  multiple correlation model.
\newblock \textit{Journal of the American Statistical Association} \textbf{75},
  890--894.

\bibitem[{Aubuchon \& Hettmansperger(1984)}]{aubuchon198412}
\textsc{Aubuchon, J.~C.} \& \textsc{Hettmansperger, T.~P.} (1984).
\newblock 12 on the use of rank tests and estimates in the linear model.
\newblock \textit{Handbook of statistics} \textbf{4}, 259--274.

\bibitem[{Bai \& Wu(1994)}]{bai1994limiting}
\textsc{Bai, Z.} \& \textsc{Wu, Y.} (1994).
\newblock Limiting behavior of {M}-estimators of regression coefficients in
  high dimensional linear models i. scale dependent case.
\newblock \textit{Journal of Multivariate Analysis} \textbf{51}, 211--239.

\bibitem[{Barber \& Cand{\`e}s(2015)}]{barber2015controlling}
\textsc{Barber, R.~F.} \& \textsc{Cand{\`e}s, E.~J.} (2015).
\newblock Controlling the false discovery rate via knockoffs.
\newblock \textit{The Annals of Statistics} \textbf{43}, 2055--2085.

\bibitem[{Bartlett(1935)}]{bartlett1935effect}
\textsc{Bartlett, M.} (1935).
\newblock The effect of non-normality on the t distribution.
\newblock In \textit{mathematical proceedings of the cambridge philosophical
  society}, vol.~31. Cambridge University Press.

\bibitem[{Bathke \& Lankowski(2005)}]{bathke2005rank}
\textsc{Bathke, A.} \& \textsc{Lankowski, D.} (2005).
\newblock Rank procedures for a large number of treatments.
\newblock \textit{Journal of statistical planning and inference} \textbf{133},
  223--238.

\bibitem[{Bathke \& Harrar(2008)}]{bathke2008nonparametric}
\textsc{Bathke, A.~C.} \& \textsc{Harrar, S.~W.} (2008).
\newblock Nonparametric methods in multivariate factorial designs for large
  number of factor levels.
\newblock \textit{Journal of Statistical planning and Inference} \textbf{138},
  588--610.

\bibitem[{Bean et~al.(2012)Bean, Bickel, El~Karoui, Lim \& Yu}]{bean11}
\textsc{Bean, D.}, \textsc{Bickel, P.~J.}, \textsc{El~Karoui, N.}, \textsc{Lim,
  C.} \& \textsc{Yu, B.} (2012).
\newblock Penalized robust regression in high-dimension.
\newblock \textit{Technical Report 813, Department of Statistics, UC Berkeley}
  .

\bibitem[{Benjamini(1983)}]{benjamini1983t}
\textsc{Benjamini, Y.} (1983).
\newblock Is the t test really conservative when the parent distribution is
  long-tailed?
\newblock \textit{Journal of the American Statistical Association} \textbf{78},
  645--654.

\bibitem[{Berry et~al.(2013)Berry, Johnston \& Mielke}]{berry2013}
\textsc{Berry, K.~J.}, \textsc{Johnston, J.~E.} \& \textsc{Mielke, P.~W.}
  (2013).
\newblock \textit{A chronicle of permutation statistical methods. 1920-2000,
  and beyond}.
\newblock Springer.

\bibitem[{Bhattacharya \& Ghosh(1978)}]{bhattacharya1978validity}
\textsc{Bhattacharya, R.~N.} \& \textsc{Ghosh, J.~K.} (1978).
\newblock On the validity of the formal {Edgeworth} expansion.
\newblock \textit{Ann. Statist} \textbf{6}, 434--451.

\bibitem[{Bickel(1965)}]{bickel1965some}
\textsc{Bickel, P.~J.} (1965).
\newblock On some robust estimates of location.
\newblock \textit{The Annals of Mathematical Statistics} \textbf{36}, 847--858.

\bibitem[{Bickel(1973)}]{bickel1973some}
\textsc{Bickel, P.~J.} (1973).
\newblock On some analogues to linear combinations of order statistics in the
  linear model.
\newblock \textit{The Annals of Statistics} , 597--616.

\bibitem[{Bickel(1975)}]{bickel1975one}
\textsc{Bickel, P.~J.} (1975).
\newblock One-step huber estimates in the linear model.
\newblock \textit{Journal of the American Statistical Association} \textbf{70},
  428--434.

\bibitem[{Bickel \& Freedman(1983)}]{bickel1983bootstrapping}
\textsc{Bickel, P.~J.} \& \textsc{Freedman, D.~A.} (1983).
\newblock Bootstrapping regression models with many parameters.
\newblock \textit{Festschrift for Erich L. Lehmann} , 28--48.

\bibitem[{Bickel \& Lehmann(1975)}]{bickel1975descriptive}
\textsc{Bickel, P.~J.} \& \textsc{Lehmann, E.~L.} (1975).
\newblock Descriptive statistics for nonparametric models {II}. location.
\newblock \textit{The Annals of Statistics} \textbf{3}, 1045--1069.

\bibitem[{Bickel \& Sakov(2008)}]{bickel2008choice}
\textsc{Bickel, P.~J.} \& \textsc{Sakov, A.} (2008).
\newblock On the choice of $m$ in the $m$ out of $n$ bootstrap and confidence
  bounds for extrema.
\newblock \textit{Statistica Sinica} \textbf{18}, 967--985.

\bibitem[{Boos(1992)}]{boos1992generalized}
\textsc{Boos, D.~D.} (1992).
\newblock On generalized score tests.
\newblock \textit{The American Statistician} \textbf{46}, 327--333.

\bibitem[{Boos \& Brownie(1995)}]{boos1995anova}
\textsc{Boos, D.~D.} \& \textsc{Brownie, C.} (1995).
\newblock {ANOVA} and rank tests when the number of treatments is large.
\newblock \textit{Statistics \& Probability Letters} \textbf{23}, 183--191.

\bibitem[{Box(1953)}]{box1953non}
\textsc{Box, G.~E.} (1953).
\newblock Non-normality and tests on variances.
\newblock \textit{Biometrika} \textbf{40}, 318--335.

\bibitem[{Box \& Andersen(1955)}]{box1955permutation}
\textsc{Box, G.~E.} \& \textsc{Andersen, S.~L.} (1955).
\newblock Permutation theory in the derivation of robust criteria and the study
  of departures from assumption.
\newblock \textit{Journal of the Royal Statistical Society: Series B
  (Methodological)} \textbf{17}, 1--26.

\bibitem[{Box \& Watson(1962)}]{box1962robustness}
\textsc{Box, G.~E.} \& \textsc{Watson, G.~S.} (1962).
\newblock Robustness to non-normality of regression tests.
\newblock \textit{Biometrika} \textbf{49}, 93--106.

\bibitem[{Brown \& Maritz(1982)}]{brown1982distribution}
\textsc{Brown, B.} \& \textsc{Maritz, J.} (1982).
\newblock Distribution-free methods in regression.
\newblock \textit{Australian Journal of Statistics} \textbf{24}, 318--331.

\bibitem[{Brown \& Mood(1951)}]{brown1951median}
\textsc{Brown, G.~W.} \& \textsc{Mood, A.~M.} (1951).
\newblock On median tests for linear hypotheses.
\newblock In \textit{Proceedings of the Second Berkeley Symposium on
  Mathematical Statistics and Probability}. The Regents of the University of
  California.

\bibitem[{Brownie \& Boos(1994)}]{brownie1994type}
\textsc{Brownie, C.} \& \textsc{Boos, D.~D.} (1994).
\newblock Type {I} error robustness of {ANOVA} and {ANOVA} on ranks when the
  number of treatments is large.
\newblock \textit{Biometrics} , 542--549.

\bibitem[{Brunner \& Denker(1994)}]{brunner1994rank}
\textsc{Brunner, E.} \& \textsc{Denker, M.} (1994).
\newblock Rank statistics under dependent observations and applications to
  factorial designs.
\newblock \textit{Journal of Statistical planning and Inference} \textbf{42},
  353--378.

\bibitem[{Calhoun(2011)}]{calhoun2011hypothesis}
\textsc{Calhoun, G.} (2011).
\newblock Hypothesis testing in linear regression when $k/n$ is large.
\newblock \textit{Journal of econometrics} \textbf{165}, 163--174.

\bibitem[{Campi et~al.(2009)Campi, Ko \& Weyer}]{campi2009non}
\textsc{Campi, M.~C.}, \textsc{Ko, S.} \& \textsc{Weyer, E.} (2009).
\newblock Non-asymptotic confidence regions for model parameters in the
  presence of unmodelled dynamics.
\newblock \textit{Automatica} \textbf{45}, 2175--2186.

\bibitem[{Campi \& Weyer(2005)}]{campi2005guaranteed}
\textsc{Campi, M.~C.} \& \textsc{Weyer, E.} (2005).
\newblock Guaranteed non-asymptotic confidence regions in system
  identification.
\newblock \textit{Automatica} \textbf{41}, 1751--1764.

\bibitem[{Cattaneo et~al.(2018)Cattaneo, Jansson \&
  Newey}]{cattaneo2018inference}
\textsc{Cattaneo, M.~D.}, \textsc{Jansson, M.} \& \textsc{Newey, W.~K.} (2018).
\newblock Inference in linear regression models with many covariates and
  heteroscedasticity.
\newblock \textit{Journal of the American Statistical Association}
  \textbf{113}, 1350--1361.

\bibitem[{Chatterjee(1999)}]{chatterjee1999generalised}
\textsc{Chatterjee, S.~B.} (1999).
\newblock \textit{Generalised bootstrap techniques}.
\newblock Ph.D. thesis, Indian Statistical Institute, Kolkata.

\bibitem[{Chernoff(1956)}]{chernoff1956large}
\textsc{Chernoff, H.} (1956).
\newblock Large-sample theory: Parametric case.
\newblock \textit{The Annals of Mathematical Statistics} \textbf{27}, 1--22.

\bibitem[{Chernozhukov et~al.(2009)Chernozhukov, Hansen \&
  Jansson}]{chernozhukov2009finite}
\textsc{Chernozhukov, V.}, \textsc{Hansen, C.} \& \textsc{Jansson, M.} (2009).
\newblock Finite sample inference for quantile regression models.
\newblock \textit{Journal of Econometrics} \textbf{152}, 93--103.

\bibitem[{Chung \& Romano(2013)}]{chung2013exact}
\textsc{Chung, E.} \& \textsc{Romano, J.~P.} (2013).
\newblock Exact and asymptotically robust permutation tests.
\newblock \textit{The Annals of Statistics} \textbf{41}, 484--507.

\bibitem[{Cochran(1937)}]{cochran1937efficiencies}
\textsc{Cochran, W.~G.} (1937).
\newblock The efficiencies of the binomial series tests of significance of a
  mean and of a correlation coefficient.
\newblock \textit{Journal of the Royal Statistical Society} \textbf{100},
  69--73.

\bibitem[{Collins(1987)}]{collins1987permutation}
\textsc{Collins, M.~F.} (1987).
\newblock A permutation test for planar regression.
\newblock \textit{Australian Journal of Statistics} \textbf{29}, 303--308.

\bibitem[{Conover \& Iman(1976)}]{conover1976some}
\textsc{Conover, W.} \& \textsc{Iman, R.~L.} (1976).
\newblock On some alternative procedures using ranks for the analysis of
  experimental designs.
\newblock \textit{Communications in Statistics-Theory and Methods} \textbf{5},
  1349--1368.

\bibitem[{Conover \& Iman(1981)}]{conover1981rank}
\textsc{Conover, W.~J.} \& \textsc{Iman, R.~L.} (1981).
\newblock Rank transformations as a bridge between parametric and nonparametric
  statistics.
\newblock \textit{The American Statistician} \textbf{35}, 124--129.

\bibitem[{Cressie(1980)}]{cressie1980relaxing}
\textsc{Cressie, N.} (1980).
\newblock Relaxing assumptions in the one sample t-test.
\newblock \textit{Australian Journal of Statistics} \textbf{22}, 143--153.

\bibitem[{Daniels(1954)}]{daniels1954distribution}
\textsc{Daniels, H.} (1954).
\newblock A distribution-free test for regression parameters.
\newblock \textit{The Annals of Mathematical Statistics} , 499--513.

\bibitem[{Das \& Lahiri(2019)}]{das2019second}
\textsc{Das, D.} \& \textsc{Lahiri, S.~N.} (2019).
\newblock Second order correctness of perturbation bootstrap {M}-estimator of
  multiple linear regression parameter.
\newblock \textit{Bernoulli} \textbf{25}, 654--682.

\bibitem[{David \& Johnson(1951{\natexlab{a}})}]{david1951method}
\textsc{David, F.} \& \textsc{Johnson, N.} (1951{\natexlab{a}}).
\newblock A method of investigating the effect of nonnormality and
  heterogeneity of variance on tests of the general linear hypothesis.
\newblock \textit{The Annals of Mathematical Statistics} , 382--392.

\bibitem[{David \& Johnson(1951{\natexlab{b}})}]{david1951effect}
\textsc{David, F.~N.} \& \textsc{Johnson, N.} (1951{\natexlab{b}}).
\newblock The effect of non-normality on the power function of the {F}-test in
  the analysis of variance.
\newblock \textit{Biometrika} \textbf{38}, 43--57.

\bibitem[{DiCiccio \& Romano(2017)}]{diciccio2017robust}
\textsc{DiCiccio, C.~J.} \& \textsc{Romano, J.~P.} (2017).
\newblock Robust permutation tests for correlation and regression coefficients.
\newblock \textit{Journal of the American Statistical Association}
  \textbf{112}, 1211--1220.

\bibitem[{Donoho \& Montanari(2016)}]{donoho16}
\textsc{Donoho, D.} \& \textsc{Montanari, A.} (2016).
\newblock High dimensional robust {m}-estimation: Asymptotic variance via
  approximate message passing.
\newblock \textit{Probability Theory and Related Fields} \textbf{166},
  935--969.

\bibitem[{Doob(1935)}]{doob1935limiting}
\textsc{Doob, J.~L.} (1935).
\newblock The limiting distributions of certain statistics.
\newblock \textit{The Annals of Mathematical Statistics} \textbf{6}, 160--169.

\bibitem[{Draper(1988)}]{draper1988rank}
\textsc{Draper, D.} (1988).
\newblock Rank-based robust analysis of linear models. {I}. exposition and
  review.
\newblock \textit{Statistical Science} , 239--257.

\bibitem[{Eden \& Yates(1933)}]{eden1933validity}
\textsc{Eden, T.} \& \textsc{Yates, F.} (1933).
\newblock On the validity of {Fisher's} z test when applied to an actual
  example of non-normal data.
\newblock \textit{The Journal of Agricultural Science} \textbf{23}, 6--17.

\bibitem[{Efron(1969)}]{efron69}
\textsc{Efron, B.} (1969).
\newblock Student's t-test under symmetry conditions.
\newblock \textit{Journal of the American Statistical Association} \textbf{64},
  1278--1302.

\bibitem[{Efron(1979)}]{efron1979bootstrap}
\textsc{Efron, B.} (1979).
\newblock Bootstrap methods: Another look at the jackknife.
\newblock \textit{The Annals of Statistics} \textbf{7}, 1--26.

\bibitem[{Eicker(1963)}]{eicker1963asymptotic}
\textsc{Eicker, F.} (1963).
\newblock Asymptotic normality and consistency of the least squares estimators
  for families of linear regressions.
\newblock \textit{The Annals of Mathematical Statistics} \textbf{34}, 447--456.

\bibitem[{Eicker(1967)}]{eicker1967limit}
\textsc{Eicker, F.} (1967).
\newblock Limit theorems for regressions with unequal and dependent errors.
\newblock In \textit{Proceedings of the fifth Berkeley symposium on
  mathematical statistics and probability}, vol.~1.

\bibitem[{El~Karoui(2013)}]{elkaroui13}
\textsc{El~Karoui, N.} (2013).
\newblock Asymptotic behavior of unregularized and ridge-regularized
  high-dimensional robust regression estimators: rigorous results.
\newblock \textit{arXiv preprint arXiv:1311.2445} .

\bibitem[{El~Karoui(2018)}]{elkaroui2018impact}
\textsc{El~Karoui, N.} (2018).
\newblock On the impact of predictor geometry on the performance on
  high-dimensional ridge-regularized generalized robust regression estimators.
\newblock \textit{Probability Theory and Related Fields} \textbf{170}, 95--175.

\bibitem[{El~Karoui et~al.(2011)El~Karoui, Bean, Bickel, Lim \&
  Yu}]{elkaroui11}
\textsc{El~Karoui, N.}, \textsc{Bean, D.}, \textsc{Bickel, P.~J.}, \textsc{Lim,
  C.} \& \textsc{Yu, B.} (2011).
\newblock On robust regression with high-dimensional predictors.
\newblock \textit{Technical Report 811, Department of Statistics, UC Berkeley}
  .

\bibitem[{El~Karoui \& Purdom(2018)}]{el2018can}
\textsc{El~Karoui, N.} \& \textsc{Purdom, E.} (2018).
\newblock Can we trust the bootstrap in high-dimensions? the case of linear
  models.
\newblock \textit{The Journal of Machine Learning Research} \textbf{19},
  170--235.

\bibitem[{Esseen(1945)}]{esseen1945fourier}
\textsc{Esseen, C.-G.} (1945).
\newblock Fourier analysis of distribution functions. a mathematical study of
  the laplace-gaussian law.
\newblock \textit{Acta Mathematica} \textbf{77}, 1--125.

\bibitem[{Evans \& Evans(1955)}]{evans1955atomic}
\textsc{Evans, R.~D.} \& \textsc{Evans, R.} (1955).
\newblock \textit{Appendix {G}: The atomic nucleus}.
\newblock McGraw-Hill New York.

\bibitem[{Feng et~al.(2013)Feng, Zou, Wang \& Chen}]{feng2013rank}
\textsc{Feng, L.}, \textsc{Zou, C.}, \textsc{Wang, Z.} \& \textsc{Chen, B.}
  (2013).
\newblock Rank-based score tests for high-dimensional regression coefficients.
\newblock \textit{Electronic Journal of Statistics} \textbf{7}, 2131--2149.

\bibitem[{Feng et~al.(2011)Feng, He \& Hu}]{feng2011wild}
\textsc{Feng, X.}, \textsc{He, X.} \& \textsc{Hu, J.} (2011).
\newblock Wild bootstrap for quantile regression.
\newblock \textit{Biometrika} \textbf{98}, 995--999.

\bibitem[{Fisher(1915)}]{fisher1915frequency}
\textsc{Fisher, R.~A.} (1915).
\newblock Frequency distribution of the values of the correlation coefficient
  in samples from an indefinitely large population.
\newblock \textit{Biometrika} \textbf{10}, 507--521.

\bibitem[{Fisher(1922)}]{fisher1922goodness}
\textsc{Fisher, R.~A.} (1922).
\newblock The goodness of fit of regression formulae, and the distribution of
  regression coefficients.
\newblock \textit{Journal of the Royal Statistical Society} \textbf{85},
  597--612.

\bibitem[{Fisher(1924)}]{fisher1924036}
\textsc{Fisher, R.~A.} (1924).
\newblock 036: On a distribution yielding the error functions of several well
  known statistics.
\newblock \textit{Proceedings of the International Congress of Mathematics}
  \textbf{2}, 805--813.

\bibitem[{Fisher(1925)}]{fisher1925statistical}
\textsc{Fisher, R.~A.} (1925).
\newblock \textit{Statistical methods for research workers.}
\newblock Oliver and Boyd, Edinburgh and London.

\bibitem[{Fisher(1926)}]{fisher1926arrangement}
\textsc{Fisher, R.~A.} (1926).
\newblock The arrangement of field experiments.
\newblock \textit{Journal of the Ministry of Agriculture} \textbf{33},
  503--513.

\bibitem[{Fisher(1935)}]{fisher1935logic}
\textsc{Fisher, R.~A.} (1935).
\newblock The logic of inductive inference.
\newblock \textit{Journal of the royal statistical society} \textbf{98},
  39--82.

\bibitem[{Fogel et~al.(2013)Fogel, Jenatton, Bach \&
  d'Aspremont}]{fogel2013convex}
\textsc{Fogel, F.}, \textsc{Jenatton, R.}, \textsc{Bach, F.} \&
  \textsc{d'Aspremont, A.} (2013).
\newblock Convex relaxations for permutation problems.
\newblock In \textit{Advances in Neural Information Processing Systems}.

\bibitem[{Freedman(1981)}]{freedman1981bootstrapping}
\textsc{Freedman, D.~A.} (1981).
\newblock Bootstrapping regression models.
\newblock \textit{The Annals of Statistics} \textbf{9}, 1218--1228.

\bibitem[{Freedman \& Lane(1983)}]{freedman83}
\textsc{Freedman, D.~A.} \& \textsc{Lane, D.} (1983).
\newblock A nonstochastic interpretation of reported significance levels.
\newblock \textit{Journal of Business \& Economic Statistics} \textbf{1},
  292--298.

\bibitem[{Friedman(1937)}]{friedman1937use}
\textsc{Friedman, M.} (1937).
\newblock The use of ranks to avoid the assumption of normality implicit in the
  analysis of variance.
\newblock \textit{Journal of the american statistical association} \textbf{32},
  675--701.

\bibitem[{Friedman(1940)}]{friedman1940comparison}
\textsc{Friedman, M.} (1940).
\newblock A comparison of alternative tests of significance for the problem of
  m rankings.
\newblock \textit{The Annals of Mathematical Statistics} \textbf{11}, 86--92.

\bibitem[{Gastwirth(1966)}]{gastwirth1966robust}
\textsc{Gastwirth, J.~L.} (1966).
\newblock On robust procedures.
\newblock \textit{Journal of the American Statistical Association} \textbf{61},
  929--948.

\bibitem[{Gayen(1949)}]{gayen1949distribution}
\textsc{Gayen, A.~K.} (1949).
\newblock The distribution of {Student's} t in random samples of any size drawn
  from non-normal universes.
\newblock \textit{Biometrika} \textbf{36}, 353--369.

\bibitem[{Gayen(1950)}]{gayen1950distribution}
\textsc{Gayen, A.~K.} (1950).
\newblock The distribution of the variance ratio in random samples of any size
  drawn from non-normal universes.
\newblock \textit{Biometrika} \textbf{37}, 236--255.

\bibitem[{Geary(1927)}]{geary1927some}
\textsc{Geary, R.} (1927).
\newblock Some properties of correlation and regression in a limited universe.
\newblock \textit{Metron} \textbf{7}, 83--119.

\bibitem[{Geary(1947)}]{geary1947testing}
\textsc{Geary, R.~C.} (1947).
\newblock Testing for normality.
\newblock \textit{Biometrika} \textbf{34}, 209--242.

\bibitem[{Gutenbrunner \& Jureckova(1992)}]{gutenbrunner1992regression}
\textsc{Gutenbrunner, C.} \& \textsc{Jureckova, J.} (1992).
\newblock Regression quantile and regression rank score process in the linear
  model and derived statistics.
\newblock \textit{Annals of Statistics} \textbf{20}, 305--330.

\bibitem[{Gutenbrunner et~al.(1993)Gutenbrunner, Jureckova, Koenker \&
  Portnoy}]{gutenbrunner1993tests}
\textsc{Gutenbrunner, C.}, \textsc{Jureckova, J.}, \textsc{Koenker, R.} \&
  \textsc{Portnoy, S.} (1993).
\newblock Tests of linear hypotheses based on regression rank scores.
\newblock \textit{Journaltitle of Nonparametric Statistics} \textbf{2},
  307--331.

\bibitem[{H{\'a}jek(1962)}]{hajek1962asymptotically}
\textsc{H{\'a}jek, J.} (1962).
\newblock Asymptotically most powerful rank-order tests.
\newblock \textit{The Annals of Mathematical Statistics} , 1124--1147.

\bibitem[{Hajek \& Sidak(1967)}]{hajek1967theory}
\textsc{Hajek, J.} \& \textsc{Sidak, Z.} (1967).
\newblock Theory of rank tests. academia.

\bibitem[{Hall(1989)}]{hall1989unusual}
\textsc{Hall, P.} (1989).
\newblock Unusual properties of bootstrap confidence intervals in regression
  problems.
\newblock \textit{Probability Theory and Related Fields} \textbf{81}, 247--273.

\bibitem[{Hall(1992)}]{hall1992bootstrap}
\textsc{Hall, P.} (1992).
\newblock \textit{The bootstrap and {Edgeworth} expansion}.
\newblock Springer Science \& Business Media.

\bibitem[{Hartigan(1970)}]{hartigan1970exact}
\textsc{Hartigan, J.} (1970).
\newblock Exact confidence intervals in regression problems with independent
  symmetric errors.
\newblock \textit{The Annals of Mathematical Statistics} \textbf{41},
  1992--1998.

\bibitem[{Hastie et~al.(2019)Hastie, Montanari, Rosset \&
  Tibshirani}]{hastie2019surprises}
\textsc{Hastie, T.}, \textsc{Montanari, A.}, \textsc{Rosset, S.} \&
  \textsc{Tibshirani, R.~J.} (2019).
\newblock Surprises in high-dimensional ridgeless least squares interpolation.
\newblock \textit{arXiv preprint arXiv:1903.08560} .

\bibitem[{Hastings et~al.(1947)Hastings, Mosteller, Tukey \&
  Winsor}]{hastings1947low}
\textsc{Hastings, C.}, \textsc{Mosteller, F.}, \textsc{Tukey, J.~W.} \&
  \textsc{Winsor, C.~P.} (1947).
\newblock Low moments for small samples: a comparative study of order
  statistics.
\newblock \textit{The Annals of Mathematical Statistics} \textbf{18}, 413--426.

\bibitem[{Hettmansperger \& McKean(1978)}]{hettmansperger1978statistical}
\textsc{Hettmansperger, T.~P.} \& \textsc{McKean, J.~W.} (1978).
\newblock Statistical inference based on ranks.
\newblock \textit{Psychometrika} \textbf{43}, 69--79.

\bibitem[{Hinkley(1977)}]{hinkley1977jackknifing}
\textsc{Hinkley, D.~V.} (1977).
\newblock Jackknifing in unbalanced situations.
\newblock \textit{Technometrics} \textbf{19}, 285--292.

\bibitem[{Hodges \& Lehmann(1962)}]{hodges1962rank}
\textsc{Hodges, J.~L.} \& \textsc{Lehmann, E.~L.} (1962).
\newblock Rank methods for combination of independent experiments in analysis
  of variance.
\newblock \textit{The Annals of Mathematical Statistics} \textbf{33}, 482--497.

\bibitem[{Hodges \& Lehmann(1963)}]{hodges1963estimates}
\textsc{Hodges, J.~L.} \& \textsc{Lehmann, E.~L.} (1963).
\newblock Estimates of location based on rank tests.
\newblock \textit{The Annals of Mathematical Statistics} , 598--611.

\bibitem[{Hoeffding(1952)}]{hoeffding1952large}
\textsc{Hoeffding, W.} (1952).
\newblock The large-sample power of tests based on permutations of
  observations.
\newblock \textit{The Annals of Mathematical Statistics} \textbf{23}, 169--192.

\bibitem[{Hotelling \& Pabst(1936)}]{hotelling1936rank}
\textsc{Hotelling, H.} \& \textsc{Pabst, M.~R.} (1936).
\newblock Rank correlation and tests of significance involving no assumption of
  normality.
\newblock \textit{The Annals of Mathematical Statistics} \textbf{7}, 29--43.

\bibitem[{Hu \& Kalbfleisch(2000)}]{hu2000estimating}
\textsc{Hu, F.} \& \textsc{Kalbfleisch, J.~D.} (2000).
\newblock The estimating function bootstrap.
\newblock \textit{Canadian Journal of Statistics} \textbf{28}, 449--481.

\bibitem[{Hu \& Zidek(1995)}]{hu1995bootstrap}
\textsc{Hu, F.} \& \textsc{Zidek, J.~V.} (1995).
\newblock A bootstrap based on the estimating equations of the linear model.
\newblock \textit{Biometrika} \textbf{82}, 263--275.

\bibitem[{Huber(1964)}]{huber1964robust}
\textsc{Huber, P.~J.} (1964).
\newblock Robust estimation of a location parameter.
\newblock \textit{Anii. Math} .

\bibitem[{Huber(1972)}]{huber72}
\textsc{Huber, P.~J.} (1972).
\newblock The 1972 wald lecture robust statistics: A review.
\newblock \textit{The Annals of Mathematical Statistics} , 1041--1067.

\bibitem[{Huber(1973)}]{huber1973robust}
\textsc{Huber, P.~J.} (1973).
\newblock Robust regression: asymptotics, conjectures and {Monte Carlo}.
\newblock \textit{The Annals of Statistics} \textbf{1}, 799--821.

\bibitem[{Jaeckel(1972)}]{jaeckel1972estimating}
\textsc{Jaeckel, L.~A.} (1972).
\newblock Estimating regression coefficients by minimizing the dispersion of
  the residuals.
\newblock \textit{The Annals of Mathematical Statistics} , 1449--1458.

\bibitem[{Jensen(1979)}]{jensen1979linear}
\textsc{Jensen, D.} (1979).
\newblock Linear models without moments.
\newblock \textit{Biometrika} \textbf{66}, 611--617.

\bibitem[{Jin et~al.(2001)Jin, Ying \& Wei}]{jin2001simple}
\textsc{Jin, Z.}, \textsc{Ying, Z.} \& \textsc{Wei, L.} (2001).
\newblock A simple resampling method by perturbing the minimand.
\newblock \textit{Biometrika} \textbf{88}, 381--390.

\bibitem[{Johnstone \& Velleman(1985)}]{johnstone1985resistant}
\textsc{Johnstone, I.~M.} \& \textsc{Velleman, P.~F.} (1985).
\newblock The resistant line and related regression methods.
\newblock \textit{Journal of the American Statistical Association} \textbf{80},
  1041--1054.

\bibitem[{Jung(1956)}]{jung1956linear}
\textsc{Jung, J.} (1956).
\newblock On linear estimates defined by a continuous weight function.
\newblock \textit{Arkiv f{\"o}r matematik} \textbf{3}, 199--209.

\bibitem[{Jureckova(1969)}]{jureckova1969asymptotic}
\textsc{Jureckova, J.} (1969).
\newblock Asymptotic linearity of a rank statistic in regression parameter.
\newblock \textit{The Annals of Mathematical Statistics} \textbf{40},
  1889--1900.

\bibitem[{Jureckova(1971)}]{jureckova1971nonparametric}
\textsc{Jureckova, J.} (1971).
\newblock Nonparametric estimate of regression coefficients.
\newblock \textit{The Annals of Mathematical Statistics} , 1328--1338.

\bibitem[{Jureckova(1977)}]{jureckova1977asymptotic}
\textsc{Jureckova, J.} (1977).
\newblock Asymptotic relations of $ m $-estimates and $ r $-estimates in linear
  regression model.
\newblock \textit{The Annals of Statistics} \textbf{5}, 464--472.

\bibitem[{Jureckova(1983)}]{jureckova1983winsorized}
\textsc{Jureckova, J.} (1983).
\newblock Winsorized least squares estimator and its {M}-estimator counterpart.
\newblock \textit{Contributions to Statistics: Essays in Honour of Norman L.
  Johnson} , 237--245.

\bibitem[{Jureckova(1984)}]{jurevckova1984regression}
\textsc{Jureckova, J.} (1984).
\newblock Regression quantiles and trimmed least squares estimator under a
  general design.
\newblock \textit{Kybernetika} \textbf{20}, 345--357.

\bibitem[{Kendall \& Smith(1939)}]{kendall1939problem}
\textsc{Kendall, M.~G.} \& \textsc{Smith, B.~B.} (1939).
\newblock The problem of m rankings.
\newblock \textit{Annals of mathematical statistics} .

\bibitem[{Kennedy(1995)}]{kennedy1995randomization}
\textsc{Kennedy, F.~E.} (1995).
\newblock Randomization tests in econometrics.
\newblock \textit{Journal of Business \& Economic Statistics} \textbf{13},
  85--94.

\bibitem[{Kennedy \& Cade(1996)}]{kennedy1996randomization}
\textsc{Kennedy, P.~E.} \& \textsc{Cade, B.~S.} (1996).
\newblock Randomization tests for multiple regression.
\newblock \textit{Communications in Statistics-Simulation and Computation}
  \textbf{25}, 923--936.

\bibitem[{Kildea(1981)}]{kildea1981brown}
\textsc{Kildea, D.} (1981).
\newblock Brown-mood type median estimators for simple regression models.
\newblock \textit{The Annals of Statistics} , 438--442.

\bibitem[{Kline \& Santos(2012)}]{kline2012score}
\textsc{Kline, P.} \& \textsc{Santos, A.} (2012).
\newblock A score based approach to wild bootstrap inference.
\newblock \textit{Journal of Econometric Methods} \textbf{1}, 23--41.

\bibitem[{Koenker(1997)}]{koenker19978}
\textsc{Koenker, R.} (1997).
\newblock 8 rank tests for linear models.
\newblock \textit{Handbook of statistics} \textbf{15}, 175--199.

\bibitem[{Koenker \& Bassett(1978)}]{koenker1978regression}
\textsc{Koenker, R.} \& \textsc{Bassett, G.} (1978).
\newblock Regression quantiles.
\newblock \textit{Econometrica: journal of the Econometric Society} , 33--50.

\bibitem[{Koenker \& Portnoy(1987)}]{koenker1987estimation}
\textsc{Koenker, R.} \& \textsc{Portnoy, S.} (1987).
\newblock {L}-estimation for linear models.
\newblock \textit{Journal of the American statistical Association} \textbf{82},
  851--857.

\bibitem[{Koenker \& Zhao(1994)}]{koenker1994estimatton}
\textsc{Koenker, R.} \& \textsc{Zhao, Q.} (1994).
\newblock {L}-estimatton for linear heteroscedastic models.
\newblock \textit{Journaltitle of Nonparametric Statistics} \textbf{3},
  223--235.

\bibitem[{Koul(1969)}]{koul1969asymptotic}
\textsc{Koul, H.~L.} (1969).
\newblock Asymptotic behavior of wilcoxon type confidence regions in multiple
  linear regression.
\newblock \textit{The Annals of Mathematical Statistics} \textbf{40},
  1950--1979.

\bibitem[{Koul(1970)}]{koul1970class}
\textsc{Koul, H.~L.} (1970).
\newblock A class of adf tests for subhypothesis in the multiple linear
  regression.
\newblock \textit{The Annals of Mathematical Statistics} , 1273--1281.

\bibitem[{Kraft \& Van~Eeden(1972)}]{kraft1972linearized}
\textsc{Kraft, C.~H.} \& \textsc{Van~Eeden, C.} (1972).
\newblock Linearized rank estimates and signed-rank estimates for the general
  linear hypothesis.
\newblock \textit{The Annals of Mathematical Statistics} \textbf{43}, 42--57.

\bibitem[{Kruskal \& Wallis(1952)}]{kruskal1952use}
\textsc{Kruskal, W.~H.} \& \textsc{Wallis, W.~A.} (1952).
\newblock Use of ranks in one-criterion variance analysis.
\newblock \textit{Journal of the American statistical Association} \textbf{47},
  583--621.

\bibitem[{Lahiri(1992)}]{lahiri1992bootstrapping}
\textsc{Lahiri, S.~N.} (1992).
\newblock Bootstrapping {M}-estimators of a multiple linear regression
  parameter.
\newblock \textit{The Annals of Statistics} , 1548--1570.

\bibitem[{Lancaster \& Quade(1985)}]{lancaster1985nonparametric}
\textsc{Lancaster, J.} \& \textsc{Quade, D.} (1985).
\newblock A nonparametric test for linear regression based on combining
  {Kendall}'s tau with the sign test.
\newblock \textit{Journal of the American Statistical Association} \textbf{80},
  393--397.

\bibitem[{LeCam(1953)}]{lecam1953some}
\textsc{LeCam, L.} (1953).
\newblock On some asymptotic properties of maximum likelihood estimates and
  related bayes estimates.
\newblock \textit{Univ. California Pub. Statist.} \textbf{1}, 277--330.

\bibitem[{Lehmann \& Romano(2006)}]{lehmann2006testing}
\textsc{Lehmann, E.~L.} \& \textsc{Romano, J.~P.} (2006).
\newblock \textit{Testing statistical hypotheses}.
\newblock Springer Science \& Business Media.

\bibitem[{Lei et~al.(2018)Lei, Bickel \& El~Karoui}]{lei2018asymptotics}
\textsc{Lei, L.}, \textsc{Bickel, P.~J.} \& \textsc{El~Karoui, N.} (2018).
\newblock Asymptotics for high dimensional regression {M}-estimates: fixed
  design results.
\newblock \textit{Probability Theory and Related Fields} \textbf{172},
  983--1079.

\bibitem[{Liu(1988)}]{liu1988bootstrap}
\textsc{Liu, R.~Y.} (1988).
\newblock Bootstrap procedures under some non-iid models.
\newblock \textit{The Annals of Statistics} \textbf{16}, 1696--1708.

\bibitem[{Liu \& Singh(1992)}]{liu1992efficiency}
\textsc{Liu, R.~Y.} \& \textsc{Singh, K.} (1992).
\newblock Efficiency and robustness in resampling.
\newblock \textit{The Annals of Statistics} \textbf{20}, 370--384.

\bibitem[{Lloyd(1952)}]{lloyd1952least}
\textsc{Lloyd, E.} (1952).
\newblock Least-squares estimation of location and scale parameters using order
  statistics.
\newblock \textit{Biometrika} \textbf{39}, 88--95.

\bibitem[{Mammen(1989)}]{mammen89}
\textsc{Mammen, E.} (1989).
\newblock Asymptotics with increasing dimension for robust regression with
  applications to the bootstrap.
\newblock \textit{The Annals of Statistics} , 382--400.

\bibitem[{Mammen(1993)}]{mammen1993bootstrap}
\textsc{Mammen, E.} (1993).
\newblock Bootstrap and wild bootstrap for high dimensional linear models.
\newblock \textit{The annals of statistics} \textbf{21}, 255--285.

\bibitem[{Manly(1991)}]{manly1991randomization}
\textsc{Manly, B.~F.} (1991).
\newblock \textit{Randomization, bootstrap and Monte Carlo methods in biology}.
\newblock Chapman and Hall/CRC.

\bibitem[{Mann \& Wald(1943)}]{mann1943stochastic}
\textsc{Mann, H.~B.} \& \textsc{Wald, A.} (1943).
\newblock On stochastic limit and order relationships.
\newblock \textit{The Annals of Mathematical Statistics} \textbf{14}, 217--226.

\bibitem[{Markatou \& Ronchetti(1997)}]{markatou19973}
\textsc{Markatou, M.} \& \textsc{Ronchetti, E.} (1997).
\newblock 3 robust inference: The approach based on influence functions.
\newblock \textit{Handbook of statistics} \textbf{15}, 49--75.

\bibitem[{Maxwell(1860)}]{maxwell1860v}
\textsc{Maxwell, J.~C.} (1860).
\newblock {V}. illustrations of the dynamical theory of gases. part {I}. on the
  motions and collisions of perfectly elastic spheres.
\newblock \textit{The London, Edinburgh, and Dublin Philosophical Magazine and
  Journal of Science} \textbf{19}, 19--32.

\bibitem[{McKean \& Hettmansperger(1976)}]{mckean1976tests}
\textsc{McKean, J.~W.} \& \textsc{Hettmansperger, T.~P.} (1976).
\newblock Tests of hypotheses based on ranks in the general linear model.
\newblock \textit{Communications in statistics-theory and methods} \textbf{5},
  693--709.

\bibitem[{McKean \& Hettmansperger(1978)}]{mckean1978robust}
\textsc{McKean, J.~W.} \& \textsc{Hettmansperger, T.~P.} (1978).
\newblock A robust analysis of the general linear model based on one step
  r-estimates.
\newblock \textit{Biometrika} \textbf{65}, 571--579.

\bibitem[{Mehra \& Sen(1969)}]{mehra1969class}
\textsc{Mehra, K.} \& \textsc{Sen, P.} (1969).
\newblock On a class of conditionally distribution-free tests for interactions
  in factorial experiments.
\newblock \textit{The Annals of Mathematical Statistics} \textbf{40}, 658--664.

\bibitem[{Meinshausen(2015)}]{meinshausen2015group}
\textsc{Meinshausen, N.} (2015).
\newblock Group bound: confidence intervals for groups of variables in sparse
  high dimensional regression without assumptions on the design.
\newblock \textit{Journal of the Royal Statistical Society: Series B
  (Statistical Methodology)} \textbf{77}, 923--945.

\bibitem[{Michalewicz(2013)}]{michalewicz2013genetic}
\textsc{Michalewicz, Z.} (2013).
\newblock \textit{Genetic algorithms+ data structures= evolution programs}.
\newblock Springer Science \& Business Media.

\bibitem[{Miller(1974)}]{miller1974unbalanced}
\textsc{Miller, R.~G.} (1974).
\newblock An unbalanced jackknife.
\newblock \textit{The Annals of Statistics} , 880--891.

\bibitem[{Mosteller(1946)}]{mosteller1946some}
\textsc{Mosteller, F.} (1946).
\newblock On some useful" inefficient" statistics.
\newblock \textit{The Annals of Mathematical Statistics} \textbf{17}, 377--408.

\bibitem[{Navidi(1989)}]{navidi1989edgeworth}
\textsc{Navidi, W.} (1989).
\newblock Edgeworth expansions for bootstrapping regression models.
\newblock \textit{The Annals of Statistics} \textbf{17}, 1472--1478.

\bibitem[{Neyman(1959)}]{neyman1959optimal}
\textsc{Neyman, J.} (1959).
\newblock Optimal asymptotic tests of composite hypotheses.
\newblock \textit{Probability and statsitics} , 213--234.

\bibitem[{Neyman(1923)}]{neyman1923application}
\textsc{Neyman, J.~S.} (1923).
\newblock On the application of probability theory to agricultural experiments.
  essay on principles. section 9. (translated and edited by dm dabrowska and tp
  speed, statistical science (1990), 5, 465-480).
\newblock \textit{Annals of Agricultural Sciences} \textbf{10}, 1--51.

\bibitem[{Oja(1987)}]{oja1987permutation}
\textsc{Oja, H.} (1987).
\newblock On permutation tests in multiple regression and analysis of
  covariance problems.
\newblock \textit{Australian Journal of Statistics} \textbf{29}, 91--100.

\bibitem[{Parzen et~al.(1994)Parzen, Wei \& Ying}]{parzen1994resampling}
\textsc{Parzen, M.}, \textsc{Wei, L.} \& \textsc{Ying, Z.} (1994).
\newblock A resampling method based on pivotal estimating functions.
\newblock \textit{Biometrika} \textbf{81}, 341--350.

\bibitem[{Pearson \& Please(1975)}]{pearson1975relation}
\textsc{Pearson, E.} \& \textsc{Please, N.} (1975).
\newblock Relation between the shape of population distribution and the
  robustness of four simple test statistics.
\newblock \textit{Biometrika} \textbf{62}, 223--241.

\bibitem[{Pearson(1929)}]{pearson1929some}
\textsc{Pearson, E.~S.} (1929).
\newblock Some notes on sampling tests with two variables.
\newblock \textit{Biometrika} , 337--360.

\bibitem[{Pearson(1931)}]{pearson1931analysis}
\textsc{Pearson, E.~S.} (1931).
\newblock The analysis of variance in cases of non-normal variation.
\newblock \textit{Biometrika} , 114--133.

\bibitem[{Pearson \& Adyanth{\=a}ya(1929)}]{pearson1929distribution}
\textsc{Pearson, E.~S.} \& \textsc{Adyanth{\=a}ya, N.} (1929).
\newblock The distribution of frequency constants in small samples from
  non-normal symmetrical and skew populations.
\newblock \textit{Biometrika} \textbf{21}, 259--286.

\bibitem[{Peddada \& Patwardhan(1992)}]{peddada1992jackknife}
\textsc{Peddada, S.~D.} \& \textsc{Patwardhan, G.} (1992).
\newblock Jackknife variance estimators in linear models.
\newblock \textit{Biometrika} \textbf{79}, 654--657.

\bibitem[{Pinelis(1994)}]{pinelis1994extremal}
\textsc{Pinelis, I.} (1994).
\newblock Extremal probabilistic problems and hotelling's $t^{2}$ test under a
  symmetry condition.
\newblock \textit{The Annals of Statistics} \textbf{22}, 357--368.

\bibitem[{Pitman(1937{\natexlab{a}})}]{pitman1937significance1}
\textsc{Pitman, E. J.~G.} (1937{\natexlab{a}}).
\newblock Significance tests which may be applied to samples from any
  populations.
\newblock \textit{Supplement to the Journal of the Royal Statistical Society}
  \textbf{4}, 119--130.

\bibitem[{Pitman(1937{\natexlab{b}})}]{pitman1937significance}
\textsc{Pitman, E. J.~G.} (1937{\natexlab{b}}).
\newblock Significance tests which may be applied to samples from any
  populations. {II}. the correlation coefficient test.
\newblock \textit{Supplement to the Journal of the Royal Statistical Society}
  \textbf{4}, 225--232.

\bibitem[{Pitman(1938)}]{pitman1938significance}
\textsc{Pitman, E. J.~G.} (1938).
\newblock Significance tests which may be applied to samples from any
  populations: {III}. the analysis of variance test.
\newblock \textit{Biometrika} \textbf{29}, 322--335.

\bibitem[{Pollard(1991)}]{pollard1991asymptotics}
\textsc{Pollard, D.} (1991).
\newblock Asymptotics for least absolute deviation regression estimators.
\newblock \textit{Econometric Theory} \textbf{7}, 186--199.

\bibitem[{Portnoy(1985)}]{portnoy85}
\textsc{Portnoy, S.} (1985).
\newblock Asymptotic behavior of {M} estimators of $p$ regression parameters
  when $p^{2} / n$ is large; {II}. {Normal} approximation.
\newblock \textit{The Annals of Statistics} , 1403--1417.

\bibitem[{Portnoy \& Koenker(1989)}]{portnoy1989adaptive}
\textsc{Portnoy, S.} \& \textsc{Koenker, R.} (1989).
\newblock Adaptive $ l $-estimation for linear models.
\newblock \textit{The Annals of Statistics} \textbf{17}, 362--381.

\bibitem[{Puri \& Sen(1973)}]{puri1973note}
\textsc{Puri, M.~L.} \& \textsc{Sen, P.} (1973).
\newblock A note on asymptotically distribution free tests for subhypotheses in
  multiple linear regression.
\newblock \textit{The Annals of Statistics} \textbf{1}, 553--556.

\bibitem[{Puri \& Sen(1966)}]{puri1966class}
\textsc{Puri, M.~L.} \& \textsc{Sen, P.~K.} (1966).
\newblock On a class of multivariate multisample rank-order tests.
\newblock \textit{Sankhy{\=a}: The Indian Journal of Statistics, Series A} ,
  353--376.

\bibitem[{Quade(1979)}]{quade1979regression}
\textsc{Quade, D.} (1979).
\newblock Regression analysis based on the signs of the residuals.
\newblock \textit{Journal of the American Statistical Association} \textbf{74},
  411--417.

\bibitem[{Quenouille(1949)}]{quenouille1949problems}
\textsc{Quenouille, M.~H.} (1949).
\newblock Problems in plane sampling.
\newblock \textit{The Annals of Mathematical Statistics} \textbf{20}, 355--375.

\bibitem[{Quenouille(1956)}]{quenouille1956notes}
\textsc{Quenouille, M.~H.} (1956).
\newblock Notes on bias in estimation.
\newblock \textit{Biometrika} \textbf{43}, 353--360.

\bibitem[{Qumsiyeh(1994)}]{qumsiyeh1994bootstrapping}
\textsc{Qumsiyeh, M.~B.} (1994).
\newblock Bootstrapping and empirical {Edgeworth} expansions in multiple linear
  regression models.
\newblock \textit{Communications in Statistics-Theory and Methods} \textbf{23},
  3227--3239.

\bibitem[{Rao \& Zhao(1992)}]{rao1992approximation}
\textsc{Rao, C.~R.} \& \textsc{Zhao, L.} (1992).
\newblock Approximation to the distribution of {M}-estimates in linear models
  by randomly weighted bootstrap.
\newblock \textit{Sankhy{\=a}: The Indian Journal of Statistics, Series A} ,
  323--331.

\bibitem[{Relles(1968)}]{relles68}
\textsc{Relles, D.~A.} (1968).
\newblock Robust regression by modified least-squares.
\newblock Tech. rep., DTIC Document.

\bibitem[{Romano(1989)}]{romano1989bootstrap}
\textsc{Romano, J.~P.} (1989).
\newblock Bootstrap and randomization tests of some nonparametric hypotheses.
\newblock \textit{The Annals of Statistics} , 141--159.

\bibitem[{Romano(1990)}]{romano1990behavior}
\textsc{Romano, J.~P.} (1990).
\newblock On the behavior of randomization tests without a group invariance
  assumption.
\newblock \textit{Journal of the American Statistical Association} \textbf{85},
  686--692.

\bibitem[{Rousseeuw(1984)}]{rousseeuw1984least}
\textsc{Rousseeuw, P.~J.} (1984).
\newblock Least median of squares regression.
\newblock \textit{Journal of the American statistical association} \textbf{79},
  871--880.

\bibitem[{Rousseeuw \& Hubert(1999)}]{rousseeuw1999regression}
\textsc{Rousseeuw, P.~J.} \& \textsc{Hubert, M.} (1999).
\newblock Regression depth.
\newblock \textit{Journal of the American Statistical Association} \textbf{94},
  388--402.

\bibitem[{Rubin(1974)}]{rubin1974estimating}
\textsc{Rubin, D.~B.} (1974).
\newblock Estimating causal effects of treatments in randomized and
  nonrandomized studies.
\newblock \textit{Journal of educational Psychology} \textbf{66}, 688.

\bibitem[{Rubin(1981)}]{rubin1981bayesian}
\textsc{Rubin, D.~B.} (1981).
\newblock The bayesian bootstrap.
\newblock \textit{The annals of statistics} , 130--134.

\bibitem[{Ruppert \& Carroll(1980)}]{ruppert1980trimmed}
\textsc{Ruppert, D.} \& \textsc{Carroll, R.~J.} (1980).
\newblock Trimmed least squares estimation in the linear model.
\newblock \textit{Journal of the American Statistical Association} \textbf{75},
  828--838.

\bibitem[{S{\"a}rndal et~al.(1978)S{\"a}rndal, Thomsen, Hoem, Lindley,
  Barndorff-Nielsen \& Dalenius}]{sarndal1978design}
\textsc{S{\"a}rndal, C.-E.}, \textsc{Thomsen, I.}, \textsc{Hoem, J.~M.},
  \textsc{Lindley, D.}, \textsc{Barndorff-Nielsen, O.} \& \textsc{Dalenius, T.}
  (1978).
\newblock Design-based and model-based inference in survey sampling [with
  discussion and reply].
\newblock \textit{Scandinavian Journal of Statistics} , 27--52.

\bibitem[{Schrader \& Hettmansperger(1980)}]{schrader1980robust}
\textsc{Schrader, R.~M.} \& \textsc{Hettmansperger, T.~P.} (1980).
\newblock Robust analysis of variance based upon a likelihood ratio criterion.
\newblock \textit{Biometrika} \textbf{67}, 93--101.

\bibitem[{Sen(1968{\natexlab{a}})}]{sen1968estimates}
\textsc{Sen, P.~K.} (1968{\natexlab{a}}).
\newblock Estimates of the regression coefficient based on {Kendall}'s tau.
\newblock \textit{Journal of the American statistical association} \textbf{63},
  1379--1389.

\bibitem[{Sen(1968{\natexlab{b}})}]{sen1968class}
\textsc{Sen, P.~K.} (1968{\natexlab{b}}).
\newblock On a class of aligned rank order tests in two-way layouts.
\newblock \textit{The Annals of Mathematical Statistics} \textbf{39},
  1115--1124.

\bibitem[{Sen(1969)}]{sen1969class}
\textsc{Sen, P.~K.} (1969).
\newblock On a class of rank order tests for the parallelism of several
  regression lines.
\newblock \textit{The Annals of Mathematical Statistics} , 1668--1683.

\bibitem[{Sen(1982)}]{sen1982m}
\textsc{Sen, P.~K.} (1982).
\newblock On {M} test in linear models.
\newblock \textit{Biometrika} , 245--248.

\bibitem[{Shao(1988)}]{shao1988resampling}
\textsc{Shao, J.} (1988).
\newblock On resampling methods for variance and bias estimation in linear
  models.
\newblock \textit{The Annals of Statistics} , 986--1008.

\bibitem[{Shao(1989)}]{shao1989jackknifing}
\textsc{Shao, J.} (1989).
\newblock Jackknifing weighted least squares estimators.
\newblock \textit{Journal of the Royal Statistical Society: Series B
  (Methodological)} \textbf{51}, 139--156.

\bibitem[{Shao \& Wu(1987)}]{shao1987heteroscedasticity}
\textsc{Shao, J.} \& \textsc{Wu, C.} (1987).
\newblock Heteroscedasticity-robustness of jackknife variance estimators in
  linear models.
\newblock \textit{The Annals of Statistics} , 1563--1579.

\bibitem[{Shorack(1982)}]{shorack1982bootstrapping}
\textsc{Shorack, G.~R.} (1982).
\newblock Bootstrapping robust regression.
\newblock \textit{Communications in Statistics-Theory and Methods} \textbf{11},
  961--972.

\bibitem[{Siegel(1982)}]{siegel1982robust}
\textsc{Siegel, A.~F.} (1982).
\newblock Robust regression using repeated medians.
\newblock \textit{Biometrika} \textbf{69}, 242--244.

\bibitem[{Sievers(1978)}]{sievers1978weighted}
\textsc{Sievers, G.~L.} (1978).
\newblock Weighted rank statistics for simple linear regression.
\newblock \textit{Journal of the American Statistical Association} \textbf{73},
  628--631.

\bibitem[{Sievers(1983)}]{sievers1983weighted}
\textsc{Sievers, G.~L.} (1983).
\newblock A weighted dispersion function for estimation in linear models.
\newblock \textit{Communications in Statistics-Theory and Methods} \textbf{12},
  1161--1179.

\bibitem[{Silvapulle(1992)}]{silvapulle1992robust}
\textsc{Silvapulle, M.~J.} (1992).
\newblock Robust tests of inequality constraints and one-sided hypotheses in
  the linear model.
\newblock \textit{Biometrika} \textbf{79}, 621--630.

\bibitem[{Singer \& Sen(1985)}]{singer1985m}
\textsc{Singer, J.~M.} \& \textsc{Sen, P.~K.} (1985).
\newblock {M}-methods in multivariate linear models.
\newblock \textit{Journal of multivariate Analysis} \textbf{17}, 168--184.

\bibitem[{Snedecor(1934)}]{snedecor1934calculation}
\textsc{Snedecor, G.~W.} (1934).
\newblock \textit{Calculation and interpretation of analysis of varianceand
  covariance}.
\newblock Collegiate Press, Inc,; Ames Iowa.

\bibitem[{Srivastava(1972)}]{srivastava1972asymptotically}
\textsc{Srivastava, M.} (1972).
\newblock Asymptotically most powerful rank tests for regression parameters in
  {MANOVA}.
\newblock \textit{Annals of the Institute of Statistical Mathematics}
  \textbf{24}, 285--297.

\bibitem[{Student(1908{\natexlab{a}})}]{student1908probable2}
\textsc{Student} (1908{\natexlab{a}}).
\newblock Probable error of a correlation coefficient.
\newblock \textit{Biometrika} , 302--310.

\bibitem[{Student(1908{\natexlab{b}})}]{student1908probable}
\textsc{Student} (1908{\natexlab{b}}).
\newblock The probable error of a mean.
\newblock \textit{Biometrika} , 1--25.

\bibitem[{Ter~Braak(1992)}]{ter1992permutation}
\textsc{Ter~Braak, C.~J.} (1992).
\newblock Permutation versus bootstrap significance tests in multiple
  regression and {ANOVA}.
\newblock In \textit{Bootstrapping and related techniques}. Springer, pp.
  79--85.

\bibitem[{Theil(1950{\natexlab{a}})}]{theil1950rank}
\textsc{Theil, H.} (1950{\natexlab{a}}).
\newblock A rank-invariant method of linear and polynomial regression analysis,
  {I}.
\newblock In \textit{Nederl. Akad. Wetensch. Proc}, vol.~53.

\bibitem[{Theil(1950{\natexlab{b}})}]{theil1950rank2}
\textsc{Theil, H.} (1950{\natexlab{b}}).
\newblock A rank-invariant method of linear and polynomial regression analysis,
  {II}.
\newblock In \textit{Nederl. Akad. Wetensch. Proc}, vol.~53.

\bibitem[{Theil(1950{\natexlab{c}})}]{theil1950rank3}
\textsc{Theil, H.} (1950{\natexlab{c}}).
\newblock A rank-invariant method of linear and polynomial regression analysis,
  {III}.
\newblock In \textit{Nederl. Akad. Wetensch. Proc}, vol.~53.

\bibitem[{Toulis(2019)}]{toulis2019life}
\textsc{Toulis, P.} (2019).
\newblock Life after bootstrap: Residual randomization inference in regression
  models.
\newblock \textit{arXiv preprint arXiv:1908.04218} .

\bibitem[{Tukey(1958)}]{tukey1958bias}
\textsc{Tukey, J.} (1958).
\newblock Bias and confidence in not quite large samples.
\newblock \textit{Ann. Math. Statist.} \textbf{29}, 614.

\bibitem[{Tukey(1960)}]{tukey1960survey}
\textsc{Tukey, J.~W.} (1960).
\newblock A survey of sampling from contaminated distributions.
\newblock \textit{Contributions to probability and statistics} , 448--485.

\bibitem[{Tukey(1962)}]{tukey1962future}
\textsc{Tukey, J.~W.} (1962).
\newblock The future of data analysis.
\newblock \textit{The annals of mathematical statistics} \textbf{33}, 1--67.

\bibitem[{Van~Aelst et~al.(2002)Van~Aelst, Rousseeuw, Hubert \&
  Struyf}]{van2002deepest}
\textsc{Van~Aelst, S.}, \textsc{Rousseeuw, P.~J.}, \textsc{Hubert, M.} \&
  \textsc{Struyf, A.} (2002).
\newblock The deepest regression method.
\newblock \textit{Journal of Multivariate Analysis} \textbf{81}, 138--166.

\bibitem[{van Eeden(1972)}]{van1972analogue}
\textsc{van Eeden, C.} (1972).
\newblock An analogue, for signed rank statistics, of jureckova's asymptotic
  linearity theorem for rank statistics.
\newblock \textit{The Annals of Mathematical Statistics} \textbf{43}, 791--802.

\bibitem[{Wald(1949)}]{wald1949note}
\textsc{Wald, A.} (1949).
\newblock Note on the consistency of the maximum likelihood estimate.
\newblock \textit{The Annals of Mathematical Statistics} \textbf{20}, 595--601.

\bibitem[{Wallace(1958)}]{wallace1958asymptotic}
\textsc{Wallace, D.~L.} (1958).
\newblock Asymptotic approximations to distributions.
\newblock \textit{The Annals of Mathematical Statistics} \textbf{29}, 635--654.

\bibitem[{Wang \& Akritas(2004)}]{wang2004rank}
\textsc{Wang, H.} \& \textsc{Akritas, M.~G.} (2004).
\newblock Rank tests for {ANOVA} with large number of factor levels.
\newblock \textit{Journal of Nonparametric Statistics} \textbf{16}, 563--589.

\bibitem[{Welch(1937)}]{welch1937z}
\textsc{Welch, B.~L.} (1937).
\newblock On the z-test in randomized blocks and latin squares.
\newblock \textit{Biometrika} \textbf{29}, 21--52.

\bibitem[{Welch(1990)}]{welch1990construction}
\textsc{Welch, W.~J.} (1990).
\newblock Construction of permutation tests.
\newblock \textit{Journal of the American Statistical Association} \textbf{85},
  693--698.

\bibitem[{Welsh(1987)}]{welsh1987one}
\textsc{Welsh, A.} (1987).
\newblock One-step {L}-estimators for the linear model.
\newblock \textit{The Annals of Statistics} \textbf{15}, 626--641.

\bibitem[{Welsh(1989)}]{welsh1989m}
\textsc{Welsh, A.} (1989).
\newblock On {M}-processes and {M}-estimation.
\newblock \textit{The Annals of Statistics} \textbf{17}, 337--361.

\bibitem[{Welsh(1991)}]{welsh1991asymptotically}
\textsc{Welsh, A.} (1991).
\newblock Asymptotically efficient adaptive {L}-estimators in linear models.
\newblock \textit{Statistica Sinica} , 203--228.

\bibitem[{Wilks(1938)}]{wilks1938large}
\textsc{Wilks, S.~S.} (1938).
\newblock The large-sample distribution of the likelihood ratio for testing
  composite hypotheses.
\newblock \textit{The Annals of Mathematical Statistics} \textbf{9}, 60--62.

\bibitem[{Wu(1990)}]{wu1990asymptotic}
\textsc{Wu, C.~F.} (1990).
\newblock On the asymptotic properties of the jackknife histogram.
\newblock \textit{The Annals of Statistics} , 1438--1452.

\bibitem[{Wu(1986)}]{wu1986jackknife}
\textsc{Wu, C.-F.~J.} (1986).
\newblock Jackknife, bootstrap and other resampling methods in regression
  analysis.
\newblock \textit{the Annals of Statistics} \textbf{14}, 1261--1295.

\bibitem[{Yohai(1972)}]{yohai72}
\textsc{Yohai, V.~J.} (1972).
\newblock \textit{Robust {M} estimates for the general linear model}.
\newblock Universidad Nacional de la Plata. Departamento de Matematica.

\bibitem[{Yohai \& Maronna(1979)}]{yohai1979asymptotic}
\textsc{Yohai, V.~J.} \& \textsc{Maronna, R.~A.} (1979).
\newblock Asymptotic behavior of {M}-estimators for the linear model.
\newblock \textit{The Annals of Statistics} , 258--268.

\bibitem[{Zellner(1976)}]{zellner1976bayesian}
\textsc{Zellner, A.} (1976).
\newblock Bayesian and non-bayesian analysis of the regression model with
  multivariate student-t error terms.
\newblock \textit{Journal of the American Statistical Association} \textbf{71},
  400--405.

\bibitem[{Zhong \& Chen(2011)}]{zhong2011tests}
\textsc{Zhong, P.-S.} \& \textsc{Chen, S.~X.} (2011).
\newblock Tests for high-dimensional regression coefficients with factorial
  designs.
\newblock \textit{Journal of the American Statistical Association}
  \textbf{106}, 260--274.

\end{thebibliography}
\bibliographystyle{biometrika}

\end{document}